\documentclass[ejsv2,noshowframe,preprint]{imsart}

\RequirePackage{amsthm,amsmath,amsfonts,amssymb}
\RequirePackage[numbers]{natbib}
\RequirePackage{graphicx}
\RequirePackage{tikz}
\usetikzlibrary{arrows, positioning,shapes.geometric,calc,shapes.arrows}
\RequirePackage{xcolor}
\RequirePackage{float}
\RequirePackage{pdflscape}
\RequirePackage[colorlinks,citecolor=blue,linkcolor=blue,urlcolor=blue]{hyperref}

\startlocaldefs

\theoremstyle{plain}
\newtheorem{theorem}{Theorem}[section]      
\newtheorem{proposition}{Proposition}[section]
\newtheorem{corollary}{Corollary}[section]  
\newtheorem{lemma}{Lemma}[section]          
\theoremstyle{definition}
\newtheorem{definition}{Definition}[section]
\newtheorem{assumption}{Assumption}[section]
\newcommand{\indep}{\perp\!\!\!\perp}
\endlocaldefs

\AddToHook{begindocument/end}{\hypersetup{pdfsubject={Preprint}}}

\begin{document}

\begin{frontmatter}

\title{Sensitivity Bounds and Conservative Inference for Contagion under Latent Homophily}
\runtitle{Sensitivity Bounds and Inference for Contagion}

\begin{aug}
\author[A]{\fnms{Duncan~A.}~\snm{Clark}\ead[label=e1]{dac6@williams.edu}}
\address[A]{Williams College\printead[presep={,\ }]{e1}}
\runauthor{D. A. Clark}
\end{aug}

\begin{abstract}
Whether connected units are similar because influence spreads across ties or
because similar units form ties is a long-standing problem.
We study a fixed network with two waves of nodal outcomes. Rather than
positing a parametric model for network formation, we consider identification
of contagion under latent homophily as a selection-bias problem.
We define a focal-component controlled direct effect
(CDE) that holds a tie present, changes one alter's lagged outcome, and permits
dependence on the remaining fixed network background. We show that the gap
between the CDE and the observed connected-dyad risk ratio is governed by how
strongly a latent variable shifts the composition of connected dyads. Under
stated mean-exchangeability and sensitivity restrictions, we develop
interpretable nonparametric bounds for the primary naturally connected target.
For inference conditional on the observed network and baseline outcomes,
heterogeneous dyad-specific means can invalidate the usual
inclusion--exclusion variance estimator; we establish asymptotically
conservative one-sided limits under conditional actor dissociation and stated
regularity conditions. A simulation study characterizes the bounds' error
control and power. We apply the framework to the 2008 U.S. House votes on the
Troubled Asset Relief Program. The connected contrast among legislators
suggests vote contagion and survives mild latent homophily and outcome
susceptibility.
\end{abstract}

\begin{keyword}[class=MSC]
\kwd{62D20}
\kwd{91D30}
\kwd{62P25}
\kwd{62G05}
\end{keyword}

\begin{keyword}
\kwd{contagion}
\kwd{homophily}
\kwd{sensitivity analysis}
\kwd{social networks}
\kwd{selection bias}
\kwd{risk ratio}
\end{keyword}

\end{frontmatter}

\section{Introduction}

Whether connected units are similar because characteristics spread through
ties or because similar units form ties, the contagion-versus-homophily
problem, has long been studied across fields
\cite{Leenders1995,christakis2007,review_pastor_satorras_2015,Franken2023,Cai2025_finance}.
In the social sciences these mechanisms are often called influence and
selection. In the presence of latent homophily they are generically confounded
\cite{shalizi_thomas}; we develop a sensitivity-analysis approach to assessing
contagion under such latent homophily.

We develop sensitivity bounds and one-sided inference for a focal alter-state
effect among naturally connected dyads observed in a single fixed network.
Building on established selection-bias inequalities
\cite{ding2016sensitivity,smith2019bounding}, we specify the intervention,
target population, and mean-exchangeability conditions needed to interpret the
bounds causally. For inference conditional on the observed design, we show how
heterogeneous dyad-specific means can make the usual inclusion--exclusion
variance estimator anti-conservative. Under conditional actor dissociation and
stated regularity conditions, we establish asymptotically conservative
actor-sum inference and corresponding causal lower confidence limits.

Our motivating application is the 2008 U.S.\ House votes authorizing the
controversial \cite{ramirez2012bailout} Troubled Asset Relief Program (TARP).
The bill failed on September 29 and passed in amended form four days later,
with fifty-eight members switching from Nay to Yea
(Section~\ref{sec:tarp-application}). Existing roll-call and network studies
explain voting and tie-based similarity
\cite{mian2010political,ramirez2012bailout,tahoun2019wealth,Fowler2006Connecting,Kirkland2011,NBERw16437}, but not whether votes spread through relationships among legislators. Such relationships can reflect ideology, sub-party allegiances, and constituency pressure, making the TARP vote an example of contagion estimates being confounded by possible latent homophily.

Sensitivity analysis replaces an assumption of no unmeasured confounding with
quantitative restrictions and asks whether the substantive conclusion
survives \cite{Cornfield1959,Bross1966,rosenbaum1987sensitivity,VanderWeeleDing2017,Cinelli2020}.
Most directly, \cite{vanderweele2011contagion} asked how strongly latent
homophily or environmental confounding would have to bias apparent social
contagion effects to explain them away. We adopt that strategy for a specific
dyadic selection-bias problem.

For observed-risk inference, we use a dependency-graph central limit theorem
\cite{janson_1988} in the form developed for dyadic data by
\cite{tabord_meehan_2019}. The actor-sharing analysis allows dependence
between dyads sharing an actor; an outgoing-ego cluster bootstrap provides
an alternative under a stronger independence assumption.

Our primary estimand is the lagged focal-component controlled direct effect
(CDE) among naturally connected dyads, not a contemporaneous, propagated, or
network-wide spillover effect. Under target-specific mean bridges and stated
sensitivity restrictions, it is bounded below by the observed connected-dyad
risk ratio divided by interpretable bias factors. Positive contagion is
identified when the corresponding one-sided lower confidence limit exceeds
one. The forced-contact benchmark uses the same observed-risk inference but
also requires transport from naturally connected to all eligible dyads.

A related nonparametric treatment is given by \cite{vansteeg_2013}. Starting
from the Shalizi--Thomas graph and conditioning on a known directed edge, they
use linear programming and Handelman's representation to lower-bound the
effect of an alter's lagged binary state uniformly over hidden attributes.
Their object is a dyad's action sequence, typically over several time steps,
rather than a two-wave controlled-contact risk ratio; we do not pursue those
sequence-level certificates.

Other approaches use latent-space or stochastic-block assumptions
\citep{McFowland2023}, detailed longitudinal political-network data
\citep{Uppala_Desmarais_2023}, dynamic selection-and-influence models
\citep{steglich2010}, or dynamic matched sampling \citep{Aral2009}. We instead
consider latent homophily, a fixed network, and the two outcome waves minimally
needed for a lagged contagion contrast. Work on causal inference under
interference encodes spillover through exposure mappings
\citep{hudgens2008,ogburn2014,aronow2017,savje2021}, generally taking the
network as known or fixed. Our focus is the distinct selection problem:
unobserved $U$ can change who appears among connected dyads, so the observed
connected contrast need not identify the controlled effect.

The simulations illustrate a robustness--power trade-off: sensitivity
adjustment limits false-positive contagion conclusions but loses power as the
permitted confounding strengthens. In the TARP application, the risk ratio survives mild but not moderate or strong restrictions on latent homophily.

Section~\ref{sec:notation-setup} introduces the fixed-network setting, and
Section~\ref{sec:rr-target-overview} defines the risk-ratio targets.
Section~\ref{sec:rr-cde} develops the bounds, and
Section~\ref{sec:cde-estimation} develops estimation and one-sided inference.
Section~\ref{sec:simulation-study}
evaluates the method in simulations before
Section~\ref{sec:tarp-application} presents the TARP application.

\section{Notation and Setup}\label{sec:notation-setup}

We consider networks with a fixed number of nodes, $n$. Random nodal outcomes are
denoted $Y$, subscripts index the node set and superscripts index time, i.e.
$Y_{i}^{t}$ denotes the outcome on node $i$ at time $t$. Observed nodal
covariates are denoted $X_{i}$ and unobserved fixed nodal covariates are
denoted $U_{i}$. Binary edges $A_{ij}$ yield realizations of network
adjacency matrices $A \in \lbrace 0,1 \rbrace^{n \times n}$. Throughout,
pre-exposure measured information is absorbed into a generic conditioning
set $c$. For dyad-level arguments, we write $U_{i,j}=(U_i,U_j)$ for the latent
homophily tuple associated with dyad $(i,j)$. For dyad $(i,j)$ we refer to $i$ as the ego and $j$ as the alter.

\subsection{Fixed networks with two time steps}\label{sec:fe-scope}

We consider a network which forms under latent homophily, with a subsequent two-time-step process for the outcome variable $Y$. Figure \ref{fig:ST_DAG} shows a directed acyclic graph (DAG) \cite{pearl2009} that represents the dependence structure, only the latent $U$ variables are shown explicitly. Two time steps for the outcomes are the minimal setting in which contagion can be detected. We do not consider the case with more longitudinal data, as the adjacency matrix $A$ would need to be dynamic, which is beyond the scope of this paper. The fixed network case with more time steps simply increases the size of the dyadic data under a suitable time step exchangeability assumption.

\begin{figure}[ht]
  \centering
\begin{tikzpicture}[
    >=latex,             
    every node/.style={font=\small, draw, inner sep=2pt},
    endo/.style={rectangle},
    medi/.style={rectangle},
    ell/.style={shape=ellipse},
    node distance=1.5cm and 2.5cm
  ]

  \node[ell](Ui)   {$U_{i}$};
  \node[ell, right=of Ui](Uj)   {$U_{j}$};

  \node[endo, below=of Ui]      (Yit1) {$Y_{i}^{t-1}$};
  \node[medi] (Aij) at ($ (Ui)!0.5!(Uj) + (0,-1 cm) $) {$A_{ij}$};
  \node[endo, below=of Uj]      (Yjt1) {$Y_{j}^{t-1}$};

  \node[endo, below=4cm of Ui] (Yit)  {$Y_{i}^{t}$};
  \node[endo, below=4cm of Uj]  (Yjt)  {$Y_{j}^{t}$};


  \draw[->] (Ui)   -- (Aij);
  \draw[->] (Ui)  -- (Yit1);
  \draw[->] (Ui)   to[bend right=30,out=315]  (Yit);
  \draw[->] (Uj)   to[bend left=30,out=45]  (Yjt);

  \draw[->] (Yit1) -- (Yit);

  \draw[->] (Aij)  -- (Yit);
  \draw[->] (Aij)  -- (Yjt);

  \draw[->] (Uj)   -- (Aij);
  \draw[->] (Uj)   -- (Yjt1);

  \draw[->] (Yjt1) -- (Yjt);

  \draw[->] (Yjt1) -- (Yit);
  \draw[->] (Yit1) --(Yjt);

\end{tikzpicture}
\caption{DAG showing latent \(U\) variables, with contagion and latent homophily for a single dyad; observed fixed covariates \(X\) are suppressed.}
\label{fig:ST_DAG}
\end{figure}

We consider data generated from Figure \ref{fig:ST_DAG}, with binary nodal
outcomes $Y_{i}$ and categorical unobserved fixed nodal covariates $U_{i}$. The edges $A$ are
considered to have formed in the presence of latent homophily on the $U$
variables. The observed $X$ variables are treated as known and absorbed
into the adjustment set $c$ below. The edges are assumed to have formed prior
to the evolution of the $Y$ variable on top of the network.

Contagion is confounded with latent homophily \cite{shalizi_thomas}, we can see this from the DAG. The obvious approach to estimate contagion would be to model the effect (causal or otherwise) of $Y_{j}^{t-1}$ on $Y_{i}^{t}$, conditional on $A_{ij}= 1$. This would condition on a collider in the DAG which is well known to induce spurious dependence between $Y_{i}^{t}$ and $Y_{j}^{t-1}$
\cite{pearl1988}, which one could mistake for contagion.  If we were to instead not condition on $A_{ij}= 1$, we would leave a backdoor path open, through the latent homophily inducing variables, again inducing spurious dependence. This motivates the need for our sensitivity analysis.

Absent DAG arrows are also important; in this framework, once the network is formed under latent homophily, the $Y$ nodal variables are assumed not to impact the edges. This
works well where the edges are considered fixed prior to nodal process $Y^{t}$ evolving, but if the outcome informs non fixed edges in future time steps, it is misspecified. See \cite{clark_handcock_2024} for an example DAG where the network itself evolves in time with nodal covariates, with few restrictions.

Two observed waves do not yield a routine within-ego or within-dyad
fixed-effects solution. A dyad intercept would require
repeated transitions on the same pair. The confounding of interest is also
not only a level shift in $Y$. It is selection into the connected-dyad
sample: conditioning on $A_{ij}=1$ opens a collider path through $U$
\citep{shalizi_thomas}. Absorbing an ego intercept, or controlling for lagged
outcomes, does not close that path, because a fixed latent type can still
shift current outcomes after the lags
\citep{shalizi_thomas,vanderweele2011contagion}. The sensitivity analysis
targets that residual composition shift among connected dyads.

The latent type $U$ is fixed over the two waves. Time-varying unmeasured
confounding, such as evolving ideology or a latent factor that
changes between waves and jointly moves outcomes and the interpretation of
existing ties, is outside the present bounds. If such a factor is measured
before the focal exposure, it can be placed in $c$. If it is unmeasured, the
sensitivity parameters on a fixed $U$ do not cover it.

\section{Risk-Ratio Targets}\label{sec:rr-target-overview}

Connected dyads are usually statistically different from unconnected dyads, in social network settings \cite{McFowland2023}. As social networks are typically sparse, the presence of an edge is almost always highly informative of the social process that created it. Thus, if behaviors
are strongly informed by the same social process, estimation on just connected dyads leads to strong selection bias.

We follow the logic in \cite{smith2019bounding}, and target risk ratios considering
dyads $(i,j)$, and framing our problem as selection bias. Let the pre-exposure measured information we
adjust for be collected in $c$, for example $c = \lbrace X_i, X_j, Y_{i}^{t-1}
\rbrace$; see Section~\ref{sec:fe-scope}. 

We make the key distinction between causal risk ratios, denoted CRR, and
observed or realized-network risk ratios, denoted RR. CRRs require an interventional distribution 
to be considered, whereas RRs rely only on the natural distribution. We formalize the notion of 
contagion with the following definitions. The potential-outcomes notation used
below can equivalently be read in do-notation, with the
interventions placed on the conditioning side with the do operator.

\begin{definition}[Unconditional causal risk ratio]\label{def:rr-uncond-causal}
For the alter-state treatment $Z_{i,j}=Y_j^{t-1}$, define
\begin{align*}
\text{CRR}^{\text{uncond}}
&=
\frac{
P(Y_i^t(Z_{i,j}=1)=1\mid c)
}{
P(Y_i^t(Z_{i,j}=0)=1\mid c)
}.
\end{align*}
\end{definition}

The observed-data counterpart is
\begin{definition}[Unconditional observed risk ratio]\label{def:rr-uncond-observed}
\begin{align*}
\text{RR}^{\text{uncond}}
&=
\frac{
P(Y_i^t=1\mid Z_{i,j}=1,c)
}{
P(Y_i^t=1\mid Z_{i,j}=0,c)
}.
\end{align*}
\end{definition}

In a non-networked setting, Definition \ref{def:rr-uncond-causal} is
nonparametrically identified by Definition \ref{def:rr-uncond-observed} under
the usual assumptions; see, for example, \cite{hernan_robins_2020}.
\begin{enumerate}
    \item \textbf{Consistency/focal-intervention specification:} if
          $Z_{i,j}=z$, then $Y_{i}^{t} = Y_{i}^{t}(Z_{i,j}=z)$ for
          $z\in\{0,1\}$; for the network interpretation below, the rest of the
          realized network is held fixed rather than assumed to have no effect
          on the ego.
    \item \textbf{Positivity:} $P(Z_{i,j}=z \mid c) > 0$ for
          $z\in\{0,1\}$ whenever $P(c)>0$.
    \item \textbf{Conditional exchangeability:}
          $Y_{i}^{t}(Z_{i,j}=z) \indep Z_{i,j} \mid c$ for $z\in\{0,1\}$.
\end{enumerate}

In a network setting, the conditional exchangeability assumption is violated in the presence of latent homophily, as discussed in Section \ref{sec:notation-setup} as the outcomes $Y_{i}^{t}$ and $Y_{j}^{t-1}$ are not independent given $c$.
The $CRR^{\text{uncond}}$  is best understood as broad alter-state dependence, though is typically close to 1 given the sparsity of typical social networks.
It averages over connected and unconnected dyads and can be affected by
contagion, homophily, and shared social structure, among other complexities of the social process
that generated the data. It is therefore a useful orientation point,
but should not be interpreted as a contagion estimand.

One might next consider an intervention that sets both the alter state and the edge to be present.

\begin{definition}[Focal-component intervention]\label{def:focal-component}
Throughout, \emph{focal} means dyadwise: the intervention is defined for one
ordered pair $(i,j)$ at a time. For that pair, the focal components are the
alter's lagged state $Y_j^{t-1}$ and the tie $A_{ij}$. The nonfocal background
is
\[
H_{ij}:=\left(A_{-(ij)},\mathbf Y_{-j}^{t-1},\mathbf X\right),
\]
where $A_{-(ij)}$ contains all edges other than the $i$--$j$ tie,
$\mathbf Y_{-j}^{t-1}$ contains all lagged outcomes other than $Y_j^{t-1}$, and
$\mathbf X$ contains measured pre-exposure information. The potential outcome
$Y_i^t(y,1;H_{ij})$ sets $Y_j^{t-1}=y$ and $A_{ij}=1$ while leaving $H_{ij}$ at
its realized value. The intervention is not simultaneous or network-wide.
\end{definition}

While we use the focal-component intervention to define contagion, we do not require independence between dyads to estimate the risk ratios. Formal independence assumptions are required for inference and are developed in Section~\ref{sec:cde-estimation}.

\begin{definition}[All-dyad composite causal and observed risk ratios]\label{def:rr-composite}
Let $Z_{i,j}^{A}=Y_j^{t-1}\cdot A_{ij}$ denote the composite exposure requiring
both an altered prior state and a tie. The all-dyad composite causal risk ratio
is
\begin{align*}
\text{CRR}^{\text{target}}_{\text{all}}
&=
\frac{
P(Y_i^t(Z_{i,j}^{A}=1)=1\mid c)
}{
P(Y_i^t(Z_{i,j}^{A}=0)=1\mid c)
}.
\end{align*}
The realized-network observed counterpart is
\begin{align*}
\text{RR}^{\text{target}}_{\text{all}}
&=
\frac{
P(Y_i^t=1\mid Z_{i,j}^{A}=1,c)
}{
P(Y_i^t=1\mid Z_{i,j}^{A}=0,c)
}.
\end{align*}
\end{definition}

These ratios are closer to the empirical language of exposure because
the exposed state requires an affected connected alter. It is still not a clean
causal contagion target. First, the intervention $Z_{i,j}^{A}=0$ is a composite
intervention: it can correspond to an unaffected connected alter, an affected
unconnected alter, or an unaffected unconnected alter. Second, even the
observed counterpart inherits homophily from the edge-formation process because
the realized edge $A_{ij}$ is part of the exposure definition. Thus
$\text{RR}^{\text{target}}_{\text{all}}$ can reflect contagion, but it can also
reflect sorting in the $Y$ outcome due to latent homophily. Again, $\text{RR}^{\text{target}}_{\text{all}}$ does not identify $CRR^{\text{target}}_{\text{all}}$ due to the conditional exchangeability assumption being violated.

\begin{definition}[Forced-contact CDE causal risk ratio]\label{def:rr-cde}
The forced-contact controlled direct effect (CDE) causal risk ratio compares the
effect of the alter's prior outcome while setting the focal edge present for the
full dyad population:
\begin{align*}
\text{CRR}^{\text{fc}}_{\text{CDE}}
&:=
\frac{
P\{Y_i^t(1,1;H_{ij})=1\mid c\}
}{
P\{Y_i^t(0,1;H_{ij})=1\mid c\}
}.
\end{align*}
This is the stronger causal contagion target because it asks what would happen
if contact were set present for dyads whether or not they naturally formed ties.
That extrapolation requires a much stronger, and in most social-network
settings difficult to justify, transport condition: within latent type,
connected risks must remain informative after forcing ties that did not form.
\end{definition}

\begin{definition}[Naturally connected CDE]\label{def:rr-cde-conn}
The naturally connected CDE causal risk ratio compares the same controlled
alter-state interventions, but among dyads that naturally form ties:
\begin{align*}
\text{CRR}^{\text{conn}}_{\text{CDE}}
&:=
\frac{
P\{Y_i^t(1,1;H_{ij})=1\mid A_{ij}=1,c\}
}{
P\{Y_i^t(0,1;H_{ij})=1\mid A_{ij}=1,c\}
}.
\end{align*}
This target asks for the dyad-indexed alter-state effect among dyads that
naturally form ties. It is weaker than the forced-contact CDE because it does not
extrapolate to all possible dyads; it conditions the target population on
natural tie formation.
\end{definition}

For both CDE targets, potential outcomes are defined focal dyad by focal dyad.
The two risk-ratio arms average over the same focal-background distribution;
only the focal alter state changes. The response may depend arbitrarily on
$H_{ij}$, including all neighbors, degree, aggregate exposures, and interactions
among them. Thus ``dyadic'' identifies the component changed and the target. The
intervention is not simultaneous or network-wide.
We interpret this dyad-indexed alter-state effect with the focal tie held
present as a broad definition of contagion.

Mappings such as the number or proportion of exposed neighbors instead suppress
alter identity and target a different contrast. Our contribution is therefore
not greater representational generality than exposure mappings \cite{aronow2017}. It is a
selection-bias sensitivity analysis for a focal tie and focal alter state when
latent homophily changes dyad composition. The target does not identify a
contemporaneous effect, a propagated intervention through later third-party
outcomes, a network-wide policy, or a path-specific transmission mechanism.

As network edges are often so strongly suggestive of social selection processes, forcing
an edge into presence can be quite different from the natural process of tie formation. Thus, targeting
$\text{CRR}^{\text{fc}}_{\text{CDE}}$ is expected to require challenging assumptions
about how the interventional dyad distributions compare to the natural distribution.
$\text{CRR}^{\text{conn}}_{\text{CDE}}$ is expected to be easier to target, and more interpretable
as the contagion signal among edges that plausibly formed naturally, potentially under latent homophily.

\begin{definition}[Observed connected-dyad risk ratio]\label{def:rr-obs-conn}
In observed network data, we do not observe dyads under the controlled
alter-state and contact interventions. We observe dyads for which $A_{ij}=1$
naturally, giving the connected-dyad risk ratio
\begin{align*}
\text{RR}^{\text{obs}}_{\text{conn}}
&=
\frac{
P(Y_i^t=1\mid Y_j^{t-1}=1,A_{ij}=1,c)
}{
P(Y_i^t=1\mid Y_j^{t-1}=0,A_{ij}=1,c)
}.
\end{align*}
\end{definition}

These targets form the paper's main hierarchy. $\text{RR}^{\text{uncond}}$ is a
broad alter dependence contrast.
$\text{CRR}^{\text{target}}_{\text{all}}$ and
$\text{RR}^{\text{target}}_{\text{all}}$ describe causal and observed versions
of a composite network-exposure contrast. The CDE targets then isolate
contagion by comparing alter states while holding the edge present.
$\text{CRR}^{\text{conn}}_{\text{CDE}}$ is the naturally connected target among
dyads that form ties, while
$\text{CRR}^{\text{fc}}_{\text{CDE}}$ is the stronger forced-contact target,
defined for the full dyad population. The empirical contrast available from a
realized network is $\text{RR}^{\text{obs}}_{\text{conn}}$. The sensitivity
analysis below asks how far this observed connected-dyad contrast may be from
each CDE target.  Table~\ref{tab:rr-hierarchy} collects these risk ratios for reference. Each
causal target is paired with its observed-data counterpart.

\begin{table}[H]
\centering
\caption{The risk-ratio targets introduced in this section, with the
definition that states each. ``Causal'' targets are interventional quantities; ``Observed'' contrasts use only the natural distribution and are what may be estimated from the data.}
\label{tab:rr-hierarchy}
\begin{tabular}{llp{0.56\linewidth}}
\hline
Risk ratio & Form & Description and relationship \\
\hline
\hline
$\text{CRR}^{\text{uncond}}$ & Causal & Broadest contrast;
mixes contagion, homophily, and shared structure, so it is an orientation point
rather than a contagion estimand. \\
$\text{RR}^{\text{uncond}}$ & Observed & Observed-data counterpart of
$\text{CRR}^{\text{uncond}}$ \\
\hline
$\text{CRR}^{\text{target}}_{\text{all}}$ & Causal & Composite exposure
requiring both an altered prior state and a tie, over all dyads
(Def.~\ref{def:rr-composite}). Closer to ``having an affected contact,'' but its
unexposed arm pools several distinct states. \\
$\text{RR}^{\text{target}}_{\text{all}}$ & Observed & Observed counterpart of
$\text{CRR}^{\text{target}}_{\text{all}}$. \\
\hline
$\text{CRR}^{\text{conn}}_{\text{CDE}}$ & Causal & Controlled direct effect of
the alter's prior state holding the tie present, standardized to dyads that
naturally form ties (Def.~\ref{def:rr-cde-conn}). \\
$\text{CRR}^{\text{fc}}_{\text{CDE}}$ & Causal & The same controlled direct
effect standardized to the full dyad population, as if contact were
forced (Def.~\ref{def:rr-cde}). Stronger than the connected target and requires
the extra selection-into-ties extrapolation. \\
\hline
$\text{RR}^{\text{obs}}_{\text{conn}}$ & Observed & Observed connected-dyad risk
ratio among naturally tied dyads (Def.~\ref{def:rr-obs-conn}). The obvious contrast
estimable from a fixed two-wave network and the empirical input to the bounds;
Section~\ref{sec:rr-cde} bounds how far it can lie from
$\text{CRR}^{\text{conn}}_{\text{CDE}}$ and $\text{CRR}^{\text{fc}}_{\text{CDE}}$. \\
\hline
\end{tabular}
\end{table}

\section{CDE Sensitivity Bounds}\label{sec:rr-cde}

We now focus on CDE causal risk ratios. Let $U_{i,j}=(U_i,U_j)$ denote the latent dyad
type. For $y\in\{0,1\}$, define the within-connected latent-stratum risk
\[
r_y^{\text{conn}}(u)
:=
P\{Y_i^t(y,1;H_{ij})=1\mid A_{ij}=1,U_{i,j}=u,c\}.
\]
This is an interventional risk: it asks what the ego's outcome risk would be if
the alter's prior state were set to $y$ and the edge were held present, among
naturally connected dyads with latent type $U_{i,j}=u$.

The observed analogue within naturally connected dyads is
\[
r_y^{\text{obs}}(u)
:=
P(Y_i^t=1\mid Y_j^{t-1}=y,A_{ij}=1,U=u,c).
\]
Using
\[
p_y(u)=P(U=u\mid Y_j^{t-1}=y,A_{ij}=1,c), \qquad y\in\{0,1\},
\]
the observed connected-dyad contrast in Definition \ref{def:rr-obs-conn} is
\[
\text{RR}^{\text{obs}}_{\text{conn}}
=
\frac{
\sum_u r_1^{\text{obs}}(u) p_1(u)
}{
\sum_u r_0^{\text{obs}}(u) p_0(u)
}.
\]

\begin{assumption}[Connected consistency and positivity]\label{assump:causal-bridge}
For each $y\in\{0,1\}$:
\begin{enumerate}
    \item \textbf{Consistency:} for dyads with $Y_j^{t-1}=y$ and
          $A_{ij}=1$, the observed outcome equals the controlled-contact
          potential outcome, $Y_i^t=Y_i^t(y,1;H_{ij})$.
    \item \textbf{Positivity:} such dyads occur with positive probability within
          levels of $(U,c)$ whenever the latent stratum is part of the target
          population.
\end{enumerate}
\end{assumption}

\begin{assumption}[Within-connected mean exchangeability]\label{assump:background-mean-bridge}
For $y\in\{0,1\}$ and every relevant $(u,c)$,
\[
E\{Y_i^t(y,1;H_{ij})\mid
Y_j^{t-1}=y,A_{ij}=1,U_{i,j}=u,c\}
=r_y^{\mathrm{conn}}(u).
\]
\end{assumption}
Any attribute that both predicts the controlled-contact mean and is imbalanced
across natural alter states, including group or third-party features, must be
represented in $U$ or $c$. Residual mean imbalance through $H_{ij}$ is outside
the sensitivity analysis.

Under Assumptions~\ref{assump:causal-bridge} and
\ref{assump:background-mean-bridge}, the observed
connected-stratum risks equal the controlled-contact risks:
\[
r_y^{\text{obs}}(u)=r_y^{\text{conn}}(u)
\qquad\text{for all }y\in\{0,1\}\text{ and all }u.
\]
Thus the observed connected-dyad contrast can be rewritten as
\[
\text{RR}^{\text{obs}}_{\text{conn}}
=
\frac{
\sum_u r_1^{\text{conn}}(u) p_1(u)
}{
\sum_u r_0^{\text{conn}}(u) p_0(u)
},
\]
so the sensitivity parameters below need only address latent-type composition
shifts. The two CDE targets from Section~\ref{sec:rr-target-overview} can now be
written as latent-standardized contrasts.
When several measured strata are retained, the analyst specifies one value for
each sensitivity parameter below. Each value is interpreted as an upper bound
that holds within every retained $c$; we do not consider stratum-specific
sensitivity specifications.

\subsection{Naturally connected CDE target.}
The first target asks for the CDE standardized to the latent distribution among
naturally connected dyads. Let
\[
p_A(u)=P(U_{i,j}=u\mid A_{ij}=1,c).
\]
Define
\[
\text{CRR}^{\text{conn}}_{\text{CDE}}
:=
\frac{
\sum_u r_1^{\text{conn}}(u)p_A(u)
}{
\sum_u r_0^{\text{conn}}(u)p_A(u)
}.
\]
This target does not extrapolate to all possible dyads. It asks for contagion
among dyads that naturally form ties. This target choice parallels inference
in the selected population in \cite[Result~5A]{smith2019bounding}, with
naturally connected dyads constituting the selected population. Its bias
relative to the observed connected-dyad contrast is
\begin{align*}
\frac{\text{RR}^{\text{obs}}_{\text{conn}}}
{\text{CRR}^{\text{conn}}_{\text{CDE}}}
&=
\left[
\frac{\sum_u r_1^{\text{conn}}(u)p_1(u)}
{\sum_u r_1^{\text{conn}}(u)p_A(u)}
\right]
\left[
\frac{\sum_u r_0^{\text{conn}}(u)p_A(u)}
{\sum_u r_0^{\text{conn}}(u)p_0(u)}
\right].
\end{align*}
The two bracketed terms show the numerator and denominator bias. They compare
the latent distribution among exposed connected dyads to the natural connected
target distribution, and the natural connected target distribution to the latent
distribution among unexposed connected dyads.

For $y\in\{0,1\}$, define the outcome-risk variation
\[
R_y^{\text{CDE}}
:=
\frac{\max_u r_y^{\text{conn}}(u)}{\min_u r_y^{\text{conn}}(u)}.
\]
Define the connected-target latent distribution shifts
\[
R_{U,1}^{\text{conn}}
:=
\max_u \frac{p_1(u)}{p_A(u)},
\qquad
R_{U,0}^{\text{conn}}
:=
\max_u \frac{p_A(u)}{p_0(u)}.
\]
All of the bounds below are built from a single bounding-factor inequality for a
ratio of weighted-average risks. We state it once and reuse it. Define
\[
B(a,b)=\frac{ab}{a+b-1}.
\]

\begin{lemma}[Bounding factor for a standardized risk ratio]\label{lem:bounding-factor}
Let $r:\mathcal{U}\to(0,1]$ and let $p$ and $q$ be probability distributions on a
common support $\mathcal{U}$. Write $R_Y=\max_u r(u)/\min_u r(u)$ for the
outcome-risk variation and $R_U=\max_u p(u)/q(u)$ for the maximal
distribution shift. Then
\[
\frac{\sum_u r(u)p(u)}{\sum_u r(u)q(u)}
\leq
B(R_Y,R_U)
=
\frac{R_Y R_U}{R_Y+R_U-1}.
\]
The constant is sharp in the closure of the common-support model: two-point
distributions can approach equality arbitrarily closely. 
\end{lemma}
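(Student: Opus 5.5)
The plan is to reduce the ratio to a linear-fractional optimization in $p$ with $q$ and $r$ fixed, and then to a two-level problem in $r$.

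\textbf{Normalization.} Scale $r$ so that $\min_u r(u)=1$ and $\max_u r(u)=R_Y$; the ratio is invariant to this, and the range $(0,1]$ plays no further role. Write $m=\min_u r(u)$ and $M=\max_u r(u)$ before scaling, so that $R_Y=M/m$.

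\textbf{Reduction to two levels of $r$.} Fix $q$ and $r$ and maximize $\sum_u r(u)p(u)$ over distributions $p$ with $p(u)\le R_U q(u)$ for all $u$. Note that $R_U\ge1$, since both $p$ and $q$ sum to one. The maximum is attained by a greedy allocation: put mass $R_U q(u)$ on the largest values of $r$ until total mass one is used. The resulting ratio is unchanged if we replace $r$ by a function that is pointwise larger on the saturated set, where $p=R_Uq$, and pointwise smaller off it, because raising $r$ where $p/q\ge1$ and lowering it where $p/q<1$ can only increase the ratio.

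To make this precise, I will rewrite the ratio as
\[
\frac{\sum_u r\,p}{\sum_u r\,q}=1+\frac{\sum_u r\,(p-q)}{\sum_u r\,q}.
\]
The numerator $\sum_u r(p-q)$ weights $r$ positively exactly where $p>q$, so pushing $r$ to $M$ on $\{p>q\}$ and to $m$ on $\{p\le q\}$ increases $\sum_u r(p-q)$. This monotonicity is the main obstacle, because the denominator $\sum_u rq$ also moves when $r$ changes. I will handle it by differentiating in a single coordinate $r(u)$. The derivative of the ratio has the sign of
\[
p(u)\sum_v r q-q(u)\sum_v r p,
\]
that is, the sign of $p(u)/q(u)$ minus the current ratio. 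The ratio is therefore monotone in each coordinate, and at a maximizer each $r(u)$ sits at $m$ or $M$ according to whether $p(u)/q(u)$ lies below or above the ratio value.

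\textbf{Two-point computation.} With $r\in\{1,R_Y\}$, let $S$ be the set where $r=R_Y$, and set $a=q(S)$ and $b=p(S)\le\min(1,R_U a)$. The ratio becomes
\[
\frac{1+(R_Y-1)b}{1+(R_Y-1)a},
\]
which is increasing in $b$. Taking $b=\min(1,R_Ua)$ and optimizing over $a$, the expression increases in $a$ while $R_Ua\le1$ and decreases afterwards. The maximum is therefore at $a=1/R_U$, $b=1$, which gives
\[
\frac{R_Y}{1+(R_Y-1)/R_U}=\frac{R_YR_U}{R_Y+R_U-1}=B(R_Y,R_U).
\]

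\textbf{Sharpness.} Take a two-point support $\{u_1,u_2\}$ with $q=(1/R_U,\,1-1/R_U)$, $p=(1-\epsilon,\,\epsilon)$ and $r=(1,\,1/R_Y)$. As $\epsilon\to0$ the density ratio $p/q$ tends to $(R_U,0)$, so the maximal shift tends to $R_U$, and the ratio of weighted risks tends to $B(R_Y,R_U)$. Full equality would require $p(u_2)=0$; this is excluded under common support, so equality is approached only in the closure, as the statement asserts.
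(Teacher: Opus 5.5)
Your proof is correct, but it takes a different route from the paper's.

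\textbf{How the paper argues.} The paper never optimizes over $r$. It writes $q=p/\Gamma+h$ with $\Gamma=R_U$. Here $h\ge 0$ has total mass $1-1/\Gamma$; this is exactly the constraint $p\le\Gamma q$. It then bounds $\sum_u r(u)h(u)\ge m(1-1/\Gamma)$ and uses that $\Gamma x/(x+\Gamma-1)$ is increasing in $x=M(p)/m\le R_Y$. That is a two-inequality chain with no compactness and no case analysis in $a$. Its equality conditions (all of $h$ on the minimum risk, all of $p$ on the maximum risk) give the sharpness family directly, with the shift held at exactly $R_U$.

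\textbf{What your route buys and costs.} You identify the worst case explicitly. Coordinatewise linear-fractional monotonicity reduces $r$ to two levels, and a two-parameter optimization shows the extremum is $p=R_Uq$ saturated on a high-risk set of $q$-mass $1/R_U$. This is more informative about the worst case, but it needs more bookkeeping.
\begin{itemize}
\item State the reduction as a sequential pushing argument. With the other coordinates fixed, the ratio is $(A+tp(u))/(B+tq(u))$, which is monotone in $t\in[m,M]$. Moving each $r(u)$ in turn to the better endpoint never decreases the ratio. This avoids invoking a maximizer, and it handles coordinates where $p(u)/q(u)$ equals the current ratio.
\item Delete the heuristic sentence claiming that raising $r$ where $p/q\ge 1$ can only increase the ratio. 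It is false when $1\le p(u)/q(u)$ is below the current ratio. Your own derivative computation shows the threshold is the current ratio, not $1$.
\item Your sharpness family has realized shift $(1-\epsilon)R_U$ rather than $R_U$, and at $R_U=1$ it forces $q(u_2)=0$. Take $q(u_1)=(1-\epsilon)/R_U$ instead, which is the paper's $q=p/R_U+h$ with $h$ on the low-risk point. This keeps the shift exactly $R_U$ and preserves common support, and the ratio still tends to $B(R_Y,R_U)$.
\end{itemize}
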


Lemma \ref{lem:bounding-factor} is the Ding--VanderWeele bounding factor
\cite{ding2016sensitivity} as applied to selection problems by \cite{smith2019bounding};
all proofs are deferred to Appendix \ref{app:proofs}. Define the
connected-target bias factor
\[
BF_{\text{CDE,conn}}
:=
B(R_1^{\text{CDE}},R_{U,1}^{\text{conn}})
B(R_0^{\text{CDE}},R_{U,0}^{\text{conn}}).
\]

\begin{proposition}[Naturally connected CDE lower bound]\label{prop:conn-bound}
Under Assumptions~\ref{assump:causal-bridge} and
\ref{assump:background-mean-bridge}, and common latent support of
$p_1$, $p_0$, and $p_A$,
\[
\frac{\text{RR}^{\text{obs}}_{\text{conn}}}{BF_{\text{CDE,conn}}}
\leq
\text{CRR}^{\text{conn}}_{\text{CDE}}.
\]
\end{proposition}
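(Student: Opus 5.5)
The plan is to combine the bias decomposition displayed just before Lemma~\ref{lem:bounding-factor} with two applications of that lemma, one per bracket. First, Assumptions~\ref{assump:causal-bridge} and \ref{assump:background-mean-bridge} give $r_y^{\text{obs}}(u)=r_y^{\text{conn}}(u)$ for all $u$ and $y$. Hence $\text{RR}^{\text{obs}}_{\text{conn}}=\sum_u r_1^{\text{conn}}(u)p_1(u)\big/\sum_u r_0^{\text{conn}}(u)p_0(u)$. Dividing by the standardized expression for $\text{CRR}^{\text{conn}}_{\text{CDE}}$ yields
\[
\frac{\text{RR}^{\text{obs}}_{\text{conn}}}{\text{CRR}^{\text{conn}}_{\text{CDE}}}
=\left[\frac{\sum_u r_1^{\text{conn}}(u)p_1(u)}{\sum_u r_1^{\text{conn}}(u)p_A(u)}\right]
\left[\frac{\sum_u r_0^{\text{conn}}(u)p_A(u)}{\sum_u r_0^{\text{conn}}(u)p_0(u)}\right].
\]
All denominators are positive by positivity, so this is well defined.

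Next I bound each bracket with Lemma~\ref{lem:bounding-factor}.

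\textbf{First bracket.} Take $r=r_1^{\text{conn}}$, $p=p_1$ and $q=p_A$. Then $R_Y=R_1^{\text{CDE}}$ and $R_U=\max_u p_1(u)/p_A(u)=R_{U,1}^{\text{conn}}$. The lemma bounds this bracket by $B(R_1^{\text{CDE}},R_{U,1}^{\text{conn}})$.

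\textbf{Second bracket.} Take $r=r_0^{\text{conn}}$, $p=p_A$ and $q=p_0$. Then $R_U=\max_u p_A(u)/p_0(u)=R_{U,0}^{\text{conn}}$, and the bracket is bounded by $B(R_0^{\text{CDE}},R_{U,0}^{\text{conn}})$.

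The hypotheses of the lemma need checking:
\begin{itemize}
\item the risks must lie in $(0,1]$;
\item the distributions must share a common support, which is assumed in the proposition.
\end{itemize}
The risks are strictly positive on the target support, which I read from positivity and the well-definedness of the ratios. If some $r_y^{\text{conn}}(u)=0$, then $R_y^{\text{CDE}}=\infty$, and the bound degenerates to $B=R_U$, which still holds. I would note this as a limiting case rather than treat it separately.

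Finally, multiply the two bracket bounds. Since both brackets are nonnegative, this gives $\text{RR}^{\text{obs}}_{\text{conn}}/\text{CRR}^{\text{conn}}_{\text{CDE}}\le BF_{\text{CDE,conn}}$. Rearranging, using $\text{CRR}^{\text{conn}}_{\text{CDE}}>0$ and $BF_{\text{CDE,conn}}>0$, gives the claim.

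I do not expect a real obstacle here, because the argument is bookkeeping once Lemma~\ref{lem:bounding-factor} is available. The only delicate point is matching the orientation of the ratios to the lemma's $\max_u p(u)/q(u)$. In the second bracket the target distribution $p_A$ plays the role of $p$, and the exposed-zero distribution $p_0$ plays the role of $q$. This is exactly why $R_{U,0}^{\text{conn}}$ is defined as $\max_u p_A/p_0$ rather than $\max_u p_0/p_A$.
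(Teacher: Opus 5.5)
Your proposal is correct and follows the paper's proof essentially line for line: the same two-bracket factorization from Assumptions~\ref{assump:causal-bridge} and \ref{assump:background-mean-bridge}, and the same two applications of Lemma~\ref{lem:bounding-factor} with $(p,q)=(p_1,p_A)$ and $(p,q)=(p_A,p_0)$, whose product is $BF_{\text{CDE,conn}}$. One small caveat: positivity in Assumption~\ref{assump:causal-bridge} concerns whether connected dyads occur in each exposure arm, not whether outcome risks are strictly positive, so $r_y^{\text{conn}}>0$ is really the lemma's standing hypothesis (implicit in the paper as well). Your limiting-case remark that $B(\infty,R_U)=R_U$ still bounds each bracket, whenever its denominator is positive, correctly covers the remaining case.
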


The sensitivity parameters explain the imbalance between exposed connected dyads and
unexposed connected dyads, among the naturally connected target distribution.
Analogous upper bounds can be derived using reverse latent-composition ratios,
but we do not pursue them because identifying positive contagion requires only
that a lower bound for $\text{CRR}^{\text{conn}}_{\text{CDE}}$ exceed one.

\subsection{Forced-contact CDE target.}
The second target asks for the CDE if the contact were set present over the full
dyad population. Define
\[
r_y^{\text{fc}}(u)
:=
P\{Y_i^t(y,1;H_{ij})=1\mid U_{i,j}=u,c\},
\qquad
p(u)=P(U_{i,j}=u\mid c).
\]

\begin{assumption}[Forced-contact mean transportability]\label{assump:forced-contact-transport}
For $y\in\{0,1\}$ and every relevant $(u,c)$,
\[
E\{Y_i^t(y,1;H_{ij})\mid A_{ij}=1,U_{i,j}=u,c\}
=
E\{Y_i^t(y,1;H_{ij})\mid U_{i,j}=u,c\}.
\]
\end{assumption}
Assumption~\ref{assump:background-mean-bridge} bridges alter-state groups
within connected dyads; Assumption~\ref{assump:forced-contact-transport}
transports the resulting risks to all dyads.
Equivalently, $r_y^{\text{conn}}(u)=r_y^{\text{fc}}(u)$. The selection
parameters below control only changes in latent-type composition. Under this
assumption,
$R_y^{\text{CDE}}$ is also the outcome-risk variation of
$r_y^{\text{fc}}(u)$.

Appendix~\ref{app:fc-compositional} gives the latent-standardized form of the
forced-contact target, its bias decomposition relative to the observed
connected-dyad contrast, and the definition of the one-step direct factor
$BF_{\text{CDE,fc}}^{\text{dir}}$. The derivation is analogous to that for the naturally connected target; the resulting bound is as follows.

\begin{proposition}[Direct forced-contact CDE lower bound]\label{prop:direct-fc-bound}
Under Assumptions~\ref{assump:causal-bridge},
\ref{assump:background-mean-bridge}, and
\ref{assump:forced-contact-transport}, and common latent support of
$p_1$, $p_0$, and $p$,
\[
\frac{\text{RR}^{\text{obs}}_{\text{conn}}}{BF_{\text{CDE,fc}}^{\text{dir}}}
\leq
\text{CRR}^{\text{fc}}_{\text{CDE}}.
\]
\end{proposition}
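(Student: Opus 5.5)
The plan mirrors the argument for Proposition~\ref{prop:conn-bound}, with the connected target distribution $p_A$ replaced by the full-population distribution $p$. Since $BF_{\text{CDE,fc}}^{\text{dir}}$ is defined in Appendix~\ref{app:fc-compositional}, I assume it has the analogous form
\[
BF_{\text{CDE,fc}}^{\text{dir}}=B(R_1^{\text{CDE}},R_{U,1}^{\text{fc}})\,B(R_0^{\text{CDE}},R_{U,0}^{\text{fc}}),
\qquad
R_{U,1}^{\text{fc}}=\max_u \frac{p_1(u)}{p(u)},\quad R_{U,0}^{\text{fc}}=\max_u \frac{p(u)}{p_0(u)}.
\]
The proof has three steps: rewrite the target as a latent-standardized ratio, factor the bias, and bound each factor with Lemma~\ref{lem:bounding-factor}.

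\textbf{Step 1: standardized form of the target.} By the law of total probability over $U_{i,j}$ given $c$,
\[
P\{Y_i^t(y,1;H_{ij})=1\mid c\}=\sum_u r_y^{\text{fc}}(u)p(u),
\]
so $\text{CRR}^{\text{fc}}_{\text{CDE}}=\sum_u r_1^{\text{fc}}p/\sum_u r_0^{\text{fc}}p$. Assumption~\ref{assump:forced-contact-transport} gives $r_y^{\text{fc}}(u)=r_y^{\text{conn}}(u)$ for every relevant $u$. Assumptions~\ref{assump:causal-bridge} and \ref{assump:background-mean-bridge} give $r_y^{\text{obs}}(u)=r_y^{\text{conn}}(u)$, as already shown in the text. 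Together these yield
\[
\text{RR}^{\text{obs}}_{\text{conn}}=\frac{\sum_u r_1^{\text{conn}}(u)p_1(u)}{\sum_u r_0^{\text{conn}}(u)p_0(u)},
\qquad
\text{CRR}^{\text{fc}}_{\text{CDE}}=\frac{\sum_u r_1^{\text{conn}}(u)p(u)}{\sum_u r_0^{\text{conn}}(u)p(u)},
\]
with the same risk functions appearing in both ratios.

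\textbf{Step 2: bias decomposition.} Dividing the two expressions gives
\[
\frac{\text{RR}^{\text{obs}}_{\text{conn}}}{\text{CRR}^{\text{fc}}_{\text{CDE}}}
=\left[\frac{\sum_u r_1^{\text{conn}}p_1}{\sum_u r_1^{\text{conn}}p}\right]
\left[\frac{\sum_u r_0^{\text{conn}}p}{\sum_u r_0^{\text{conn}}p_0}\right].
\]

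\textbf{Step 3: bound each bracket.}
\begin{itemize}
\item For the first bracket, apply Lemma~\ref{lem:bounding-factor} with $r=r_1^{\text{conn}}$ and $(p,q)=(p_1,p)$. This bounds it by $B(R_1^{\text{CDE}},R_{U,1}^{\text{fc}})$.
\item For the second bracket, apply the lemma with $r=r_0^{\text{conn}}$ and $(p,q)=(p,p_0)$. This bounds it by $B(R_0^{\text{CDE}},R_{U,0}^{\text{fc}})$.
\end{itemize}
Multiplying the two nonnegative bounds and rearranging, with $\text{CRR}^{\text{fc}}_{\text{CDE}}>0$ and $BF>0$, gives the stated inequality.

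\textbf{Hypotheses to check.} The lemma needs three things.
\begin{itemize}
\item Risks in $(0,1]$. I expect to assume these are positive implicitly, as in Proposition~\ref{prop:conn-bound}.
\item A common support. This is supplied by the assumption that $p_1$, $p_0$ and $p$ have common latent support.
\item Finite maxima over a categorical $U$, which holds automatically.
\end{itemize}
Note also that $B(a,b)\ge 1$ for $a,b\ge1$, so the bias factor is a legitimate divisor.

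\textbf{Main obstacle.} The only delicate point is Step 1: justifying that the outcome-risk variation of $r_y^{\text{fc}}$ equals $R_y^{\text{CDE}}$ over the full support of $p$, not merely the connected support. This requires positivity of connected dyads in every stratum charged by $p$, so that $r_y^{\text{conn}}(u)$ is defined there. Common support of $p$ with $p_1$ and $p_0$, which are themselves dominated by $p_A$, gives exactly this, and I would state it explicitly in the proof.
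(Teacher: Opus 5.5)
Your proposal is correct and follows the paper's proof essentially verbatim. Transport identifies $r_y^{\text{fc}}$ with $r_y^{\text{conn}}$, the bias factors exactly as in Proposition~\ref{prop:conn-bound} with $p$ in place of $p_A$, and Lemma~\ref{lem:bounding-factor} bounds the two brackets. Your guessed shifts $\max_u p_1(u)/p(u)$ and $\max_u p(u)/p_0(u)$ are exactly the paper's $R_{U,1}^{\text{fc,dir}}$ and $R_{U,0}^{\text{fc,dir}}$; note that the paper reserves $R_{U,y}^{\text{fc}}$ for the composed ratio $R_{U,y}^{\text{conn}}R_{U,y}^{\text{sel}}$. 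Your explicit support check, that $r_y^{\text{conn}}$ is defined wherever $p$ puts mass, is a detail the paper leaves implicit.
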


The one-step forced-contact factor is the sharpest of the proposed
forced-contact adjustments, but it bundles imbalance among connected dyads
with selection from all eligible dyads into naturally observed ties. This can
make its sensitivity parameters difficult to calibrate from substantive
knowledge. Appendix~\ref{app:fc-compositional} therefore develops a
calibration-oriented compositional analysis in substantial detail. It factors
the discrepancy through the naturally connected target, defines separate
selection-into-ties parameters and the composed factor
$BF_{\text{CDE,fc}}$, and establishes the valid lower bound and factor ordering
in Proposition~\ref{prop:composed-fc-bound}. The forced-contact simulation and
application calculations use this composed factor; the primary naturally
connected analysis uses $BF_{\text{CDE,conn}}$ and does not require the
additional selection component.

We retain the forced contact target for completeness, though note that in a network setting the assumptions required to bound it are often extreme, and recommend focussing on the naturally connected target.

\subsection{Ego-centric bounds}\label{sec:ego-centric-bounds}

$BF_{\text{CDE,conn}}$ treats the dyad type
$U=(U_i,U_j)$ as a single sensitivity variable, so its worst case may shift both
the ego type $E=U_i$ and the alter type $V=U_j$ at once. Writing the connected
distribution shift as a product of an ego shift $R_{E,y}$ and an alter shift
$R_{V,y}$ (Appendix~\ref{app:ego-construction}), an ego-centric refinement holds
the ego type fixed and charges only the residual alter-type imbalance within ego
type.

\begin{assumption}[No latent ego-type imbalance]\label{assump:common-ego-type-bias}
Conditional on $c$, the ego latent type $E=U_i$ has a common
distribution across the connected exposed, connected unexposed, and naturally
connected target populations; equivalently, the ego shifts are null,
$R_{E,1}=R_{E,0}=1$. Any residual latent imbalance between these populations then
operates through the alter type $V=U_j$.
\end{assumption}

Assumption~\ref{assump:common-ego-type-bias} restricts the relative strength of
the latent confounding rather than the causal structure: it asserts that the
latent imbalance is alter-carried. The relevant object is the ego-type
composition $P(U_i\mid Y_j^{t-1},A_{ij}=1,c)$ across alter-exposure groups.
Whether $U_i$ still affects $Y_i^t$ after conditioning on the ego's first
outcome is a statement about outcome-risk heterogeneity, not about this
composition. The assumption is most plausible when observed $c$ already
balances ego types across exposed and unexposed connected dyads. It remains
hard to verify directly when $U$ is unobserved; with an observed proxy one
should check the composition directly, and use non-unit $R_{E,y}$ if the
shares only approximately agree.

With the ego shift set to one, the worst-case comparison ranges only over alter
types within ego type. Working within each ego type with the within-ego
outcome-risk variation $R_y^{\text{CDE}}(e)$ and connected-target alter shifts
$R_{V,y}^{\text{conn}}(e)$, and taking the worst case over ego types, define the
ego-centric connected factor $BF_{\text{CDE,conn}}^{\text{ego}}$
(Appendix~\ref{app:ego-construction}).

\begin{proposition}[Ego-centric connected CDE lower bound]\label{prop:ego-bounds}
Under Assumptions~\ref{assump:causal-bridge},
\ref{assump:background-mean-bridge}, and
\ref{assump:common-ego-type-bias}, the ego-centric factor yields the
connected-target lower bound
\[
\text{CRR}^{\text{conn}}_{\text{CDE}}
\geq
\frac{\text{RR}^{\text{obs}}_{\text{conn}}}{BF_{\text{CDE,conn}}^{\text{ego}}}.
\]
\end{proposition}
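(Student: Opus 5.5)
The plan is to reduce the ego-centric bound to the single-variable bounding factor of Lemma~\ref{lem:bounding-factor}, applied within each ego type, and then aggregate over ego types. First, Assumptions~\ref{assump:causal-bridge} and \ref{assump:background-mean-bridge} let us replace $r_y^{\text{obs}}(u)$ by $r_y^{\text{conn}}(u)$, exactly as in Proposition~\ref{prop:conn-bound}. Writing $u=(e,v)$, factor each latent distribution as an ego marginal times an alter conditional:
\[
p_y(e,v)=p_y(e)\,p_y(v\mid e),\qquad p_A(e,v)=p_A(e)\,p_A(v\mid e).
\]
Assumption~\ref{assump:common-ego-type-bias} says $p_1(e)=p_0(e)=p_A(e)=:\pi(e)$. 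Define the within-ego standardized risks
\[
m_y^{(y')}(e):=\sum_v r_y^{\text{conn}}(e,v)\,p_{y'}(v\mid e),
\]
with $y'\in\{1,0,A\}$. Then
\[
\sum_u r_1^{\text{conn}}(u)p_1(u)=\sum_e \pi(e)\,m_1^{(1)}(e),
\]
and similarly for the other three sums in the bias decomposition displayed before Lemma~\ref{lem:bounding-factor}.

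Second, fix an ego type $e$ and apply Lemma~\ref{lem:bounding-factor} with $r(\cdot)=r_1^{\text{conn}}(e,\cdot)$, $p=p_1(\cdot\mid e)$ and $q=p_A(\cdot\mid e)$. This gives
\[
m_1^{(1)}(e)\le B\bigl(R_1^{\text{CDE}}(e),R_{V,1}^{\text{conn}}(e)\bigr)\,m_1^{(A)}(e).
\]
Likewise, with $p=p_A(\cdot\mid e)$ and $q=p_0(\cdot\mid e)$,
\[
m_0^{(A)}(e)\le B\bigl(R_0^{\text{CDE}}(e),R_{V,0}^{\text{conn}}(e)\bigr)\,m_0^{(0)}(e).
\]
Common support of the alter conditionals within each ego type is inherited from the common joint support assumed for the connected bound.

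Third, aggregate over ego types. Let
\[
B_y^{*}:=\max_e B\bigl(R_y^{\text{CDE}}(e),R_{V,y}^{\text{conn}}(e)\bigr).
\]
Because the ego weights $\pi(e)$ are common to numerator and denominator, the elementary fact that $\sum_e \pi(e)a_e\le \max_e(a_e/b_e)\sum_e\pi(e)b_e$ for nonnegative $a_e$ and positive $b_e$ yields
\[
\frac{\sum_e\pi(e)\,m_1^{(1)}(e)}{\sum_e\pi(e)\,m_1^{(A)}(e)}\le B_1^{*},
\]
and the analogous inequality for the $y=0$ bracket with bound $B_0^{*}$. Multiplying the two bracketed ratios gives
\[
\frac{\text{RR}^{\text{obs}}_{\text{conn}}}{\text{CRR}^{\text{conn}}_{\text{CDE}}}\le B_1^{*}B_0^{*}.
\]
Rearranging, using positivity of all risks, gives the stated bound, provided $B_1^{*}B_0^{*}$ is (or is dominated by) $BF_{\text{CDE,conn}}^{\text{ego}}$.

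The main obstacle is matching this product exactly to the appendix definition of $BF_{\text{CDE,conn}}^{\text{ego}}$. If the appendix instead maximizes the product over $e$ jointly, or maximizes $R_y^{\text{CDE}}(e)$ and $R_{V,y}^{\text{conn}}(e)$ separately before applying $B$, I need a comparison step. Since $B(a,b)$ is nondecreasing in each argument for $a,b\ge1$ (its partial derivative in $a$ is $b(b-1)/(a+b-1)^2\ge0$), separate maxima still dominate $B_y^{*}$, so the bound survives. A joint maximum over $e$ of the product is not obviously an upper bound on $B_1^{*}B_0^{*}$, because the two maximizing ego types may differ. I would therefore check that the definition uses per-arm worst cases. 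The ego-weight cancellation is the crux of the argument, and it relies entirely on Assumption~\ref{assump:common-ego-type-bias}.
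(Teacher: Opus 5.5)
Your proposal is correct and is essentially the paper's proof. You use the common ego marginal from Assumption~\ref{assump:common-ego-type-bias} to write each arm's standardized risk as a mixture over ego types with common weights, apply Lemma~\ref{lem:bounding-factor} within each ego type to the alter conditionals, and take the worst case over $e$ separately in each arm. The appendix defines $BF_{\text{CDE,conn}}^{\text{ego}}$ exactly as your per-arm product $B_1^{*}B_0^{*}$, so no comparison step is needed; your arm-by-arm mediant inequality is, if anything, a more careful rendering of the paper's remark that a ratio of common-weight mixtures is bounded by the largest within-ego ratio.
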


\section{Estimation and One-Sided Inference}\label{sec:cde-estimation}

The bounds of Section~\ref{sec:rr-cde} are population statements:
they relate $\text{RR}^{\text{obs}}_{\text{conn}}$ to
$\text{CRR}^{\text{conn}}_{\text{CDE}}$ or
$\text{CRR}^{\text{fc}}_{\text{CDE}}$ under
Assumptions~\ref{assump:causal-bridge},
\ref{assump:background-mean-bridge}, and, for forced contact,
\ref{assump:forced-contact-transport}.
This section estimates that observed contrast and then gives two
one-sided procedures, one based on an ego cluster bootstrap and one based on a dyadic dependence central limit theorem. They consider slightly different causal targets, and therefore
use different restatements of those identifying assumptions.

Point estimation and the ego cluster bootstrap are
superpopulation statements. They transfer coverage from
$\text{RR}^{\text{obs}}_{\text{conn}}$ to the population CDE targets
under the assumptions of Section~\ref{sec:rr-cde}; the bootstrap does
not use the finite-frame conditions below.

The primary actor-sharing limits are instead design-based.
They condition on the complete pre-outcome design $\mathcal F_n$
defined below, so that the remaining randomness is the second-wave
outcomes. This is necessary in order to develop asymptotic inference.

Conditional on that frame, the estimator is centered at a
realized-frame observed ratio
$\text{RR}^{\mathrm{obs}}_n(\mathcal F_n)$, not at the population
functional of Section~\ref{sec:rr-cde}.
A confidence limit for that realized-frame contrast transfers to a
causal ratio on the same frame only through finite-frame analogues of
the mean-bridge and transport conditions, stated as
Assumptions~\ref{assump:finite-frame-bridge} and
\ref{assump:finite-frame-fc-transport}.
Those analogues are not implied by the population assumptions:
equality of conditional expectations does not imply equality of
realized averages of dyad-specific potential-outcome probabilities.
We first construct the common-weight estimator, then the actor-sharing
limits, then the ego-cluster bootstrap.

Practically, if the ego bootstrap assumption is plausible, the bootstrap method should be used and interpreted in a superpopulation sense. If not, and the data is large, then the actor-sharing limits should be used and interpreted conditional on the observed pre-outcome design.

\subsection{Point estimation with measured adjustment}

The empirical input to the sensitivity analysis is the observed connected-dyad
risk ratio $\text{RR}^{\text{obs}}_{\text{conn}}$. The rows remain directed
throughout: $(i,j,t)$ records ego $i$ and alter $j$, and the reciprocal row
$(j,i,t)$ is a distinct observation. Without measured strata, define
\[
\widehat r_y^{\text{obs}}
:=
\frac{
\sum_{i\neq j,t} I(Y_i^t=1,Y_j^{t-1}=y,A_{ij}=1)
}{
\sum_{i\neq j,t} I(Y_j^{t-1}=y,A_{ij}=1)
},
\qquad y\in\{0,1\}.
\]
The empirical connected-dyad risk ratio is
\[
\widehat{\text{RR}}^{\text{obs}}_{\text{conn}}
=
\frac{\widehat r_1^{\text{obs}}}{\widehat r_0^{\text{obs}}}.
\]
For a finite measured adjustment set $\mathcal C$, let
\[
N_c=\sum_{i\ne j,t}I(A_{ij}=1,c_{ijt}=c),\qquad
N_{yc}=\sum_{i\ne j,t}I(A_{ij}=1,Y_j^{t-1}=y,c_{ijt}=c),
\]
and let $\widehat r_{yc}^{\text{obs}}$ be the corresponding within-$(y,c)$
outcome mean. The common-weight estimator used below is
\[
\widehat w_c=\frac{N_c}{\sum_{c'\in\mathcal C}N_{c'}},\qquad
\widehat R_y=\sum_{c\in\mathcal C}\widehat w_c
\widehat r_{yc}^{\text{obs}},\qquad
\widehat{\text{RR}}^{\text{obs}}_{\text{conn}}
=\frac{\widehat R_1}{\widehat R_0}.
\]
At the population level, write
\[
w_c=P(c\mid A_{ij}=1),\quad
\mu_{yc}^{\text{obs}}
=P(Y_i^t=1\mid Y_j^{t-1}=y,A_{ij}=1,c),
\]
and
\[
\mu_{yc}^{\text{conn}}
=P\{Y_i^t(y,1;H_{ij})=1\mid A_{ij}=1,c\},
\qquad
\mu_{yc}^{\text{fc}}
=P\{Y_i^t(y,1;H_{ij})=1\mid c\}.
\]
When $\mathcal C$ contains several strata, we use the same notation as above
for the corresponding common-weight extensions:
\[
\text{RR}^{\text{obs}}_{\text{conn}}
=
\frac{\sum_cw_c\mu_{1c}^{\text{obs}}}
{\sum_cw_c\mu_{0c}^{\text{obs}}},
\qquad
\text{CRR}^{\text{conn}}_{\text{CDE}}
=
\frac{\sum_cw_c\mu_{1c}^{\text{conn}}}
{\sum_cw_c\mu_{0c}^{\text{conn}}},
\qquad
\text{CRR}^{\text{fc}}_{\text{CDE}}
=
\frac{\sum_cw_c\mu_{1c}^{\text{fc}}}
{\sum_cw_c\mu_{0c}^{\text{fc}}}.
\]
The connected target averages connected dyads within each $c$; the
forced-contact target averages all eligible dyads within each $c$. Both use the
same connected-frame weights $w_c$, so the observed estimator and both causal
targets share one measured-stratum standardization. It is dyad-weighted: an ego
with more retained connected ties contributes more to the stratum weights.
The estimator above targets
$\text{RR}^{\text{obs}}_{\text{conn}}$. When $\mathcal C$ contains one stratum,
these quantities reduce to the conditional targets of
Sections~\ref{sec:rr-target-overview} and \ref{sec:rr-cde}.

Under the common-parameter convention in Section~\ref{sec:rr-cde},
the armwise argument used to prove Proposition~\ref{prop:conn-bound} gives,
within every retained $c$,
\[
\mu_{1c}^{\text{obs}}
\leq B(R_1^{\text{CDE}},R_{U,1}^{\text{conn}})
\mu_{1c}^{\text{conn}},
\qquad
\mu_{0c}^{\text{conn}}
\leq B(R_0^{\text{CDE}},R_{U,0}^{\text{conn}})
\mu_{0c}^{\text{obs}}.
\]
Averaging these inequalities with the same nonnegative weights $w_c$ in both
arms preserves the original connected-CDE bound:
\[
\text{CRR}^{\text{conn}}_{\text{CDE}}
\geq
\frac{\text{RR}^{\text{obs}}_{\text{conn}}}
{BF_{\text{CDE,conn}}}.
\]
Under Assumption~\ref{assump:forced-contact-transport}, the same argument with
the composed forced-contact factor from
Appendix~\ref{app:fc-compositional} gives
\[
\text{CRR}^{\text{fc}}_{\text{CDE}}
\geq
\frac{\text{RR}^{\text{obs}}_{\text{conn}}}
{BF_{\text{CDE,fc}}}.
\]
Thus, common-weight adjustment is part of the estimator and target; it does
not require stratum-specific sensitivity parameters or a separate bound.

Each retained stratum must contain both exposure arms. A working outcome model
could be used instead, and standardized over the same target distribution, but the formal
inference below concerns empirical estimators.

The latent sensitivity parameters in Section \ref{sec:rr-cde} are not identified
from the observed connected-dyad table when $U$ is unobserved. Thus, they must be specified by the analyst using substantive knowledge.

\subsection{Actor-sharing inference}

The results in this subsection cover realized-frame causal ratios
conditional on the design $\mathcal F_n$ defined next. They use the
finite-frame identifying assumptions stated below; the ego-cluster
bootstrap does not.

Actor-sharing sandwich estimator inference for dyadic data has been developed by \cite{fafchamps_gubert_2007}, \cite{aronow_samii_assenova_2015} and \cite{tabord_meehan_2019}; our fixed-design development below follows the dependency-graph asymptotics of Tabord-Meehan while addressing a different studentization issue induced by heterogeneous pair-specific conditional means.

Inference is stated for a triangular array of networks indexed by the
node count $n$ of Section~\ref{sec:notation-setup}. Actors are labeled
$1,\ldots,n$. Write $A_n$ for the realized $n\times n$ adjacency matrix,
$\mathbf Y^{t-1}$ for the lagged outcome vector, and $\mathbf X$ for the
measured covariates. Let $\mathcal G_n$ be the full set of retained eligible
ordered-dyad rows, whether connected or not. Let $\mathcal D_n$ be the set of
retained unordered actor pairs that contribute at least one directed connected
row, and write $m_n=|\mathcal D_n|$.

Let the complete pre-outcome design be
\[
\mathcal F_n
=
\sigma\!\left(
A_n,\mathbf Y^{t-1},\mathbf X,\mathcal G_n,\mathcal D_n
\right).
\]
These objects are fixed. The remaining randomness is the
second-wave outcomes. Fixed latent attributes may likewise be
treated as part of this sequence, although the analyst
does not observe them.

Let $\mathcal G_{c,n}$ and $\mathcal R_{c,n}$ be the full eligible and
connected ordered-dyad rows in stratum $c$ respectively. Write
$M_{c,n}=|\mathcal G_{c,n}|$, $N_{c,n}=|\mathcal R_{c,n}|$, and
$P_{c,n}=N_{c,n}/\sum_{c'}N_{c',n}$. For $r=(i,j)\in\mathcal G_{c,n}$, write
\[
\rho_{r,y}
=
P\{Y_i^t(y,1;H_{ij})=1\mid\mathcal F_n\}.
\]
\begin{assumption}[Finite-frame causal bridge]
\label{assump:finite-frame-bridge}
Consistency and positivity hold. Within each latent type $u$ and measured
stratum $c$, the average of $\rho_{r,y}$ among dyads naturally observed with
$Y_j^{t-1}=y$ equals its average over all naturally connected target dyads of
that $(u,c)$ type.
\end{assumption}
This is the realized-frame analogue of
Assumptions~\ref{assump:causal-bridge} and
\ref{assump:background-mean-bridge}. It permits arbitrary effect modification
by $H_{ij}$ subject to the same within-$(U,c)$ mean bridge.
It is used only by the actor-sharing results.

\begin{assumption}[Finite-frame forced-contact transport]
\label{assump:finite-frame-fc-transport}
Let $\mathcal R_{uc,n}$ and $\mathcal G_{uc,n}$ be the connected and full
eligible subsets of type $(u,c)$. For $y\in\{0,1\}$,
\[
\frac{1}{|\mathcal R_{uc,n}|}\sum_{r\in\mathcal R_{uc,n}}\rho_{r,y}
=
\frac{1}{|\mathcal G_{uc,n}|}\sum_{r\in\mathcal G_{uc,n}}\rho_{r,y}.
\]
The connected exposure-arm, connected-target, and full-frame latent-type
distributions have common support.
\end{assumption}
This is the realized-frame analogue of
Assumption~\ref{assump:forced-contact-transport}.
It is used only by the forced-contact actor-sharing corollary;
the forced-contact bootstrap uses
Assumption~\ref{assump:forced-contact-transport} instead.

Define the realized-frame causal arm risks and their ratio by
\[
\bar\rho^{\mathrm{conn}}_{yc,n}
:=\frac{1}{N_{c,n}}\sum_{r\in\mathcal R_{c,n}}\rho_{r,y},
\qquad
R^{\mathrm{conn}}_{y,n}
:=\sum_cP_{c,n}\bar\rho^{\mathrm{conn}}_{yc,n},
\qquad
\text{CRR}^{\mathrm{conn}}_n(\mathcal F_n)
:=\frac{R^{\mathrm{conn}}_{1,n}}{R^{\mathrm{conn}}_{0,n}}.
\]
For the forced-contact benchmark, define
\[
\bar\rho^{\mathrm{fc}}_{yc,n}
:=\frac{1}{M_{c,n}}\sum_{r\in\mathcal G_{c,n}}\rho_{r,y},
\qquad
R^{\mathrm{fc}}_{y,n}
:=\sum_cP_{c,n}\bar\rho^{\mathrm{fc}}_{yc,n},
\qquad
\text{CRR}^{\mathrm{fc}}_n(\mathcal F_n)
:=\frac{R^{\mathrm{fc}}_{1,n}}{R^{\mathrm{fc}}_{0,n}}.
\]
Let $\text{RR}^{\mathrm{obs}}_n(\mathcal F_n)$ be the analogous common-weight
ratio of conditional observed cell risks on this frame. Let
$BF_{\mathrm{CDE,conn},n}$ be the single factor generated by sensitivity
restrictions that hold uniformly over the retained strata on this frame.
The connected factor uses the formula in Section~\ref{sec:rr-cde}, and
$BF_{\mathrm{CDE,fc},n}$ uses the composed formula in
Appendix~\ref{app:fc-compositional}, with latent-stratum risks replaced by
type-specific averages of $\rho_{r,y}$ and with $p_y$, $p_A$, and $p$ replaced
by the corresponding latent-type proportions in the realized exposure-arm,
connected, and full eligible frames. The
finite-frame bridge and the same common-weight argument give the deterministic
inequality
\[
\text{CRR}^{\mathrm{conn}}_n(\mathcal F_n)
\geq
\frac{\text{RR}^{\mathrm{obs}}_n(\mathcal F_n)}
{BF_{\mathrm{CDE,conn},n}}.
\]

Under Assumption~\ref{assump:finite-frame-fc-transport}, the corresponding
forced-contact restrictions, also uniform over retained strata, give
\[
\text{CRR}^{\mathrm{fc}}_n(\mathcal F_n)
\geq
\frac{\text{RR}^{\mathrm{obs}}_n(\mathcal F_n)}
{BF_{\text{CDE,fc},n}}.
\]

\begin{assumption}[Conditional actor dissociation]
\label{assump:actor-sharing-sampling}
Let $W_d$ collect the directed-row outcomes of $d\in\mathcal D_n$.
For $\mathcal D,\mathcal D'\subset\mathcal D_n$ with
$d\cap e=\varnothing$ for all $d\in\mathcal D$ and $e\in\mathcal D'$,
\[
(W_d)_{d\in\mathcal D}
\ \indep\
(W_e)_{e\in\mathcal D'}
\ \mid\ \mathcal F_n.
\]
The set $\mathcal C$ is finite and fixed. The positivity, nondegeneracy, and
rate conditions of Appendix~\ref{app:dyadic-inference} hold.
\end{assumption}
Equivalently, sets of dyads that do not share actors are independent given
$\mathcal F_n$. This is the overlapping-cluster restriction of
\cite[Assumption 2.1]{tabord_meehan_2019}.

The causal target and sampling law are separate:
Assumption~\ref{assump:actor-sharing-sampling} is not implied by the
focal-component estimand or causal bridge. It covers reuse of an actor as ego
or alter, including reciprocal rows, but excludes residual dependence
between actor-disjoint pairs. Because no network-formation
model is specified, inference covers repeated second-wave innovations on the
realized design, not draws of another network, baseline vector, or legislative
environment. Conditional means may still depend on the entire fixed network;
only residual innovations are required to dissociate across disjoint actor
sets.

For implementation, bundle the directed rows by their unordered actor pair
$d\in\mathcal D_n$. Let $\mathcal R_d$ contain the directed rows of $d$, and,
for $a=(i,j)\in\mathcal R_d$, write
$O_{da}=Y_i^t$, $Z_{da}=Y_j^{t-1}$, and $C_{da}=c_{ijt}$. Define
\[
Q_{dyc}=\sum_{a\in\mathcal R_d}I(C_{da}=c,Z_{da}=y),\qquad
M_{dyc}=\sum_{a\in\mathcal R_d}I(C_{da}=c,Z_{da}=y)O_{da},
\]
and sum over $d$ to obtain $Q_{yc}$ and $M_{yc}$. With
$w_c=P_{c,n}$, set $\widehat r_{yc}=M_{yc}/Q_{yc}$ and
$\widehat R_y=\sum_cw_c\widehat r_{yc}$ as in the common-weight estimator.
Inference begins from the log of the observed connected-dyad risk ratio,
$\log\widehat{\text{RR}}^{\text{obs}}_{\text{conn}}=\log\widehat R_1-\log\widehat R_0$.
To construct its pair score, first define pair $d$'s centered contribution
to the risk in arm $y$,
\[
\widehat\psi_{y,d}
=
\sum_cw_c
\frac{M_{dyc}-\widehat r_{yc}Q_{dyc}}{Q_{yc}}.
\]
The numerator is the pair's observed outcome total minus the total expected
if the pair had the pooled outcome risk in cell $(y,c)$; division by
$Q_{yc}$ and averaging with $w_c$ place this residual contribution on the
common-weight risk scale. A small change in $R_y$ produces a change divided
by $R_y$ on the log-risk scale, since the derivative of $\log R_y$ is
$1/R_y$. The empirical pair score for the log risk ratio is therefore
\[
\widehat\phi_d
=
\frac{\widehat\psi_{1,d}}{\widehat R_1}
-
\frac{\widehat\psi_{0,d}}{\widehat R_0}.
\]
The formal Taylor expansion, substitution, and remainder justification are
given in \eqref{eq:fixed-design-linearization} in
Appendix~\ref{app:dyadic-inference}.
For studentization, pair scores are next aggregated to actors because,
conditional on $\mathcal F_n$, residual dependence is organized through
shared actors rather than through dyads in isolation.
The actor score is
\[
\widehat\Phi_g=\sum_{d:g\in d}\widehat\phi_d.
\]
The formal positive-semidefinite variance is
\[
\widehat\sigma_{\log\text{RR},\mathrm{AS}}^2
=
\sum_{g=1}^n\widehat\Phi_g^2.
\]
For comparison, the sharper inclusion--exclusion quantity is
\[
\widehat\sigma_{\log\text{RR},\mathrm{IE}}^2
=
\sum_g\widehat\Phi_g^2-\sum_d\widehat\phi_d^2.
\]
It is reported as a secondary analysis but is not automatically valid under
fixed-design mean heterogeneity.

The proof has two nonstandard steps. Conditional on $\mathcal F_n$, centered
unordered-pair scores form a dependency graph in which only pairs sharing an
actor are adjacent. The rate conditions in
Appendix~\ref{app:dyadic-inference} allow Janson's dependency-graph theorem
\cite[Theorem~2]{janson_1988},
together with Theorem~\ref{thm:dependency-graph-clt} and a log-risk-ratio
linearization, to give a conditional central
limit theorem. This is the argument of
\cite[Propositions 3.1--3.2]{tabord_meehan_2019}, applied here to
linearized log-risk-ratio scores rather than ordinary-least-squares
residuals. Studentization likewise follows the dyadic-robust quadratic form
studied by \cite[Theorem 3.1]{tabord_meehan_2019}. The additional
difficulty is that empirical residualization centers the scores only after
summing over pairs, not pair by pair, so heterogeneous pair means produce
an indefinite extra term in the inclusion--exclusion quadratic form and can
make it underestimate variance. The actor-sum quadratic form is instead
positive semidefinite, so its corresponding extra terms are nonnegative and
yield conservative one-sided studentization.
Combining that observed-risk limit with the deterministic sensitivity
inequality above gives the causal lower limit. Appendix~\ref{app:dyadic-inference}
provides the details with complete proofs.

\begin{theorem}[Conservative one-sided actor-sharing inference]
\label{cor:dyadic-lower-limit}
Under Assumptions~\ref{assump:actor-sharing-sampling} and
\ref{assump:cell-total-covariance}, Corollary~\ref{cor:fixed-design-clt}
gives the conditional dependency-graph limit for the fixed-design estimator,
and Lemma~\ref{lem:fixed-design-studentization} gives, for the conditional
variance $\sigma_n^2$ of the centered scaled score,
\[
\frac{\widehat\sigma_{\log\text{RR},\mathrm{AS}}^2}
{\sigma_n^2/m_n^2}
\geq1+o_p(1).
\]
Fix an $\mathcal F_n$-measurable factor
$BF\geq BF_{\text{CDE,conn},n}$, choose
$0<\alpha<1/2$, and impose
Assumption~\ref{assump:finite-frame-bridge} and the finite-frame sensitivity
restrictions defining $BF_{\text{CDE,conn},n}$. Let $z_{1-\alpha}$ be the
$1-\alpha$ standard-normal quantile. Then
\[
\underline L_{1-\alpha}^{\mathrm{AS}}(BF)
=
\frac{
\exp\!\left\{
\log\widehat{\text{RR}}^{\text{obs}}_{\text{conn}}
-z_{1-\alpha}\widehat\sigma_{\log\text{RR},\mathrm{AS}}
\right\}
}{BF}
\]
satisfies
\[
\liminf_{n\to\infty}
P\!\left\{
\text{CRR}^{\mathrm{conn}}_n(\mathcal F_n)
\geq \underline L_{1-\alpha}^{\mathrm{AS}}(BF)
\mid\mathcal F_n
\right\}
\geq 1-\alpha.
\]
\end{theorem}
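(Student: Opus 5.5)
The proof combines three ingredients: the conditional CLT for the observed log risk ratio, the one-sided conservativeness of the actor-sum variance, and the deterministic finite-frame sensitivity inequality. Write $\theta_n=\log \text{RR}^{\mathrm{obs}}_n(\mathcal F_n)$ and $\widehat\theta_n=\log\widehat{\text{RR}}^{\text{obs}}_{\text{conn}}$.

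\emph{Step 1: conditional CLT.} Corollary~\ref{cor:fixed-design-clt} supplies
\[
(\widehat\theta_n-\theta_n)/(\sigma_n/m_n)\Rightarrow N(0,1)
\]
conditionally on $\mathcal F_n$. This comes from the linearization \eqref{eq:fixed-design-linearization} plus Janson's dependency-graph theorem applied to centered pair scores, which are adjacent only when pairs share an actor (Assumption~\ref{assump:actor-sharing-sampling}). I take this as given, together with the nondegeneracy condition in Assumption~\ref{assump:cell-total-covariance}.

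\emph{Step 2: one-sided studentization.} This is the step I expect to be the main obstacle, though it is delivered by Lemma~\ref{lem:fixed-design-studentization}. The lemma gives $\widehat\sigma^2_{\mathrm{AS}}/(\sigma_n^2/m_n^2)\ge 1+o_p(1)$, and taking square roots gives the same for $\widehat\sigma_{\mathrm{AS}}/(\sigma_n/m_n)$. The underlying reason is that the actor-sum form is a sum of squares. Its expectation therefore equals the true variance plus nonnegative squared-mean-heterogeneity terms: the pair means are not centered individually, only in aggregate. Only a lower bound on the ratio is available, not convergence to one, so the argument must avoid Slutsky.

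For any $\epsilon>0$, on the event $E_n=\{\widehat\sigma_{\mathrm{AS}}\ge(1-\epsilon)\sigma_n/m_n\}$, which has conditional probability tending to one,
\[
\{\theta_n<\widehat\theta_n-z_{1-\alpha}\widehat\sigma_{\mathrm{AS}}\}\subseteq\{(\widehat\theta_n-\theta_n)m_n/\sigma_n>(1-\epsilon)z_{1-\alpha}\}.
\]
Here $z_{1-\alpha}>0$ is used, which holds because $\alpha<1/2$. By Step 1, the right-hand event has limiting probability $1-\Phi((1-\epsilon)z_{1-\alpha})$. Letting $\epsilon\downarrow0$ gives
\[
\liminf P\{\theta_n\ge \widehat\theta_n-z_{1-\alpha}\widehat\sigma_{\mathrm{AS}}\mid\mathcal F_n\}\ge1-\alpha.
\]
Exponentiating gives the same statement for $\text{RR}^{\mathrm{obs}}_n(\mathcal F_n)\ge\exp\{\widehat\theta_n-z_{1-\alpha}\widehat\sigma_{\mathrm{AS}}\}$.

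\emph{Step 3: transfer to the causal target.} Under Assumption~\ref{assump:finite-frame-bridge} and the finite-frame sensitivity restrictions, the deterministic inequality displayed before the theorem gives $\text{CRR}^{\mathrm{conn}}_n(\mathcal F_n)\ge \text{RR}^{\mathrm{obs}}_n(\mathcal F_n)/BF_{\text{CDE,conn},n}$. That inequality is obtained from the armwise Lemma~\ref{lem:bounding-factor} bounds averaged with the common weights $P_{c,n}$. Since $BF\ge BF_{\text{CDE,conn},n}$ and both are positive and $\mathcal F_n$-measurable, this yields
\[
\text{CRR}^{\mathrm{conn}}_n(\mathcal F_n)\ge \text{RR}^{\mathrm{obs}}_n(\mathcal F_n)/BF.
\]
The event in Step 2 therefore implies $\text{CRR}^{\mathrm{conn}}_n(\mathcal F_n)\ge\underline L^{\mathrm{AS}}_{1-\alpha}(BF)$. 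Monotonicity of conditional probability then gives the claimed $\liminf\ge1-\alpha$.
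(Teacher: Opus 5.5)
Your proposal is correct and follows essentially the same route as the paper: the conservative observed-risk limit from Lemma~\ref{lem:fixed-design-studentization} (whose $\eta$-shrinkage argument you reproduce in Step 2), then exponentiation, then division by $BF\geq BF_{\text{CDE,conn},n}$ via the deterministic finite-frame inequality. One minor label slip: Assumption~\ref{assump:cell-total-covariance} is the cell-total covariance bound used for the arm-error rates, whereas nondegeneracy ($\sigma_n^2>0$ eventually) comes from the conditions bundled into Assumption~\ref{assump:actor-sharing-sampling}.
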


\begin{corollary}[Forced-contact actor-sharing inference]
\label{cor:dyadic-fc-lower-limit}
Under Assumptions~\ref{assump:actor-sharing-sampling},
\ref{assump:cell-total-covariance},
\ref{assump:finite-frame-bridge}, and
\ref{assump:finite-frame-fc-transport}, fix an $\mathcal F_n$-measurable
$BF\geq BF_{\text{CDE,fc},n}$ satisfying the forced-contact sensitivity
restrictions and choose $0<\alpha<1/2$. Then
\[
\liminf_{n\to\infty}
P\!\left\{
\text{CRR}^{\mathrm{fc}}_n(\mathcal F_n)
\geq \underline L_{1-\alpha}^{\mathrm{AS}}(BF)
\mid\mathcal F_n
\right\}
\geq 1-\alpha.
\]
\end{corollary}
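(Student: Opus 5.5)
The argument splits into a probabilistic part and a deterministic part. The probabilistic part is a one-sided coverage statement for the realized-frame observed ratio $\text{RR}^{\mathrm{obs}}_n(\mathcal F_n)$. The deterministic part is the sensitivity inequality $\text{CRR}^{\mathrm{conn}}_n(\mathcal F_n)\geq \text{RR}^{\mathrm{obs}}_n(\mathcal F_n)/BF_{\text{CDE,conn},n}$, which holds under Assumption~\ref{assump:finite-frame-bridge} and the finite-frame sensitivity restrictions. Set
\[
\underline\ell=\log\widehat{\text{RR}}^{\text{obs}}_{\text{conn}}-z_{1-\alpha}\widehat\sigma_{\log\text{RR},\mathrm{AS}}.
\]
On the event $\{\log\text{RR}^{\mathrm{obs}}_n(\mathcal F_n)\geq\underline\ell\}$, exponentiate and divide by $BF$. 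Since $BF\geq BF_{\text{CDE,conn},n}>0$ and $BF$ is $\mathcal F_n$-measurable, we get
\[
\underline L^{\mathrm{AS}}_{1-\alpha}(BF)\leq \frac{\text{RR}^{\mathrm{obs}}_n}{BF}\leq \frac{\text{RR}^{\mathrm{obs}}_n}{BF_{\text{CDE,conn},n}}\leq \text{CRR}^{\mathrm{conn}}_n(\mathcal F_n).
\]
The causal event therefore contains the observed-risk event. It suffices to show $\liminf_n P\{\log\text{RR}^{\mathrm{obs}}_n\geq\underline\ell\mid\mathcal F_n\}\geq1-\alpha$.

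For the observed-risk claim, I will use Corollary~\ref{cor:fixed-design-clt}. It supplies the linearization $\log\widehat{\text{RR}}^{\text{obs}}_{\text{conn}}-\log\text{RR}^{\mathrm{obs}}_n=T_n+r_n$. Here $T_n$ is the centered scaled pair-score sum, with conditional variance $\sigma_n^2/m_n^2$, and satisfies $T_n/(\sigma_n/m_n)\Rightarrow N(0,1)$ conditionally on $\mathcal F_n$. The remainder satisfies $r_n=o_p(\sigma_n/m_n)$. The CLT is obtained through Janson's dependency-graph theorem under Assumption~\ref{assump:actor-sharing-sampling}. The miscoverage event can then be rewritten as
\[
\frac{T_n+r_n}{\sigma_n/m_n}>z_{1-\alpha}\,\frac{\widehat\sigma_{\log\text{RR},\mathrm{AS}}}{\sigma_n/m_n}.
\]
Lemma~\ref{lem:fixed-design-studentization} gives the ratio on the right as at least $1+o_p(1)$. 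Because $\alpha<1/2$, we have $z_{1-\alpha}>0$, so the right side is at least $z_{1-\alpha}-o_p(1)$. The probability of the event is then bounded by $P\{(T_n+r_n)m_n/\sigma_n>z_{1-\alpha}-\epsilon_n\}$ plus a vanishing term. By Slutsky and the conditional CLT this tends to at most $1-\Phi(z_{1-\alpha})=\alpha$. Continuity of the normal distribution function handles the $\epsilon_n$ slack. This is exactly where $\alpha<1/2$ is needed: an overestimated variance helps only when the critical value is positive.

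The main obstacle is to make the probability statements rigorous conditionally on $\mathcal F_n$, along a triangular array in which $\sigma_n$ may degenerate. I will phrase every $o_p$ as a conditional-in-probability statement given $\mathcal F_n$, and take the nondegeneracy and rate conditions from Assumption~\ref{assump:cell-total-covariance} and Appendix~\ref{app:dyadic-inference}. Those conditions ensure that $\sigma_n/m_n$ is the correct scale for both the remainder and the studentization. I will not re-prove the lemma's key content, namely that the actor-sum form equals the true variance plus a positive-semidefinite heterogeneity term plus $o_p$. Given these cited results, only the Slutsky and inclusion argument remains.

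The forced-contact corollary follows by the same route. Replace the deterministic inequality with the forced-contact version, which holds under Assumption~\ref{assump:finite-frame-fc-transport}. The observed-risk coverage step is unchanged.
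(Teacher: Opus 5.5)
Your proposal is correct and takes the same route as the paper: you reuse the observed-risk coverage from the actor-sum studentization (Lemma~\ref{lem:fixed-design-studentization}) unchanged, and you get the event inclusion from the deterministic forced-contact inequality together with $BF\geq BF_{\text{CDE,fc},n}$. One small correction: that forced-contact inequality requires both Assumption~\ref{assump:finite-frame-bridge} and Assumption~\ref{assump:finite-frame-fc-transport}, not the transport assumption alone, and the write-up would be cleaner if you ran the inclusion chain directly for $\text{CRR}^{\mathrm{fc}}_n(\mathcal F_n)$ instead of for the connected target first.
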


These conditional results cover the connected and forced-contact causal risk
ratios on their realized frames. Coverage for the corresponding population
common-weight targets under repeated draws of networks, baseline outcomes, or
environments would require an additional design-sampling law and corresponding
bridge conditions; none is imposed here.

The actor-sum standard error is conservative rather than generally consistent.
Under condition~\eqref{eq:mean-heterogeneity-condition} in
Appendix~\ref{app:dyadic-inference}, the sharper inclusion--exclusion standard
error is consistent and gives asymptotically exact normal studentization.

For the primary finite-sample implementation, we use the degrees-of-freedom
correction of \cite[Section 4]{tabord_meehan_2019}. Let $M_g$ be the number of
retained unordered pair bundles incident to actor $g$, let
$\mathcal M^H=\max_gM_g$, and define
\[
\kappa
=
n\frac{\mathop{\mathrm{median}}_g M_g}{\mathcal M^H}.
\]
When $\kappa>0$, the reported lower limit replaces $z_{1-\alpha}$ by the
$1-\alpha$ quantile of a $t_\kappa$ distribution. This
finite-sample correction is ad hoc for the risk ratio, not a separate
coverage theorem. Because its critical value exceeds the normal critical
value, it preserves the conservative direction of the formal actor-sum limit.

The actor-sharing theorem is the first of two inferential strategies. The
second, developed next, is an outgoing-ego cluster bootstrap. The two
procedures rest on different sampling assumptions; the bootstrap is not a
consequence of Theorem~\ref{cor:dyadic-lower-limit}.
In particular, the bootstrap does not use
Assumptions~\ref{assump:finite-frame-bridge} or
\ref{assump:finite-frame-fc-transport}: it transfers coverage through
the population bounds of Section~\ref{sec:rr-cde}.

\subsection{Ego-cluster bootstrap}

\begin{assumption}[Independent outgoing-ego clusters]
\label{assump:ego-cluster-bootstrap}
Conditional on a fixed target-eligible ego frame $\mathcal E_n$, with
$n_E=|\mathcal E_n|\to\infty$, assign each directed row uniquely to its outgoing
ego. For ego $i$, collect the nonredundant totals
$(Q_{i0c},Q_{i1c},M_{i0c},M_{i1c})_{c\in\mathcal C}$ into $T_{n,i}$; the stratum
total is recovered as $P_{ic}=Q_{i0c}+Q_{i1c}$. An eligible ego with no retained
row contributes the zero vector. For each $n$, the
$\{T_{n,i}:i\in\mathcal E_n\}$ are independent and identically distributed.
\end{assumption}

\begin{assumption}[Ego-cluster bootstrap regularity]
\label{assump:ego-cluster-regularity}
Under Assumption~\ref{assump:ego-cluster-bootstrap}, there are constants
$a_n>0$ and $\delta>0$ such that the centered vectors
$a_n^{-1}\{T_{n,i}-E(T_{n,i})\}$ have uniformly bounded
$(2+\delta)$ moments,
\[
\mu_n:=a_n^{-1}E(T_{n,i})\to\mu,\qquad
\operatorname{Var}\!\left\{a_n^{-1}T_{n,i}\right\}\to\Sigma,
\]
where $\mu$ lies in the interior of the positive finite-cell parameter space
and $\Sigma$ may be positive semidefinite. Writing $h$ for the smooth map from
mean totals to the log risk ratio, the scalar delta-method variance
$\nabla h(\mu)^\top\Sigma\nabla h(\mu)$ is strictly positive.
\end{assumption}
These are the Lindeberg--Feller and delta-method conditions used in the
proof of Theorem~\ref{cor:ego-bootstrap-lower-limit}.

Assumption~\ref{assump:ego-cluster-bootstrap} is deliberately stronger than
Assumption~\ref{assump:actor-sharing-sampling}. It rules out residual dependence
between different outgoing-ego blocks even when their rows share an alter,
represent reciprocal directions, or place the same actor in different roles.
The eligible frame is resampled in full, including zero-row egos. This preserves
the directed point estimator but does not reproduce the actor-sharing
covariance structure.

Because the cluster law may vary with $n$, the bootstrap targets the
corresponding superpopulation ratio at each $n$:
\[
\ell_n:=h(\mu_n),\qquad
\text{RR}^{\mathrm{obs}}_{\mathrm{conn},n}:=\exp(\ell_n).
\]
This is the ratio of common-weight risks formed from expected cluster totals.
Write $\text{CRR}^{\mathrm{conn}}_{\mathrm{CDE},n}$,
$\text{CRR}^{\mathrm{fc}}_{\mathrm{CDE},n}$, and
$BF_{\mathrm{CDE,conn},n}$, $BF_{\mathrm{CDE,fc},n}$ for the population
targets and sensitivity factors of Section~\ref{sec:rr-cde} under that same
law. These are superpopulation quantities, not the realized-frame targets
of the actor-sharing theorem. Convergence $\mu_n\to\mu$ alone does not
justify centering a root-$n_E$ limit at $h(\mu)$.

\begin{theorem}[One-sided ego-cluster bootstrap]
\label{cor:ego-bootstrap-lower-limit}
Suppose Assumptions~\ref{assump:ego-cluster-bootstrap} and
\ref{assump:ego-cluster-regularity} hold and fix
$0<\alpha<1$. Draw
$n_E$ outgoing-ego blocks with replacement from the empirical distribution and
recompute the finite-stratum estimator in each bootstrap sample. If
$q_\alpha^\star$ is the conditional $\alpha$ quantile of
$\widehat{\text{RR}}_{\text{conn}}^{\text{obs},\star}$, then, for a fixed
nonrandom $BF\geq BF_{\text{CDE,conn},n}$ for all sufficiently large $n$,
with the connected sensitivity restrictions and
Assumptions~\ref{assump:causal-bridge} and
\ref{assump:background-mean-bridge} holding under each such population law,
\[
\underline L^{\text{ego}}_{1-\alpha}(BF)
:=
\frac{q_\alpha^\star}{BF}
\]
satisfies
\[
\liminf_{n\to\infty}
P\!\left\{
\text{CRR}^{\text{conn}}_{\text{CDE},n}
\geq\underline L^{\text{ego}}_{1-\alpha}(BF)
\right\}\geq1-\alpha.
\]
\end{theorem}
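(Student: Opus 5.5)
The argument splits into a statistical step and a deterministic step. The statistical step is a one-sided bootstrap percentile limit for the superpopulation ratio $\text{RR}^{\mathrm{obs}}_{\mathrm{conn},n}=\exp\{h(\mu_n)\}$. The deterministic step transfers this limit to $\text{CRR}^{\text{conn}}_{\text{CDE},n}$ using the population bound of Proposition~\ref{prop:conn-bound}, in its common-weight form from the estimation subsection. For the transfer: for each sufficiently large $n$, Assumptions~\ref{assump:causal-bridge}, \ref{assump:background-mean-bridge} and the connected sensitivity restrictions hold under the law of that $n$. Hence
\[
\text{CRR}^{\text{conn}}_{\text{CDE},n}\geq \text{RR}^{\mathrm{obs}}_{\mathrm{conn},n}/BF_{\text{CDE,conn},n}\geq \text{RR}^{\mathrm{obs}}_{\mathrm{conn},n}/BF,
\]
using $BF\geq BF_{\text{CDE,conn},n}>0$. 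The event $\{\text{RR}^{\mathrm{obs}}_{\mathrm{conn},n}\geq q_\alpha^\star\}$ is therefore contained in $\{\text{CRR}^{\text{conn}}_{\text{CDE},n}\geq q_\alpha^\star/BF\}$. It then suffices to show $\liminf_n P\{\ell_n\geq \log q_\alpha^\star\}\geq 1-\alpha$.

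For the statistical step, write $\bar T_n=n_E^{-1}\sum_i a_n^{-1}T_{n,i}$. The common-weight estimator is scale invariant, since $w_c$ and $\widehat r_{yc}$ are ratios of totals. Hence $\log\widehat{\text{RR}}^{\text{obs}}_{\text{conn}}=h(\bar T_n)$. By Assumption~\ref{assump:ego-cluster-bootstrap} the rescaled blocks are a row-wise i.i.d.\ triangular array, and by Assumption~\ref{assump:ego-cluster-regularity} their $(2+\delta)$ moments are uniformly bounded. The Lyapunov (hence Lindeberg--Feller) CLT then gives $\sqrt{n_E}(\bar T_n-\mu_n)\Rightarrow N(0,\Sigma)$, and $\bar T_n-\mu_n\to0$ in probability. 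Since $\mu_n\to\mu$ in the interior of the parameter space, $h$ is continuously differentiable on a neighbourhood of $\mu$. A mean-value expansion around $\mu_n$ (not $\mu$, which is why the statement warns about centering) then gives
\[
\sqrt{n_E}\{h(\bar T_n)-\ell_n\}\Rightarrow N(0,\tau^2),\qquad \tau^2=\nabla h(\mu)^\top\Sigma\nabla h(\mu)>0,
\]
using continuity of $\nabla h$ and $\mu_n\to\mu$.

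The main obstacle is the bootstrap analogue: conditionally on the data, $\sqrt{n_E}\{h(\bar T_n^\star)-h(\bar T_n)\}\Rightarrow N(0,\tau^2)$ in probability. First, the bootstrap CLT for means of a triangular array requires the empirical covariance of the $a_n^{-1}T_{n,i}$ to converge to $\Sigma$ in probability. This follows from a weak law for the squared terms, which is valid because the $(2+\delta)$ moment bound gives uniform integrability of the squared terms. Second, the conditional Lyapunov condition needs $n_E^{-\delta/2}\cdot n_E^{-1}\sum_i\|a_n^{-1}T_{n,i}-\bar T_n\|^{2+\delta}\to0$ in probability. This holds because that empirical average is $O_p(1)$ by Markov's inequality. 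Then I apply the delta method at the random center $\bar T_n\to\mu$ in probability. Because $\tau^2>0$ and the limit is continuous, Pólya's theorem gives uniform convergence of both distribution functions. Writing $q_\alpha^\star=\exp\{h(\bar T_n)+n_E^{-1/2}\tau z_\alpha+o_p(n_E^{-1/2})\}$, we get
\[
P\{\ell_n\geq\log q_\alpha^\star\}=P\bigl[\sqrt{n_E}\{h(\bar T_n)-\ell_n\}\leq -\tau z_\alpha+o_p(1)\bigr]\to\Phi(z_{1-\alpha})=1-\alpha.
\]
This delivers the claim, with the inequality from the deterministic step making it conservative. The main risk is handling the bootstrap draws at a random, data-dependent center. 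I would first try to establish conditional weak convergence along subsequences by an almost-sure representation.
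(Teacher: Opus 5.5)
Your proposal is correct and follows the same route as the paper's proof. Both arguments use a Lindeberg--Feller CLT and a delta method centered at $\mu_n$. Both then obtain a conditional bootstrap CLT from convergence of the empirical covariance and a conditional Lindeberg bound, use Pólya-type uniformity to get quantile convergence, and finish with the event inclusion through the common-weight sensitivity inequality. The paper additionally fixes an explicit convention for original or bootstrap samples that fall outside the positive domain of $h$, and notes that such events have vanishing probability because $\mu$ is interior. Your delta-method step uses this implicitly.
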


\begin{corollary}[Forced-contact ego-cluster bootstrap]
\label{cor:ego-fc-bootstrap-lower-limit}
Under Assumptions~\ref{assump:ego-cluster-bootstrap},
\ref{assump:ego-cluster-regularity},
\ref{assump:causal-bridge}, \ref{assump:background-mean-bridge}, and
\ref{assump:forced-contact-transport}, and under common latent support in
each population law, let $0<\alpha<1$ and let $BF$ be fixed and nonrandom,
with $BF\geq BF_{\text{CDE,fc},n}$ and the forced-contact sensitivity
restrictions holding for all sufficiently large $n$. Then
$\underline L^{\mathrm{ego}}_{1-\alpha}(BF)=q_\alpha^\star/BF$ is an
asymptotic one-sided lower confidence limit for
$\text{CRR}^{\mathrm{fc}}_{\text{CDE},n}$, in the same limit-inferior sense.
\end{corollary}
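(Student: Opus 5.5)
The plan is to reduce the statement to the ego-bootstrap lower limit for the observed ratio, and then transfer it to the forced-contact target through the population forced-contact inequality. The inferential part is identical to the connected case; only the deterministic sensitivity step changes. Concretely, I would first establish, or reuse from the proof of Theorem~\ref{cor:ego-bootstrap-lower-limit}, the observed-ratio statement
\[
\liminf_{n\to\infty}P\{\text{RR}^{\mathrm{obs}}_{\mathrm{conn},n}\geq q_\alpha^\star\}\geq1-\alpha .
\]
This part uses only Assumptions~\ref{assump:ego-cluster-bootstrap} and \ref{assump:ego-cluster-regularity} and involves no causal assumption.

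Its proof runs as follows. Because the cluster law changes with $n$, I apply Lindeberg--Feller for triangular arrays to $a_n^{-1}$-scaled i.i.d.\ ego totals centered at $\mu_n$; the uniform $(2+\delta)$ moments give the Lyapunov condition. This yields $\sqrt{n_E}\{\bar T_n/a_n-\mu_n\}\Rightarrow N(0,\Sigma)$. The map $h$ is scale invariant under common rescaling of all totals, since ratios of cell means and the weights $P_{c}$ are invariant. A delta method at the moving point $\mu_n\to\mu$, with $\mu$ interior, then gives
\[
\sqrt{n_E}\{\log\widehat{\text{RR}}-\ell_n\}\Rightarrow N(0,v),\qquad v=\nabla h(\mu)^\top\Sigma\nabla h(\mu)>0 .
\]
Differentiability of $h$ in a neighborhood of $\mu$ handles the moving center. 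Next I show the bootstrap analogue conditionally in probability: a conditional Lindeberg argument for the empirical cluster law, using the moment bound, gives the same $N(0,v)$ limit for $\sqrt{n_E}(\log\widehat{\text{RR}}^\star-\log\widehat{\text{RR}})$. Because the limit is continuous and $\log$ is monotone, the percentile quantile satisfies
\[
\log q^\star_\alpha=\log\widehat{\text{RR}}+n_E^{-1/2}\sqrt v\,z_\alpha+o_p(n_E^{-1/2}).
\]
Hence $P\{\ell_n\geq\log q_\alpha^\star\}\to P\{N(0,v)\le -\sqrt v z_\alpha\}=1-\alpha$, by symmetry of the normal limit.

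For the causal transfer, under Assumptions~\ref{assump:causal-bridge}, \ref{assump:background-mean-bridge}, \ref{assump:forced-contact-transport} and common latent support in each law, the armwise inequalities of Appendix~\ref{app:fc-compositional} hold within each stratum with the composed factor. Averaging them with the common weights $w_c$, exactly as in the display preceding this section, gives
\[
\text{CRR}^{\mathrm{fc}}_{\mathrm{CDE},n}\geq\text{RR}^{\mathrm{obs}}_{\mathrm{conn},n}/BF_{\mathrm{CDE,fc},n}\geq\text{RR}^{\mathrm{obs}}_{\mathrm{conn},n}/BF
\]
for all large $n$, because $BF$ is fixed and dominates. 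Here I must check that the population law defining $\text{RR}^{\mathrm{obs}}_{\mathrm{conn},n}=\exp\{h(\mu_n)\}$ is the same law under which the sensitivity quantities are evaluated; the statement's phrase ``under each population law'' is meant to give this. The event $\{\text{RR}^{\mathrm{obs}}_{\mathrm{conn},n}\geq q^\star_\alpha\}$ is then contained in $\{\text{CRR}^{\mathrm{fc}}_{\mathrm{CDE},n}\geq q_\alpha^\star/BF\}$, and taking $\liminf$ finishes the proof.

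The main obstacle is the bootstrap consistency step along a triangular array with a moving center. The empirical distribution must be shown to satisfy a conditional Lindeberg condition in probability, and the quantile must converge in the needed uniform sense. I would try a Mallows-metric argument (Bickel--Freedman) applied to the scaled vectors $a_n^{-1}T_{n,i}$, using the $2+\delta$ moment bound. The possibly singular $\Sigma$ is harmless because only the scalar projection with $v>0$ matters.
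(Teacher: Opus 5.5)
Your proposal is correct and follows essentially the same route as the paper. Both arguments combine three steps:
\begin{itemize}
\item first-order lower coverage of $\text{RR}^{\mathrm{obs}}_{\mathrm{conn},n}$ by $q_\alpha^\star$, taken from the proof of Theorem~\ref{cor:ego-bootstrap-lower-limit};
\item the population inequality $\text{CRR}^{\mathrm{fc}}_{\text{CDE},n}\geq\text{RR}^{\mathrm{obs}}_{\mathrm{conn},n}/BF_{\text{CDE,fc},n}$ under Assumptions~\ref{assump:causal-bridge}, \ref{assump:background-mean-bridge}, and \ref{assump:forced-contact-transport}, obtained by common-weight averaging of the armwise composed-factor bounds;
\item the event inclusion that follows from $BF\geq BF_{\text{CDE,fc},n}$.
\end{itemize}

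The main difference is in the bootstrap step. The paper simply reuses the theorem's conditional Lindeberg and Field--Welsh argument, whereas you re-derive it and suggest a Mallows-metric alternative. You also leave implicit the paper's convention for samples that fall outside the positive domain of $h$.
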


For a fixed population observed log risk ratio $\ell$, the same argument
applies if $\sqrt{n_E}(\ell_n-\ell)\to0$. This condition holds automatically
when the population observed ratio is constant in $n$. Coverage then transfers
to a fixed population CDE through its corresponding population sensitivity
inequality and a fixed factor dominating that population's bias factor.

When the outgoing-ego independence assumption is acceptable, the bootstrap
can be preferred in small networks, where a normal approximation to the
actor-sharing score may be unreliable. When the network is large and when
residual dependence through a shared alter cannot be ruled out, the
actor-sharing approximation is the more natural default: it uses the weaker
Assumption~\ref{assump:actor-sharing-sampling} and is the paper's primary
formal result.

All confidence limits above use prespecified
target-specific factor $BF$; they do not account for estimating or selecting
$BF$ using the second wave outcomes.

\section{Simulation Study}\label{sec:simulation-study}

We now study the CDE-focused sensitivity analysis
under plausible network generating processes. We consider a setting
where contagion, latent homophily, and outcome susceptibility are varied
separately. This enables us to consider how conservative the Smith-style
selection-bias lower bounds are, in typical settings.

\subsection{Simulation Specification}

We report the main point-estimate simulation at $n=250$ and evaluate one-sided
inference in the same design. The parameters are as
given in Table \ref{tab:simulation-parameters}. Each node has a binary fixed latent covariate
\[
U_i \sim \text{Bernoulli}(p_U).
\]

The edge model is a logistic dyadic model used to draw one fixed symmetric
adjacency matrix. For pair $\{i,j\}$,
\[
\text{logit}\{P(A_{ij}=1)\}
=
\alpha_A
+ \eta_U I(U_i=U_j).
\]
Edges depend only on the latent type, consistent with the DAG
(Figure~\ref{fig:ST_DAG}) in which the outcomes do not influence tie formation.
Conditional on $(U_{i,j},c)$, the focal edge draw is independent of the
nonfocal background used by the outcome model, so the simulation satisfies
Assumption~\ref{assump:forced-contact-transport}.
The intercept $\alpha_A$ is recalibrated in each grid cell to target mean degree
3. The homophily grid is given in Table \ref{tab:simulation-parameters}.
For $\eta_U$, negative values make ties less likely between nodes of the same latent type,
positive values make ties more likely between nodes of the same latent type,
and zero removes latent homophily from the edge model.

Given the fixed network and outcomes at the previous time point, define the lagged neighbor mean
$\bar{Y}_{N_i}^{t-1}
=
d_i^{-1}\sum_{j\neq i} A_{ij}Y_j^{t-1}$
 where $d_i=\sum_{j\neq i}A_{ij}$ is the degree of node $i$, with
$\bar{Y}_{N_i}^{t-1}=0$ when $d_i=0$.

The node outcome model is also logistic:

\[
\text{logit}\{P(Y_i^t=1)\}
=
\alpha_Y
+ \beta_U U_i
+ \beta_P Y_i^{t-1}
+ \beta_N \bar{Y}_{N_i}^{t-1},
\]

Table \ref{tab:simulation-parameters} collects the
fixed parameter values and grid values used in the simulation. For the first wave, $Y_{i}^{t-1}$ and $\bar{Y}_{N_i}^{t-1}$ are taken to be zero, so $Y^{t-1}$ for $t=1$ is generated from $U$ only (no contagion). The $U$s are generated first; then $Y^{t-1}$ is generated from $U$ only; then $A$ is drawn once and held fixed; then the second wave $Y^{t}$ is generated from the outcome model on that fixed network.

\begin{table}[H]
\centering
\caption{Simulation design parameters for the fixed-network simulation study.}
\label{tab:simulation-parameters}
\begin{tabular}{p{0.24\linewidth}p{0.22\linewidth}p{0.43\linewidth}}
\hline
Quantity & Value & Role \\
\hline
\hline
$p_U$ & $0.5$ & Bernoulli probability for the latent binary nodal trait. \\
\hline
$\alpha_A$ & calibrated & Edge-model intercept chosen to give expected mean
degree approximately 3 for each grid cell. \\
$\eta_U$ & $\{-5,-2,-1,0,1,2,5\}$ & Homophily grid; coefficient on
$I(U_i=U_j)$ in the edge model. \\
\hline
$\alpha_Y$ & $-1.1$ & Outcome-model intercept. \\
$\beta_U$ & $\{0,1,2\}$ & Latent susceptibility grid; no, moderate, and strong
$U\to Y$ scenarios. \\
$\beta_P$ & $0.5$ & Coefficient on ego's prior outcome $Y_i^{t-1}$. \\
$\beta_N$ & $\{0,1,2,5\}$ & Contagion grid; coefficient on the lagged neighbor
mean $\bar{Y}_{N_i}^{t-1}$. \\
\hline
\end{tabular}
\end{table}

The homophily problem is activated when the latent trait
also affects the outcome, so we refer to the three $\beta_U$ values as the no,
moderate, and strong $U\to Y$ scenarios.

Each simulated panel is converted into an ego-alter dyad-transition data set.
For the one transition matching the DAG ($Y^{t-1}$ to $Y^{t}$) and each ordered ego-alter pair $(i,j)$ with
$i\neq j$, the row is
\[
D_{ijt}
=
\left(
Y_i^t,\;Y_i^{t-1},\;Y_j^{t-1},\;A_{ij},\;U_i,\;U_j
\right).
\]

Thus, each simulated panel contributes $n(n-1)$ ego-alter dyad-transition rows.
The network is undirected and each row defines $i$ as the ego whose outcome is modeled and $j$ as the alter whose prior outcome may contribute to the ego's neighbor mean.

The ego's own prior outcome $c=Y_i^{t-1}$ is treated as the measured
pre-exposure covariate that enters the adjustment set.

The design grid crosses the $\beta_U$, $\eta_U$, and $\beta_N$ values shown in
Table \ref{tab:simulation-parameters}.
There are therefore $3\times 7\times 4=84$ grid cells. For the point-estimate
heatmaps, each grid cell is run for 1,000 independent Monte Carlo replicates.
Oracle truths and observed contrasts are available in all replicates.
Sensitivity factors additionally require exact empirical support in every
retained stratum. If $\mathcal V_Q$ is the set of support-valid replicates for
quantity $Q$, the plotted cell value is
$\bar Q=|\mathcal V_Q|^{-1}\sum_{m\in\mathcal V_Q}Q^{(m)}$; the software
records $|\mathcal V_Q|$ separately and reports no mean when it is zero.
Support is sparse at the extreme homophily values: across the 12 cells at each
value, the connected factor is valid in an average of 87 replicates at
$\eta_U=-5$ and 100 at $\eta_U=5$, while the forced-contact factor averages
fewer than one and 23 valid replicates, respectively. At
$\eta_U\in\{-1,0,1,2\}$, every connected-factor cell has at least 850 valid
replicates. Thus the extreme-cell bound values are conditional diagnostics,
not full-design averages.

Our focus is on identifying contagion under latent homophily. Thus, we consider the CDE sensitivity analysis under different strengths of contagion and latent homophily, and focus on testing if the target ratio is above one.

Appendix~\ref{app:simulation-oracle-cde} defines the oracle CDE targets used as
simulation benchmarks, the dyad-by-dyad intervention that maps the known DGP to
interventional risks, and the oracle implementations of the pooled and
ego-centric sensitivity bounds evaluated in each grid cell.

\subsection{Simulation Results}

The simulation results summarize how the observed connected-dyad contrast and
the sensitivity-corrected lower bounds behave across the contagion, homophily,
and latent-susceptibility grids.
Figure~\ref{fig:sim-connected-four-panel-moderate} shows the grid-level
point-estimate patterns for the moderate $U\to Y$ scenario. The strong and no
$U\to Y$ scenarios are shown in Appendix~\ref{app:simulation-results} in
Figures~\ref{fig:sim-connected-four-panel-none} and
\ref{fig:sim-connected-four-panel-strong}.

We next discuss the columns of
Figure~\ref{fig:sim-connected-four-panel-moderate}.

\subsubsection{Connected oracle target and observed contrast: Panel Columns 1 and 2}

The oracle naturally connected CDE increases as the contagion coefficient
$\beta_N$ increases. When $\beta_N=0$, the connected CDE is close to the
null across the homophily grid because changing the alter's prior state has no
direct role in the node outcome model. These patterns
are the intended benchmark: a useful procedure should not create evidence of
contagion in the $\beta_N=0$ cells, but should retain power when $\beta_N>0$.

The observed connected-dyad risk-ratio is standardized within levels of the
ego's own prior outcome $c=Y_i^{t-1}$ leaving the part of the
connected-dyad association that survives conditioning on $c=Y_i^{t-1}$. Homophily
changes the composition of exposed and unexposed connected dyads, so the
contrast either exaggerates or attenuates the
CDE depending on the sign of the latent homophily shift.

\subsubsection{Pooled connected CDE lower bound: Panel Column 3}

The pooled connected lower bound divides the observed connected-dyad risk
ratio by the connected-dyad imbalance factor. It should be closest to the raw contrast when latent outcome-risk
variation and latent composition imbalance are weak, and should move downward as
those quantities become stronger. This is the desired behavior of the method:
the lower bound answers whether the observed association is too large to be
explained by the latent homophily.

The practical tradeoff is power. In cells with strong contagion and weak latent
confounding, the pooled bound should still often remain above one. In cells with
weak contagion we often see the pooled bound below one.

\subsubsection{Ego-centric connected lower bounds}

The ego-centric connected version sets the ego shift $R_{E,y}$ to one and only
lets alter-type composition vary within ego type. This refinement is valid only
when the surviving imbalance is alter-carried rather than ego-carried.

In these point estimates the ego-centric refinement improves relative to
the pooled bound when there is contagion. However, because the
simulation DGP does not meet assumption \ref{assump:common-ego-type-bias}, the ego-centric connected bound exceeds one in the null ($\beta_N=0$) cells under positive homophily.

\subsubsection{Results Interpretation}

Overall, in the moderate $U\to Y$ scenario the two targets behave quite
differently. Under the transport condition and among the better-supported
$\eta_U\in\{-2,-1,0,1,2\}$ cells, the pooled forced-contact lower bound at
$\beta_N=5$ ranges from $0.78$ to $1.14$. The extreme forced-contact cells
have too little empirical support for stable interpretation, and the
forced-contact bias adjustment otherwise does not certify contagion.

For the naturally connected target the bias adjustment
yields pooled lower bounds of $0.98$--$1.07$ at $\beta_N=2$ and
$1.14$--$1.28$ at $\beta_N=5$, so the connected target is the more informative
estimand. The
ego-centric refinement is less conservative than the pooled bound. However, again it does not prevent latent homophily from inflating the connected risk ratio. 

As the simulation DGP need not keep ego-type
composition balanced across alter-exposure groups, $R_{E,y}$ need not be
one; setting it to one then under-estimates the connected bias.

\subsubsection{One-sided inference}

Appendix~\ref{app:bootstrap-inference} evaluates one-sided inference on the
same grid using 250 independent truth replicates, 250 evaluation replicates,
and 250 ego-bootstrap draws per evaluation replicate. Each cell holds fixed
the finite mean connected bias factor from its support-valid truth
replicates. The primary test rejects
$\text{CRR}^{\text{conn}}_{\text{CDE}}\le 1$ when the actor-sum lower
limit divided by that factor exceeds one. The ego-cluster percentile is the
stronger-assumption check. Figure~\ref{fig:sim-error-power-connected-moderate}
shows the moderate $U\to Y$ scenario;
Figures~\ref{fig:sim-error-power-connected-none} and
\ref{fig:sim-error-power-connected-strong} report the no and strong
latent-susceptibility scenarios.

The positive-semidefinite loading makes the primary test deliberately
conservative. Across the no, moderate, and strong $U\to Y$ scenarios, the raw
actor-sum Type I rates are $0.011$, $0.045$, and $0.317$. Dividing by the fixed
connected factor reduces them to $0.011$, $0$, and $0$. The corresponding
fixed-factor powers are $0.68$, $0.29$, and $0.02$, compared with $0.79$,
$0.35$, and $0.03$ for the stronger-assumption ego-cluster check. Thus the actor-sum
procedure retains useful power when latent susceptibility is absent or
moderate, but not when both susceptibility and the required adjustment are
strong.

The finite-frame bridge need not hold in every simulated network. Because
the DGP is known, we can compute the exact conditional observed risk ratio,
the realized-frame connected CDE, and the oracle sensitivity factor, and
check whether
$\text{RR}^{\mathrm{obs}}_n(\mathcal F_n)/BF_{\mathrm{CDE,conn},n}
\leq\text{CRR}^{\mathrm{conn}}_n(\mathcal F_n)$.
Among 13,860 evaluations with an available oracle check, 275 (2.0\%) violate
this inequality. Such violations are incompatible with the finite-frame
bridge under the oracle sensitivity restrictions. We exclude these
evaluations from the oracle-factor causal coverage summaries below, which
are therefore conditional on a valid oracle inequality. Passing this check
does not establish every within-stratum bridge equality. The check uses
known outcome probabilities, not realized second-wave outcomes; observed-risk
coverage and fixed-factor Type I error and power retain their original
evaluation sets.

Coverage diagnostics support the theoretical distinction between variance
estimators. Actor-sum coverage of the exact realized-design observed
risk ratio is $0.993$, $0.994$, and $0.992$ across the three scenarios. The
sharper inclusion--exclusion coverages are $0.956$, $0.965$, and $0.969$.
Using the replicate-specific oracle factor, actor-sum coverage of the connected
CDE is $0.996$, $1.000$, and $1.000$ among the 4,131, 4,809, and
4,645 evaluations retained by the oracle check (of 7,000 per scenario),
respectively. The oracle
expected-variance ratios for actor sum are $2.18$, $2.22$, and $2.18$,
compared with $1.11$, $1.21$, and $1.26$ for inclusion--exclusion. These results quantify the price of removing
the untestable mean-heterogeneity condition: actor sum supplies the formal
assumption-lean guarantee, while inclusion--exclusion remains an informative
sharper diagnostic.

Type I error and power use the independently calibrated cell-mean factor and
are therefore comparative operating characteristics. The oracle-factor
coverage summaries evaluate the coverage-transfer step of
Theorem~\ref{cor:dyadic-lower-limit} on frames satisfying the oracle causal
inequality. All 84
cell-mean factors are estimable, but support-valid truth counts range from 12
to 250, so results at $\eta_U=\pm5$ should be read cautiously.

\clearpage
\begin{landscape}
\begin{figure}[p]
\centering
\includegraphics[width=\linewidth,height=6in,keepaspectratio]{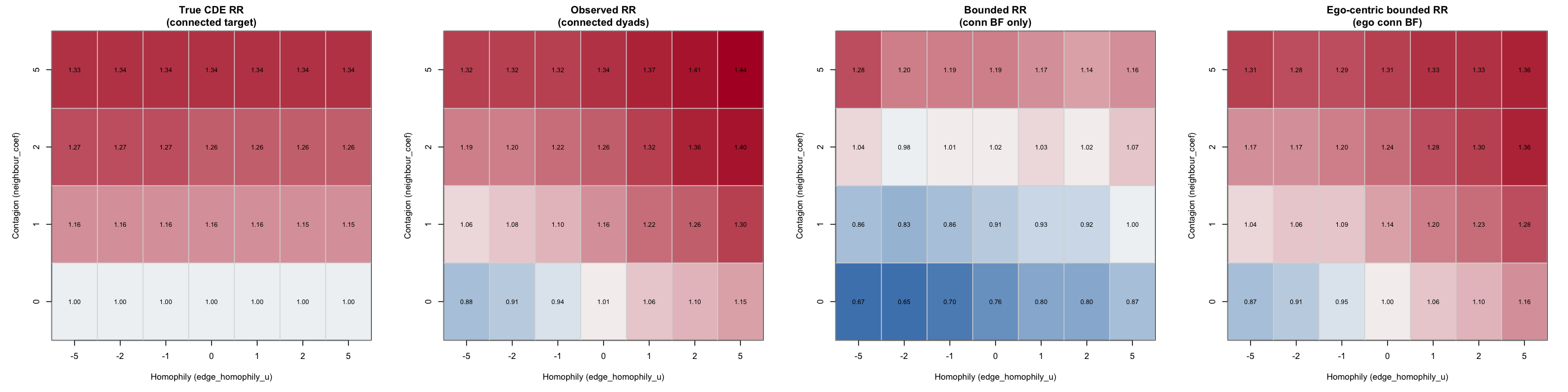}
\caption{Naturally connected CDE simulation heatmaps for the moderate
latent-susceptibility scenario ($n=250$, $\beta_U=1$), with the edge intercept
calibrated to target mean degree 3. Panels show the oracle connected CDE truth,
observed connected-dyad risk ratio, pooled connected lower bound, and
ego-centric connected lower bound. Within each heatmap, the horizontal axis
varies latent homophily in the edge model and the vertical axis varies the
contagion coefficient. All panels use a shared diverging color scale, with
white at RR${}=1$. Truth and observed panels average all 1,000 replicates;
bound panels average only support-valid replicates under the convention in
Section~\ref{sec:simulation-study}.}
\label{fig:sim-connected-four-panel-moderate}
\end{figure}
\end{landscape}
\clearpage

\section{Case Study: TARP Votes under Party-Calibrated Latent Homophily}\label{sec:tarp-application}

Using Office of the Clerk roll calls \cite{congress_xml}, the House rejected
the bailout package on September 29, 2008 (Roll Call 674; 205--228) and passed
a revised Senate-amended package on October 3 (Roll Call 681; 263--171).
Fifty-eight members switched from Nay to Yea. The bill was unpopular,
particularly among Republicans and members in competitive districts
\cite{couch2011financial,montgomery2008house,mian2010political}; the Senate
amendments and four-day interval created a brief period in which colleagues
could exchange information or exert political pressure.

Studies of these votes emphasize campaign contributions, financial-industry
PAC money, and legislators' portfolio exposure
\cite{mian2010political,ramirez2012bailout,tahoun2019wealth}, while
congressional-network studies show that cosponsorship and other ties predict
roll-call similarity \cite{Fowler2006Connecting,Kirkland2011,NBERw16437}.
Neither literature identifies working-tie contagion on this vote. Working
relationships are structured by party, ideology, geography, seniority, and
policy specialization, which also predict bailout votes. A Yea-voting alter
may therefore affect the ego, or the tie may reveal latent characteristics
that make both legislators more likely to vote Yea.

Dynamic-network methods require more outcome and network observations, while
latent-space and stochastic-block approaches require a model for network
formation. We instead ask how strong latent homophily must be to explain the
apparent contagion signal.

The two votes concern slightly different packages four days apart. The CDE is
the effect of a colleague's revealed first-vote position on the ego's vote on
the amended package, holding that four-day environment fixed; it is not the
effect of holding the bill unchanged. Common period effects are absorbed into
the baseline second-vote risk. This is the same separation of environmental
confounding from latent homophily as in \cite{vanderweele2011contagion}.
Differential responsiveness that is a stable trait is charged to latent $U$.
The actor-sharing analysis conditions on this House roster, network, first
vote, party-dyad type, and legislative environment. Residual dependence is
allowed only between dyads that share an actor
(Assumption~\ref{assump:actor-sharing-sampling}). The
separate ego-cluster bootstrap instead invokes an ego-superpopulation
interpretation.

Party is a strong predictor of both working ties and TARP
votes and thus must be included in $c$. We additionally use party as a calibration device for
the latent type $U$: the ego-party outcome-risk variation and connected
party-dyad mix shifts implied by treating party as if it were latent become the
unit of latent confounding. Using the analysis outcomes for this calibration is
not allowed in a formal application of
Theorem~\ref{cor:dyadic-lower-limit}, which requires the sensitivity factor to
be specified without those outcomes. We nevertheless use it to
demonstrate plausible parameter choices because we lack external substantive
knowledge with which to specify the latent-$U$ restrictions. One could instead
calibrate from previous two-step House vote procedures; we do not do that
here. Section~\ref{sec:tarp-substantive-calibration}
reports those restrictions as mild ($\alpha=1/4$), moderate ($\alpha=1/2$),
and strong ($\alpha=1$). We note that an unmeasured confounder as strong as party is very unlikely and suggest the $\alpha=1/4$ and $\alpha=1/2$ cases as reasonable.

Only one legislator that voted Yea on the first vote voted Nay on the second vote. Thus we focus on the switchable egos; those who voted Nay first, $Y_i^1=0$. We therefore target the CDE among this
subpopulation, the scientifically meaningful group for a vote-switching contagion
question. Within that subpopulation the measured covariate is the party-dyad
type, $c=(X_i,X_j)$. We proceed by constructing $A_{ij}$ from filtered cosponsorship,
estimating the party-adjusted $\widehat{\text{RR}}^{\text{obs}}_{\text{conn}}$,
using party's measured strength to calibrate latent $U$. We
focus on the naturally connected target; the corresponding forced-contact
limits are in Appendix~\ref{app:tarp-party-calibration}.

\subsection{Vote outcomes and analytical sample}\label{sec:tarp-votes}

Table \ref{tab:tarp-votes} summarizes the vote roll calls.

\begin{table}[H]
\centering
\caption{TARP House roll calls used in the application.}
\label{tab:tarp-votes}
\begin{tabular}{lllll}
\hline
Occasion & Roll & Date & Bill & Result \\
\hline
First vote  & 674 & 2008-09-29 & H.R.\ 3997 & Failed (205--228) \\
Second vote & 681 & 2008-10-03 & H.R.\ 1424 & Passed (263--171) \\
\hline
\end{tabular}
\end{table}

Let $Y_i^1$ and $Y_i^2$ denote legislator $i$'s vote on the first and second
occasions. The fixed-time panel uses the single transition from occasion $1$ to
occasion $2$, so $Y_i^t$ in the notation of Section~\ref{sec:rr-target-overview}
 corresponds to $Y_i^2$ and $Y_j^{t-1}$ to $Y_j^1$. The sample consists of
$n=433$ members voting on both occasions; $59$ members switched between votes
($58$ Nay$\to$Yea, $1$ Yea$\to$Nay). Vote records and party labels are taken
from the Office of the Clerk XML archive \cite{congress_xml}.

\subsection{Working-relation network}\label{sec:tarp-network}

Working ties $A_{ij}$ are formed from filtered House cosponsorship prior to the
first TARP vote (cutoff 2008-09-29). We retain sponsor--cosponsor events signed
on or before the cutoff, with active status, on non-commemorative HR bills
with fewer than ten cosponsors. Construction
details and bill-filter definitions are given in
Appendix~\ref{app:tarp-network}. Appendix~\ref{app:tarp-network-definitions}
repeats the party-adjusted connected analysis, and the same
party-calibrated latent-homophily factors, under alternative cosponsor caps,
repeated-collaboration and early-cosponsor definitions, and institutional
working-network definitions.

Under this definition the network has mean degree $26.7$, median degree $25$, one isolated
member, and $11{,}556$ connected ordered dyads among TARP voters, of which
$6{,}043$ have a switchable ego ($Y_i^1=0$; see
Section~\ref{sec:tarp-panel}). The degree distribution is shown in
Figure~\ref{fig:tarp-degree}.

\subsection{Dyad panel construction}\label{sec:tarp-panel}

Party is the observed nodal covariate, $X_i\in\{D,R\}$, and enters the
measured adjustment set. Among switchable egos the first vote is constant, so
$c=(X_i,X_j)$ is the party-dyad type.

For each ordered pair $(i,j)$ with $i\neq j$ and $A_{ij}=1$, define one
dyad-transition row:
\[
\big(Y_i^2,\ Y_j^1,\ A_{ij},\ c=(X_i,X_j),\ Y_i^1\big).
\]
We treat $Y_i^1$ and $Y_j^1$ as co-baseline variables with no
within-first-roll-call causal relation; the focal effect is
$Y_j^1\rightarrow Y_i^2$.

Our analysis is restricted to switchable egos, $Y_i^1=0$. The party-adjusted observed risk ratio
standardizes the connected contrast to the connected distribution of
party-dyad types in that subpopulation.

\subsection{Empirical CDE inputs and sensitivity analysis}\label{sec:tarp-empirical}

Following Section~\ref{sec:cde-estimation}, the primary empirical estimand is
the party-adjusted observed connected-dyad risk ratio among switchable egos, denoted $\widehat{\text{RR}}^{\text{obs}}_{\text{conn}}$.

With party-dyad type in $c$ the standardized risks are $\hat{r}_1^{\text{obs}}=0.273$ and $\hat{r}_0^{\text{obs}}=0.237$, thus $\widehat{\text{RR}}^{\text{obs}}_{\text{conn}}=1.15$. The actor-sum one-sided $95\%$ lower limit on that party-adjusted contrast is $1.039$, the
inclusion--exclusion diagnostic is $1.073$, and the ego-cluster percentile
limit is $1.080$. Among switchable egos, a working tie to a Yea-voting alter is
associated with about a $16\%$ higher second-vote Yea risk after party adjustment.
Both formal one-sided limits exclude one, so the party-adjusted connected
contrast identifies positive contagion when $BF=1$.
Table~\ref{tab:tarp-stratum} reports party-dyad risks in the switchable
sample, the inputs for the party-adjusted contrast. Among egos who voted
Yea first, second-vote Yea risk is $0.98$--$1.00$ in every party-dyad cell,
so that stratum is omitted.

\begin{table}[H]
\centering
\caption{Connected-dyad second-vote Yea risks among switchable egos,
by party dyad type, primary cosponsorship network.
}
\label{tab:tarp-stratum}
\begin{tabular}{lcc}
\hline
Party dyad & $\widehat r_1^{\text{obs}}$ & $\widehat r_0^{\text{obs}}$ \\
\hline
$D\!-\!D$ & 0.343 & 0.321 \\
$D\!-\!R$ & 0.306 & 0.264 \\
$R\!-\!D$ & 0.237 & 0.158 \\
$R\!-\!R$ & 0.201 & 0.169 \\
Unadjusted & 0.288 & 0.226 \\
\hline
\end{tabular}
\end{table}

\subsection{Party as a calibration device for latent $U$.}
For our demonstration purposes, we use party to calibrate sensitivity parameters for latent $U$ that would ordinarily have to be specified from substantive knowledge: the
outcome-risk variation across ego party, the imbalance of party-dyad types
among exposed and unexposed connected dyads, and, for the forced-contact
target, the difference between connected dyads and the full ordered-dyad
population.

Ego-party second-vote Yea risks among switchable connected dyads give
$\widehat R_1^{\text{party}}=1.54$ and
$\widehat R_0^{\text{party}}=1.82$: even among switchable egos, the
second-vote Yea risk varies by about $1.5$--$1.8$ across ego party after both
previous votes are held fixed. The connected-dyad party distribution shifts
use the full party-dyad type:
$\widehat R_{1}^{\text{conn,party}}=1.30$ and
$\widehat R_{0}^{\text{conn,party}}=1.31$, yielding
$\widehat{BF}_{\text{conn}}^{\text{party}}=1.219$.
Selection-into-ties shifts relative to all ordered dyads are
$\widehat R_{1}^{\text{sel,party}}=1.75$ and
$\widehat R_{0}^{\text{sel,party}}=1.98$, so
$\widehat{BF}_{\text{fc}}^{\text{party}}=1.724$, using the composition in
Appendix~\ref{app:fc-compositional}. These party-implied ratios are the unit
of latent $U$ confounding in the next subsection.
Appendix~\ref{app:tarp-party-calibration} reports the cell counts and the
max/min calculations.

\subsection{Sensitivity to latent $U$ at a fraction of party strength}\label{sec:tarp-substantive-calibration}

Write $R^{(\alpha)}=1+\alpha(R^{\text{party}}-1)$ for each party-implied
ratio, and let $BF(\alpha)$ be the corresponding product of Ding--VanderWeele
factors. We call $\alpha=1/4$ mild latent homophily, $\alpha=1/2$
moderate, and $\alpha=1$ strong.  With party in $c$, leftover traits such as
ideology, sub-party blocs, or constituency pressure are unlikely to match
that strength, so the strong cell is very conservative.
The primary connected bound is
\[
\widehat{\text{LB}}_{\text{conn}}(\alpha)
=
\frac{\widehat{\text{RR}}^{\text{obs}}_{\text{conn}}}
{BF_{\text{CDE,conn}}(\alpha)}.
\]

We focus on the naturally connected target.
Table~\ref{tab:tarp-formal-limits} reports the party-adjusted contrast and the
three restrictions on latent $U$.
At $\alpha=0$ the party-adjusted contrast is $1.155$ and the actor-sum
endpoint is $1.039$. Under the mild restriction the connected factor is
$BF=1.021$, the point bound is $1.131$, and the actor-sum endpoint is $1.018$.
Under the moderate restriction
($BF=1.071$) the point bound is $1.079$ but the actor-sum endpoint is $0.971$;
under the strong restriction ($BF=1.219$) the point bound is $0.947$ and the
endpoint is $0.852$. Moderate or strong latent homophily is enough to explain
the party-adjusted contrast away.

The forced-contact target is not established at any of these
strengths (Table~\ref{tab:tarp-formal-limits-fc}). Even the mild
forced-contact actor-sum limit is $0.968$.
The ego-centric refinement of Section~\ref{sec:ego-centric-bounds} is not a
primary analysis here: with party-dyad type already in $c$, ego-party
composition is absorbed by the adjustment. The stronger transport condition required for the forced-contact target is also not plausible in this application.

\begin{table}[H]
\centering
\caption{Party-adjusted one-sided $95\%$ lower limits for the naturally connected
target among switchable egos, primary cosponsorship network. Mild, moderate,
and strong restrictions on latent $U$ scale party-implied values by $\alpha=1/4$, $1/2$, and $1$. The positive-semidefinite actor-sum (AS)
limit is the formal actor-sharing implementation. Inclusion--exclusion (IE) is
the sharper secondary limit under condition~\eqref{eq:mean-heterogeneity-condition};
the ego-cluster percentile is the stronger-assumption nonparametric check.}
\label{tab:tarp-formal-limits}
\begin{tabular}{lrrrrr}
\hline
Restriction & $BF$ & Point bound & AS LCL & IE LCL & Ego LCL \\
\hline
Party in $c$, no latent $U$ & $1$ & $1.155$ & $1.039$ & $1.073$ & $1.080$ \\
Mild ($\alpha=1/4$) & $1.021$ & $1.131$ & $1.018$ & $1.051$ & $1.058$ \\
Moderate ($\alpha=1/2$) & $1.071$ & $1.079$ & $0.971$ & $1.002$ & $1.009$ \\
Strong ($\alpha=1$) & $1.219$ & $0.947$ & $0.852$ & $0.880$ & $0.886$ \\
\hline
\end{tabular}
\end{table}

\section{Discussion}

Our method reframes identification of contagion under latent homophily as a
sensitivity-analysis problem for a naturally connected causal estimand. The
bounds make explicit how strongly an unobserved variable must predict the
outcome and sort dyads through network ties to explain away an observed
contrast.

The naturally connected CDE targets contagion among dyads whose ties formed,
holding the focal tie present. This is often the more defensible social-network
estimand because ties are sparse and meaningful. The forced-contact CDE extends
the intervention to all eligible dyads and therefore additionally requires
transport and adjustment for selection from all possible dyads into observed
ties. Its comparison with the connected target makes the cost of that
extrapolation visible.

The simulations illustrate the resulting robustness--power trade-off. Stronger
latent homophily and outcome susceptibility can make the raw connected contrast
overstate contagion; the bounds reduce false-positive conclusions at the cost
of power. The TARP analysis similarly finds that the party-adjusted connected
conclusion survives a mild restriction on latent $U$, but not moderate or
strong restrictions, and does not establish the forced-contact target.

In practice, analysts should report the observed connected contrast with
target-specific sensitivity bounds and state which inferential framework is
credible. Actor-sharing inference conditions on the realized design and permits
dependence through shared actors, whereas the outgoing-ego bootstrap invokes a
stronger superpopulation independence assumption.

Building on \cite{vanderweele2011contagion} and complementing the sequence-level
bounds of \cite{vansteeg_2013}, the framework provides a nonparametric dyadic
selection-bias analysis and one-sided inference for the tractable naturally connected
target. Its scope remains a single fixed network, two observed outcome waves, and binary variables;
extensions to richer longitudinal settings are left for future work.

\section*{Acknowledgements}

The author thanks Medha Uppala for helpful comments during the early stages of this project, Mark Handcock for comments on an early manuscript draft, and an anonymous reviewer of an earlier version for directing attention to \cite{tabord_meehan_2019} and asymptotic dyadic inference.

The author acknowledges the use of AI tools during
the preparation of this manuscript. These tools were used to support editing,
code development, simulation workflow management, and checks of mathematical and
computational exposition. All scientific claims, mathematical arguments,
simulation designs, code, and final text were reviewed and are the responsibility
of the author.

\newpage

\appendix
\section{Forced-contact compositional sensitivity analysis}\label{app:fc-compositional}

This appendix gives the one-step direct factor used in
Proposition~\ref{prop:direct-fc-bound} and then develops a compositional factor
whose sensitivity parameters can be calibrated separately.

\subsection{Direct one-step factor}\label{app:fc-direct}

Under Assumption~\ref{assump:forced-contact-transport}, the forced-contact
target has the latent-standardized form
\[
\text{CRR}^{\text{fc}}_{\text{CDE}}
:=
\frac{
\sum_u r_1^{\text{fc}}(u)p(u)
}{
\sum_u r_0^{\text{fc}}(u)p(u)
}.
\]
This target is stronger because it standardizes to all dyads, not only dyads
that naturally formed ties. Its bias relative to the observed connected-dyad
contrast is
\begin{align*}
\frac{\text{RR}^{\text{obs}}_{\text{conn}}}
{\text{CRR}^{\text{fc}}_{\text{CDE}}}
&=
\left[
\frac{\sum_u r_1^{\text{fc}}(u)p_1(u)}
{\sum_u r_1^{\text{fc}}(u)p(u)}
\right]
\left[
\frac{\sum_u r_0^{\text{fc}}(u)p(u)}
{\sum_u r_0^{\text{fc}}(u)p_0(u)}
\right].
\end{align*}

Here the bracketed terms explicitly include the difference between the natural
connected distributions $p_1(u),p_0(u)$ and the interventional target
distribution $p(u)$. This captures selection into naturally observed ties:
under homophily, dyads with $A_{ij}=1$ can have a very different latent
composition from the full dyad population under a forced-contact intervention.

One can bound this forced-contact contrast in one step by comparing the exposed
and unexposed connected distributions directly to the full dyad population.
Define the direct forced-contact latent distribution shifts
\[
R_{U,1}^{\text{fc,dir}}
:=
\max_u \frac{p_1(u)}{p(u)},
\qquad
R_{U,0}^{\text{fc,dir}}
:=
\max_u \frac{p(u)}{p_0(u)}.
\]
Using the same outcome-risk variation parameters, define the one-step
diagnostic factor
\[
BF_{\text{CDE,fc}}^{\text{dir}}
:=
B(R_1^{\text{CDE}},R_{U,1}^{\text{fc,dir}})
B(R_0^{\text{CDE}},R_{U,0}^{\text{fc,dir}}).
\]

Substantive knowledge ultimately powers the bounds, so sensitivity parameters
should separate quantities that can be informed by different domain
knowledge. The direct factor is the least conservative of the proposed
forced-contact adjustments, but it bundles within-connected imbalance and
selection into naturally observed ties. We therefore decompose the
forced-contact discrepancy into connected and selection components. The
resulting bound is more conservative but permits the two components to be
calibrated separately.

\subsection{Bias decomposition}\label{app:fc-decomposition}

Under the connected bridge and forced-contact transport condition,
\begin{align*}
\frac{\text{RR}^{\text{obs}}_{\text{conn}}}
{\text{CRR}^{\text{fc}}_{\text{CDE}}}
&=
\left[
\frac{\text{RR}^{\text{obs}}_{\text{conn}}}
{\text{CRR}^{\text{conn}}_{\text{CDE}}}
\right]
\left[
\frac{\text{CRR}^{\text{conn}}_{\text{CDE}}}
{\text{CRR}^{\text{fc}}_{\text{CDE}}}
\right].
\end{align*}
Thus, the total bias to the forced-contact target is the product of the
within-connected bias and the selection bias.

To show the decomposition explicitly, under
Assumption~\ref{assump:forced-contact-transport} write the common within-type
risk as $r_y(u)=r_y^{\mathrm{conn}}(u)=r_y^{\mathrm{fc}}(u)$ and set
$M_y(q)=\sum_u r_y(u)q(u)$. The one-step forced-contact bias is
\begin{align*}
\frac{\text{RR}^{\text{obs}}_{\text{conn}}}
{\text{CRR}^{\text{fc}}_{\text{CDE}}}
&=
\frac{M_1(p_1)/M_0(p_0)}{M_1(p)/M_0(p)}
=
\left[\frac{M_1(p_1)}{M_1(p)}\right]
\left[\frac{M_0(p)}{M_0(p_0)}\right].
\end{align*}
Insert the naturally connected averages $M_1(p_A)$ and $M_0(p_A)$:
\begin{align*}
\frac{M_1(p_1)}{M_1(p)}
&=
\left[\frac{M_1(p_1)}{M_1(p_A)}\right]
\left[\frac{M_1(p_A)}{M_1(p)}\right],
\\
\frac{M_0(p)}{M_0(p_0)}
&=
\left[\frac{M_0(p_A)}{M_0(p_0)}\right]
\left[\frac{M_0(p)}{M_0(p_A)}\right].
\end{align*}
The first bracket in each line is the within-connected comparison involving
$p_1$, $p_0$, and $p_A$. The second bracket in each line is the selection
comparison between the naturally connected distribution $p_A$ and the full dyad
distribution $p$. Multiplying the four brackets recovers the one-step
forced-contact bias, so the bias decomposes exactly through $p_A$ and the
within-connected and selection sensitivity parameters can be specified
separately.

\subsection{Composed sensitivity factor and lower bound}

Define the selection-into-ties distribution shifts
\[
R_{U,1}^{\text{sel}}
:=
\max_u \frac{p_A(u)}{p(u)},
\qquad
R_{U,0}^{\text{sel}}
:=
\max_u \frac{p(u)}{p_A(u)}.
\]
Applying Lemma~\ref{lem:bounding-factor} to each weighted average in the
selection step gives
\[
BF_{\text{CDE,sel}}
:=
B(R_1^{\text{CDE}},R_{U,1}^{\text{sel}})
B(R_0^{\text{CDE}},R_{U,0}^{\text{sel}}).
\]
For the forced-contact target, compose the distribution-ratio parameters
before applying the bounding function:
\[
R_{U,1}^{\text{fc}}
:=
R_{U,1}^{\text{conn}}R_{U,1}^{\text{sel}},
\qquad
R_{U,0}^{\text{fc}}
:=
R_{U,0}^{\text{conn}}R_{U,0}^{\text{sel}},
\]
and define
\[
BF_{\text{CDE,fc}}
:=
B(R_1^{\text{CDE}},R_{U,1}^{\text{fc}})
B(R_0^{\text{CDE}},R_{U,0}^{\text{fc}}).
\]

\begin{proposition}[Composed-ratio forced-contact CDE lower bound]\label{prop:composed-fc-bound}
Under Assumptions~\ref{assump:causal-bridge},
\ref{assump:background-mean-bridge}, and
\ref{assump:forced-contact-transport}, and common latent support of
$p_1$, $p_0$, $p_A$, and $p$, the composed-ratio factor
$BF_{\text{CDE,fc}}$ gives a valid lower bound,
\[
\frac{\text{RR}^{\text{obs}}_{\text{conn}}}{BF_{\text{CDE,fc}}}
\leq
\text{CRR}^{\text{fc}}_{\text{CDE}},
\]
and it is sandwiched between the one-step direct factor and the product of the
two stage-wise factors,
\[
BF_{\text{CDE,fc}}^{\text{dir}}
\leq
BF_{\text{CDE,fc}}
\leq
BF_{\text{CDE,conn}}BF_{\text{CDE,sel}}.
\]
\end{proposition}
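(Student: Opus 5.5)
The plan is to derive both the validity claim and the sandwich from two elementary properties of $B(a,b)=ab/(a+b-1)$ on $[1,\infty)^2$, plus a submultiplicativity of the max-ratio distribution shifts. All arguments are at least one. Each $R_y^{\text{CDE}}\ge 1$ by definition. Each distribution shift $\max_u p(u)/q(u)$ between two probability distributions on a common support is $\ge 1$, since otherwise $p<q$ everywhere, which contradicts $\sum p=\sum q=1$.

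\textbf{Step 1 (composition of shifts and the lower sandwich inequality).} On the common support of $p_1,p_0,p_A,p$, write $p_1(u)/p(u)=\{p_1(u)/p_A(u)\}\{p_A(u)/p(u)\}$ and take maxima. This gives
\[
R_{U,1}^{\text{fc,dir}}\le R_{U,1}^{\text{conn}}R_{U,1}^{\text{sel}}=R_{U,1}^{\text{fc}},
\]
and likewise $R_{U,0}^{\text{fc,dir}}=\max_u p(u)/p_0(u)\le \max_u\{p_A(u)/p_0(u)\}\max_u\{p(u)/p_A(u)\}=R_{U,0}^{\text{fc}}$. Next use monotonicity of $B$ in its second argument:
\[
\partial B/\partial b=a(a-1)/(a+b-1)^2\ge 0\quad\text{for }a\ge1 .
\]
Hence $B(R_y^{\text{CDE}},R_{U,y}^{\text{fc,dir}})\le B(R_y^{\text{CDE}},R_{U,y}^{\text{fc}})$ for $y=0,1$. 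All factors are positive, so multiplying the two arms gives $BF_{\text{CDE,fc}}^{\text{dir}}\le BF_{\text{CDE,fc}}$.

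\textbf{Step 2 (validity).} Proposition~\ref{prop:direct-fc-bound} holds under the same assumptions, and common support of $p_1,p_0,p$ is implied by the stated common support. It gives $\text{RR}^{\text{obs}}_{\text{conn}}/BF^{\text{dir}}_{\text{CDE,fc}}\le \text{CRR}^{\text{fc}}_{\text{CDE}}$. Dividing the nonnegative $\text{RR}^{\text{obs}}_{\text{conn}}$ by the larger factor from Step 1 only lowers the left side, so the composed bound follows immediately. Alternatively, one could apply Lemma~\ref{lem:bounding-factor} directly to the brackets $M_1(p_1)/M_1(p)$ and $M_0(p)/M_0(p_0)$ of Appendix~\ref{app:fc-decomposition} with the larger shifts. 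Either route works.

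\textbf{Step 3 (upper sandwich inequality).} This is the only step with real content. I need the submultiplicativity
\[
B(a,bc)\le B(a,b)B(a,c)\qquad (a,b,c\ge1).
\]
Cross-multiplying, this is equivalent to $a(a+bc-1)\ge (a+b-1)(a+c-1)$. Expanding both sides, the difference equals $abc-ab-ac-bc+a+b+c-1=(a-1)(b-1)(c-1)\ge0$. Apply this with $a=R_y^{\text{CDE}}$, $b=R_{U,y}^{\text{conn}}$ and $c=R_{U,y}^{\text{sel}}$ for each $y$, then multiply the two arms. This yields
\[
BF_{\text{CDE,fc}}\le B(R_1^{\text{CDE}},R_{U,1}^{\text{conn}})B(R_1^{\text{CDE}},R_{U,1}^{\text{sel}})\,B(R_0^{\text{CDE}},R_{U,0}^{\text{conn}})B(R_0^{\text{CDE}},R_{U,0}^{\text{sel}})=BF_{\text{CDE,conn}}BF_{\text{CDE,sel}}.
\]
The main obstacle is just spotting that factorization. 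Before relying on it, I would also confirm that $a+b-1>0$ in all denominators, which holds because every argument is at least one.
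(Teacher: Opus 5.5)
Your proposal is correct and matches the paper's proof step for step. It uses the same three ingredients: submultiplicativity of the max-ratio shifts with monotonicity of $B$ in its second argument for $BF_{\text{CDE,fc}}^{\text{dir}}\le BF_{\text{CDE,fc}}$; validity inherited from Proposition~\ref{prop:direct-fc-bound} by dividing by the larger factor; and the identity $a(a+bc-1)-(a+b-1)(a+c-1)=(a-1)(b-1)(c-1)\ge 0$ for the upper ordering. You also verify explicitly that every argument is at least one and that $\partial B/\partial b=a(a-1)/(a+b-1)^2\ge 0$, which the paper only asserts.
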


The compositional analysis first asks how much latent imbalance remains among
naturally connected dyads, then how different those dyads are from the full
population that would be placed in contact under the intervention. For the
naturally connected target, report
$\text{RR}^{\text{obs}}_{\text{conn}}/BF_{\text{CDE,conn}}$; for the
forced-contact target, report
$\text{RR}^{\text{obs}}_{\text{conn}}/BF_{\text{CDE,fc}}$.
The one-step direct factor is less conservative but generally harder to
calibrate because it combines both sources of imbalance.

\section{Proofs}\label{app:proofs}

Every bound in Section~\ref{sec:rr-cde} and
Appendix~\ref{app:fc-compositional} is an application of the bounding factor of
\cite{ding2016sensitivity}, specialized to the selection problem as in
\cite{smith2019bounding}. We therefore prove the building-block inequality
(Lemma~\ref{lem:bounding-factor}) and then obtain each proposition by applying
it to the corresponding ratio.

\subsection{Ego-centric construction}\label{app:ego-construction}

We write $E=U_i$ for the ego latent type and $V=U_j$ for the alter latent type,
so that the dyad type factors as $U=(E,V)$. Writing $p_y(e,a)=p_y(e)\,p_y(a\mid e)$,
and likewise for the naturally connected target distribution $p_A$,
\[
\frac{p_1(u)}{p_A(u)}
=
\underbrace{\frac{p_1(e)}{p_A(e)}}_{\text{ego shift}}
\;
\underbrace{\frac{p_1(a\mid e)}{p_A(a\mid e)}}_{\text{alter shift}} ,
\]
so that, taking maxima,
\[
R_{U,1}^{\text{conn}}
\le
R_{E,1}\,R_{V,1},
\qquad
R_{E,1}:=\max_e\frac{p_1(e)}{p_A(e)},
\quad
R_{V,1}:=\max_{e,a}\frac{p_1(a\mid e)}{p_A(a\mid e)},
\]
and analogously $R_{U,0}^{\text{conn}}\le R_{E,0}\,R_{V,0}$ with
$R_{E,0}:=\max_e p_A(e)/p_0(e)$ and
$R_{V,0}:=\max_{e,a} p_A(a\mid e)/p_0(a\mid e)$.

With the ego shift set to one (Assumption~\ref{assump:common-ego-type-bias}),
only the alter comparison within ego type remains, and therefore
$R_{U,1}^{\text{conn}}\le R_{V,1}$ and
$R_{U,0}^{\text{conn}}\le R_{V,0}$.
For each ego type $e$,
write $R_y^{\text{CDE}}(e)$ for the outcome-risk variation of
$r_y^{\text{conn}}$ over alter types given $E=e$, and write
$R_{V,1}^{\text{conn}}(e)=\max_a p_1(a\mid e)/p_A(a\mid e)$ and
$R_{V,0}^{\text{conn}}(e)=\max_a p_A(a\mid e)/p_0(a\mid e)$ for the
within-ego versions of the alter shifts above.
The ego-centric factor is the worst-case within-ego bounding-factor product
\[
BF_{\text{CDE,conn}}^{\text{ego}}
:=
\left[
\max_e B\{R_1^{\text{CDE}}(e),R_{V,1}^{\text{conn}}(e)\}
\right]
\left[
\max_e B\{R_0^{\text{CDE}}(e),R_{V,0}^{\text{conn}}(e)\}
\right].
\]

\subsection{Proofs of the bounds}

\begin{proof}[Proof of Lemma~\ref{lem:bounding-factor}]
This is the bounding factor of \cite{ding2016sensitivity}, used in the
selection-bias bounds of \cite[Results~1A and~5A]{smith2019bounding}.
For completeness we record the
argument. Let $M(s)=\sum_u r(u)s(u)$, $m=\min_u r(u)$, and
$\Gamma=R_U=\max_u p(u)/q(u)$. Common support gives $q(u)>0$ on $\mathcal U$,
so each ratio $p(u)/q(u)$ is well-defined. By definition of the maximum,
\[
\frac{p(u)}{q(u)}\le\Gamma
\qquad\text{for every }u\in\mathcal U,
\]
and therefore $p(u)\le\Gamma q(u)$. If $\Gamma=1$, then $p=q$ and
$M(p)/M(q)=1=B(R_Y,1)$. Now take $\Gamma>1$, and define
\[
h(u)=q(u)-\frac{p(u)}{\Gamma}.
\]
The inequality $p(u)\le\Gamma q(u)$ is exactly $h(u)\ge 0$. Summing and using
that $p$ and $q$ are probability distributions,
\[
\sum_u h(u)
=
\sum_u q(u)-\frac{1}{\Gamma}\sum_u p(u)
=
1-\frac{1}{\Gamma}.
\]
Substitute $q(u)=p(u)/\Gamma+h(u)$ into $M(q)$:
\[
M(q)
=
\sum_u r(u)\frac{p(u)}{\Gamma}+\sum_u r(u)h(u)
=
\frac{M(p)}{\Gamma}+\sum_u r(u)h(u).
\]
Because $r(u)\ge m$ and $h$ is nonnegative,
\[
\sum_u r(u)h(u)
\ge
m\sum_u h(u)
=
m\left(1-\frac{1}{\Gamma}\right),
\]
and therefore
\[
M(q)
\ge
\frac{M(p)}{\Gamma}+m\left(1-\frac{1}{\Gamma}\right).
\]
Write $x=M(p)/m$. Then $m\le M(p)\le m R_Y$, so $1\le x\le R_Y$. Both sides of
the displayed lower bound are positive, and rearranging gives
\[
\frac{M(p)}{M(q)}
\le
\frac{M(p)}{M(p)/\Gamma+m(1-1/\Gamma)}
=
\frac{\Gamma x}{x+\Gamma-1}.
\]
Rewrite the right-hand side as
\[
\frac{\Gamma x}{x+\Gamma-1}
=
\Gamma-\frac{\Gamma(\Gamma-1)}{x+\Gamma-1}.
\]
The subtracted term is decreasing in $x$ for $\Gamma>1$, so the right-hand
side is increasing in $x$. Combined with $x\le R_Y$,
\[
\frac{M(p)}{M(q)}
\le
\frac{\Gamma x}{x+\Gamma-1}
\le
\frac{\Gamma R_Y}{R_Y+\Gamma-1}
=B(R_Y,R_U).
\]

For sharpness, take two risk levels $m$ and $m R_Y$, let $p$ concentrate
arbitrarily closely on the high-risk level, and let the residual measure $h$
concentrate on the low-risk level. Arbitrarily small positive masses preserve
common support, so the displayed constant is approached in the closure.
\end{proof}

\begin{proof}[Proof of Proposition~\ref{prop:conn-bound}]
With $M_y^{\mathrm{conn}}(q)=\sum_u r_y^{\mathrm{conn}}(u)q(u)$,
Assumptions~\ref{assump:causal-bridge} and
\ref{assump:background-mean-bridge} give
\[
\frac{\text{RR}^{\text{obs}}_{\text{conn}}}{\text{CRR}^{\text{conn}}_{\text{CDE}}}
=
\left[\frac{M_1^{\mathrm{conn}}(p_1)}{M_1^{\mathrm{conn}}(p_A)}\right]
\left[\frac{M_0^{\mathrm{conn}}(p_A)}{M_0^{\mathrm{conn}}(p_0)}\right].
\]
Applying Lemma~\ref{lem:bounding-factor} to the first bracket with
$r=r_1^{\mathrm{conn}}$,
$p=p_1$, $q=p_A$ gives a bound of $B(R_1^{\text{CDE}},R_{U,1}^{\text{conn}})$, and
to the second bracket with $r=r_0^{\mathrm{conn}}$, $p=p_A$, $q=p_0$ a bound of
$B(R_0^{\text{CDE}},R_{U,0}^{\text{conn}})$. Their product is $BF_{\text{CDE,conn}}$,
so the forward bias is at most $BF_{\text{CDE,conn}}$. Hence
$\text{RR}^{\text{obs}}_{\text{conn}}/BF_{\text{CDE,conn}}\le
\text{CRR}^{\text{conn}}_{\text{CDE}}$.
\end{proof}

\begin{proof}[Proof of Proposition~\ref{prop:direct-fc-bound}]
Assumption~\ref{assump:forced-contact-transport} equates
$r_y^{\mathrm{conn}}$ and $r_y^{\mathrm{fc}}$. The forward bias factorization
is therefore identical to that in the proof of
Proposition~\ref{prop:conn-bound}, with the full dyad distribution $p$ in place
of $p_A$. The two brackets are bounded by
$B(R_1^{\text{CDE}},R_{U,1}^{\text{fc,dir}})$ and
$B(R_0^{\text{CDE}},R_{U,0}^{\text{fc,dir}})$, whose product is
$BF_{\text{CDE,fc}}^{\text{dir}}$. Rearranging gives the stated lower bound.
\end{proof}

\begin{proof}[Proof of Proposition~\ref{prop:composed-fc-bound}]
We have: 
\[
R_{U,1}^{\text{fc,dir}}
=\max_u\frac{p_1(u)}{p(u)}
\le
\Big(\max_u\frac{p_1(u)}{p_A(u)}\Big)\Big(\max_u\frac{p_A(u)}{p(u)}\Big)
=R_{U,1}^{\text{conn}}R_{U,1}^{\text{sel}}
=R_{U,1}^{\text{fc}},
\]
and likewise $R_{U,0}^{\text{fc,dir}}\le R_{U,0}^{\text{fc}}$. Since $B(a,\cdot)$
is increasing, $BF_{\text{CDE,fc}}^{\text{dir}}\le BF_{\text{CDE,fc}}$.
Because $BF_{\text{CDE,fc}}^{\text{dir}}$ bounds the forward bias by
Proposition~\ref{prop:direct-fc-bound}, the larger $BF_{\text{CDE,fc}}$ also
gives a valid lower bound.

For the upper ordering, fix $y$ and write $a=R_y^{\text{CDE}}$,
$b=R_{U,y}^{\text{conn}}$, $c=R_{U,y}^{\text{sel}}$, all at least $1$. A direct
computation gives the identity
\[
a(a+bc-1)-(a+b-1)(a+c-1)=(a-1)(b-1)(c-1)\ge 0,
\]
which rearranges to $B(a,bc)\le B(a,b)\,B(a,c)$. Taking the product over
$y\in\{0,1\}$ yields $BF_{\text{CDE,fc}}\le BF_{\text{CDE,conn}}BF_{\text{CDE,sel}}$.
\end{proof}

\begin{proof}[Proof of Proposition~\ref{prop:ego-bounds}]
Decompose each connected distribution shift into its ego and alter components,
$p_y(e,a)=p_y(e)\,p_y(a\mid e)$, as in
Appendix~\ref{app:ego-construction}. Under
Assumption~\ref{assump:common-ego-type-bias} the ego marginals are common across
the connected exposed, connected unexposed, and naturally connected populations,
so the ego shifts satisfy $R_{E,1}=R_{E,0}=1$ and the bias is a mixture over ego
types with common weights. A ratio of two such common-weight mixtures is bounded
by the largest within-ego-type ratio, so it suffices to bound the bias separately
within each ego type $e$ and then maximize over $e$. Within each $e$,
Lemma~\ref{lem:bounding-factor} applied to the two standardization ratios bounds
the within-ego bias by
$B\{R_1^{\text{CDE}}(e),R_{V,1}^{\text{conn}}(e)\}
B\{R_0^{\text{CDE}}(e),R_{V,0}^{\text{conn}}(e)\}$ for the connected target, and
taking the maximum
over $e$ of each factor yields a bound uniform in the ego type, giving
$BF_{\text{CDE,conn}}^{\text{ego}}$ and
hence the stated lower bound on $\text{CRR}^{\text{conn}}_{\text{CDE}}$.
\end{proof}

\subsection{Actor-sharing inference}\label{app:dyadic-inference}

This subsection supplies the technical details for
Theorem~\ref{cor:dyadic-lower-limit} and
Corollary~\ref{cor:dyadic-fc-lower-limit}. It constructs the centered pair
scores, states the dependency-graph rate conditions, derives the log-risk-ratio
linearization, and formalizes variance expansion and studentization in two
lemmas. The central-limit and inclusion--exclusion studentization arguments
follow \cite{tabord_meehan_2019}; the actor-sum branch of the studentization
lemma records the conservative modification used in
Theorem~\ref{cor:dyadic-lower-limit}. The subsection then exhibits the
mean-heterogeneity obstruction to inclusion--exclusion studentization and
proves the two actor-sharing results.

We condition throughout on $\mathcal F_n$. Let $d=\{g,h\}$ index a retained
unordered pair and $a\in\mathcal R_d$ its directed rows. Write
\[
O_{da}=Y_i^t,\qquad Z_{da}=Y_j^{t-1},\qquad C_{da}=c_{ijt}.
\]
For $y\in\{0,1\}$ and $c\in\mathcal C$, define the fixed exposure count and
random outcome total
\[
Q_{dyc}=\sum_{a\in\mathcal R_d}I(C_{da}=c,Z_{da}=y),\qquad
M_{dyc}=\sum_{a\in\mathcal R_d}I(C_{da}=c,Z_{da}=y)O_{da}.
\]
Let $Q_{yc}=\sum_dQ_{dyc}$, $M_{yc}=\sum_dM_{dyc}$, and let the common
connected-frame weights $w_c$ be fixed. Then
\[
\widehat r_{yc}=\frac{M_{yc}}{Q_{yc}},\qquad
\widehat R_y=\sum_cw_c\widehat r_{yc},\qquad
\widehat\ell_n=\log\widehat R_1-\log\widehat R_0.
\]
Set
\[
m_{dyc}=E(M_{dyc}\mid\mathcal F_n),\quad
m_{yc}=\sum_dm_{dyc},\quad
r_{yc}=\frac{m_{yc}}{Q_{yc}},\quad
R_y=\sum_cw_cr_{yc},
\]
and let
\[
\ell_n=\log R_1-\log R_0
=\log\text{RR}^{\mathrm{obs}}_n(\mathcal F_n).
\]
This target is the ratio of conditional mean risks under repeated second-wave
innovations on the realized design.

Let $m_n=|\mathcal D_n|$ and define the arm errors
\[
\delta_{y,n}:=\widehat R_y-R_y.
\]
On the event $\widehat R_0>0$ and $\widehat R_1>0$, Taylor's theorem gives,
for each $y\in\{0,1\}$,
\[
\log\widehat R_y-\log R_y
=
\frac{\delta_{y,n}}{R_y}
-\frac{\delta_{y,n}^2}{2\widetilde R_y^2},
\]
where $\widetilde R_y$ lies between $\widehat R_y$ and $R_y$. Subtracting
the expansion for the unexposed arm from that for the exposed arm yields
\[
\widehat\ell_n-\ell_n
=
\frac{\delta_{1,n}}{R_1}
-\frac{\delta_{0,n}}{R_0}
+\mathcal R_n,
\qquad
\mathcal R_n
=
-\frac{\delta_{1,n}^2}{2\widetilde R_1^2}
+\frac{\delta_{0,n}^2}{2\widetilde R_0^2}.
\]
Because the counts and weights are fixed, each arm error has the exact
pairwise decomposition
\begin{align*}
\delta_{y,n}
&=
\sum_c\frac{w_c}{Q_{yc}}\{M_{yc}-m_{yc}\}\\
&=
\sum_d\sum_c\frac{w_c}{Q_{yc}}
\{M_{dyc}-m_{dyc}\}.
\end{align*}
Substituting these decompositions into the linear term of the Taylor
expansion gives
\begin{align*}
\frac{\delta_{1,n}}{R_1}
-\frac{\delta_{0,n}}{R_0}
&=
\sum_d
\left(
\sum_c\frac{w_c\{M_{d1c}-m_{d1c}\}}{Q_{1c}R_1}
-
\sum_c\frac{w_c\{M_{d0c}-m_{d0c}\}}{Q_{0c}R_0}
\right).
\end{align*}

This expression motivates defining the centered scaled arm and
log-risk-ratio contributions
\[
u_{y,d}
=m_n\sum_c\frac{w_c}{Q_{yc}}\{M_{dyc}-m_{dyc}\},
\qquad
S_{n,d}=\frac{u_{1,d}}{R_1}-\frac{u_{0,d}}{R_0}.
\]
With these definitions, the Taylor linear term is
\[
\frac{\delta_{1,n}}{R_1}
-\frac{\delta_{0,n}}{R_0}
=\frac{1}{m_n}\sum_dS_{n,d}.
\]
Let
\[
\sigma_n^2
=\operatorname{Var}\!\left(\sum_dS_{n,d}\mid\mathcal F_n\right).
\]

To state the conditions controlling the linear term and remainder, write
$d\sim e$ when pairs share an actor, including $d=e$, and define
\[
\Delta_n
=1+\max_d|\{e\ne d:e\sim d\}|.
\]
Uniform positivity means that, for some $\epsilon>0$ and all sufficiently
large $n$,
\[
\min_{y,c}\frac{Q_{yc}}{m_n}\geq\epsilon,
\qquad
\min_yR_y\geq\epsilon.
\]
Together with bounded pair size and finite $\mathcal C$, these inequalities
make the scaled pair scores $S_{n,d}$ uniformly bounded. We require $\sigma_n^2>0$
eventually. The pair scores form an actor-sharing dependency graph on
$\mathcal D_n$: nonadjacent pairs share no actor, so
Assumption~\ref{assump:actor-sharing-sampling} makes them independent given
$\mathcal F_n$. Theorem~\ref{thm:dependency-graph-clt} below applies
\cite[Theorem~2]{janson_1988} to $\sum_d S_{n,d}$ on that graph. The
remaining rate conditions are, for some integer $k\ge3$,
\begin{equation}\label{eq:dyadic-rate-conditions-new}
\frac{\Delta_n}{m_n}\to0,\qquad
\frac{\sigma_n}{m_n}\to0,\qquad
\left(\frac{m_n}{\Delta_n}\right)^{1/k}
\frac{\Delta_n}{\sigma_n}\to0,\qquad
\frac{m_n\Delta_n^3}{\sigma_n^4}\to0.
\end{equation}
The first rate and the third rate, together with bounded summands, are the
Janson hypotheses in the form of
\cite[Condition 2.1]{tabord_meehan_2019}. The second rate is used for the
logarithm remainder below. The fourth rate is used only later, for
variance estimation.
These rate conditions hold in a bounded-degree regime $\Delta_n=O(1)$,
$m_n\to\infty$, $\sigma_n^2\asymp m_n$, and in a dense regime
$m_n\asymp n^2$, $\Delta_n\asymp n$, $\sigma_n^2\asymp n^3$, where
$a_n\asymp b_n$ means that $a_n/b_n$ is bounded away from $0$ and
$\infty$.

Controlling the arm errors requires one additional covariance condition on
the fixed-design cell totals.

\begin{assumption}[Cell-total covariance bound]
\label{assump:cell-total-covariance}
Let
\[
\boldsymbol\xi_n
=
(M_{yc}-m_{yc})_{y\in\{0,1\},c\in\mathcal C}
\]
denote the vector of centered cell totals in the realized design. There
exists a constant $C_\xi<\infty$, independent of $n$, such that eventually
\[
\lambda_{\max}\!\Big(
\operatorname{Cov}(\boldsymbol\xi_n\mid\mathcal F_n)
\Big)
\le C_\xi\sigma_n^2.
\]
\end{assumption}

\begin{lemma}[Arm and cell-rate errors]
\label{lem:arm-error-rate}
Under uniform positivity and
Assumption~\ref{assump:cell-total-covariance},
\[
\delta_{y,n}=O_p\!\left(\frac{\sigma_n}{m_n}\right),
\qquad
\widehat r_{yc}-r_{yc}=O_p\!\left(\frac{\sigma_n}{m_n}\right),
\]
jointly over $y\in\{0,1\}$ and $c\in\mathcal C$.
\end{lemma}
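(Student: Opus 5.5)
The argument is a second-moment bound via Chebyshev's inequality, carried out conditionally on $\mathcal F_n$. Everything it uses is fixed given the design: the counts $Q_{yc}$, the weights $w_c$ and the means $m_{yc}$. The only randomness is in the centered cell totals $\boldsymbol\xi_n$, whose conditional mean is zero by construction.

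\textbf{Cell rates.} Fix $(y,c)$. Then
\[
\widehat r_{yc}-r_{yc}=\frac{M_{yc}-m_{yc}}{Q_{yc}}=\frac{e_{yc}^\top\boldsymbol\xi_n}{Q_{yc}},
\]
where $e_{yc}$ is the coordinate vector for cell $(y,c)$. Its conditional variance satisfies
\[
\operatorname{Var}(\widehat r_{yc}-r_{yc}\mid\mathcal F_n)
=\frac{e_{yc}^\top\operatorname{Cov}(\boldsymbol\xi_n\mid\mathcal F_n)e_{yc}}{Q_{yc}^2}
\le \frac{\lambda_{\max}\{\operatorname{Cov}(\boldsymbol\xi_n\mid\mathcal F_n)\}}{Q_{yc}^2}
\le \frac{C_\xi\sigma_n^2}{\epsilon^2m_n^2}.
\]
The first inequality is the Rayleigh-quotient bound. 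The second combines Assumption~\ref{assump:cell-total-covariance} with uniform positivity, $Q_{yc}\ge\epsilon m_n$. Conditional Chebyshev then gives, for any $K>0$,
\[
P\bigl(|\widehat r_{yc}-r_{yc}|>K\sigma_n/m_n\mid\mathcal F_n\bigr)\le \frac{C_\xi}{\epsilon^2K^2},
\]
uniformly in $n$ once these conditions hold. This is the $O_p(\sigma_n/m_n)$ statement, conditionally on $\mathcal F_n$; taking expectations gives it unconditionally as well.

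\textbf{Arm errors.} The arm error is a fixed linear combination of the cell errors:
\[
\delta_{y,n}=\sum_c w_c(\widehat r_{yc}-r_{yc})=a_y^\top\boldsymbol\xi_n,
\qquad (a_y)_{y'c}=\mathbf 1\{y'=y\}\,\frac{w_c}{Q_{yc}}.
\]
Because $\sum_c w_c=1$ and $Q_{yc}\ge\epsilon m_n$, we have $\|a_y\|_2^2\le \sum_c w_c^2/(\epsilon m_n)^2\le 1/(\epsilon m_n)^2$. Hence
\[
\operatorname{Var}(\delta_{y,n}\mid\mathcal F_n)\le \|a_y\|^2\lambda_{\max}\le \frac{C_\xi\sigma_n^2}{\epsilon^2m_n^2},
\]
and the same Chebyshev argument applies. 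Alternatively, the arm bound follows from the triangle inequality over the finitely many cells.

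\textbf{Joint statement.} The index set $\{0,1\}\times\mathcal C$ is finite and fixed. A union bound over its at most $2|\mathcal C|+2$ events shows that the maximum of all the scaled errors $(m_n/\sigma_n)|\cdot|$ is tight. This tightness is exactly the joint $O_p$ claim.

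\textbf{Expected obstacle.} The only delicate point is the bookkeeping. The bound must be uniform in $n$, which is why uniform positivity is applied at the level of $Q_{yc}/m_n$. The covariance condition must be used through $\lambda_{\max}$, so that no assumption about individual cell variances or cross-cell correlations is needed. Beyond that, the argument is a routine second-moment bound.
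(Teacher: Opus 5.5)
Your proposal is correct and follows the paper's proof essentially step for step. Both write $\delta_{y,n}$ and $\widehat r_{yc}-r_{yc}$ as fixed linear functionals of the centered cell totals, bound their conditional variances by $\|\mathbf a\|^2\lambda_{\max}\{\operatorname{Cov}(\boldsymbol\xi_n\mid\mathcal F_n)\}\le C\sigma_n^2/m_n^2$ using uniform positivity and Assumption~\ref{assump:cell-total-covariance}, and conclude with Chebyshev. Two bookkeeping differences are minor refinements of the same argument: you state the union bound over the finite index set explicitly, and you zero-pad $a_y$ into the full vector, which avoids the paper's step of passing from the subvector $\boldsymbol\xi_{y,n}$ to $\lambda_{\max}$ of the full covariance.
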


\begin{proof}
Fix $y\in\{0,1\}$. The decomposition above gives
\[
\delta_{y,n}
=
\sum_{c\in\mathcal C}\frac{w_c}{Q_{yc}}\{M_{yc}-m_{yc}\}
=
\mathbf a_{y,n}^\top\boldsymbol\xi_{y,n},
\]
where
\[
\mathbf a_{y,n}
=
\Big(\frac{w_c}{Q_{yc}}\Big)_{c\in\mathcal C},
\qquad
\boldsymbol\xi_{y,n}
=
(M_{yc}-m_{yc})_{c\in\mathcal C}.
\]
Uniform positivity and bounded weights imply
$\|\mathbf a_{y,n}\|^2\le C_a/m_n^2$ for a constant $C_a$ that does not
depend on $n$. Hence
\begin{align*}
\operatorname{Var}(\delta_{y,n}\mid\mathcal F_n)
&=
\mathbf a_{y,n}^\top
\operatorname{Cov}(\boldsymbol\xi_{y,n}\mid\mathcal F_n)
\mathbf a_{y,n}\\
&\le
\|\mathbf a_{y,n}\|^2
\lambda_{\max}\!\Big(
\operatorname{Cov}(\boldsymbol\xi_n\mid\mathcal F_n)
\Big)\\
&\le
\frac{C_aC_\xi\sigma_n^2}{m_n^2}.
\end{align*}
Chebyshev's inequality therefore yields
$\delta_{y,n}=O_p(\sigma_n/m_n)$.

For each $c\in\mathcal C$,
\[
\widehat r_{yc}-r_{yc}
=
\frac{M_{yc}-r_{yc}Q_{yc}}{Q_{yc}}
=
\frac{M_{yc}-m_{yc}}{Q_{yc}},
\]
because $r_{yc}=m_{yc}/Q_{yc}$ under the fixed design. The entry
$M_{yc}-m_{yc}$ is one coordinate of $\boldsymbol\xi_n$, so its conditional
variance is at most
$\lambda_{\max}(\operatorname{Cov}(\boldsymbol\xi_n\mid\mathcal F_n))
\le C_\xi\sigma_n^2$. Dividing by $Q_{yc}\ge\epsilon m_n$ gives
\[
\operatorname{Var}(\widehat r_{yc}-r_{yc}\mid\mathcal F_n)
\le
\frac{C_\xi\sigma_n^2}{\epsilon^2m_n^2},
\]
and another application of Chebyshev yields
$\widehat r_{yc}-r_{yc}=O_p(\sigma_n/m_n)$.
\end{proof}

\begin{corollary}[Fixed-design log-risk-ratio linearization]
\label{cor:log-rr-linearization}
Under Lemma~\ref{lem:arm-error-rate} and the second rate in
\eqref{eq:dyadic-rate-conditions-new},
\begin{equation}\label{eq:fixed-design-linearization}
\widehat\ell_n-\ell_n
=\frac{1}{m_n}\sum_dS_{n,d}
+o_p\!\left(\frac{\sigma_n}{m_n}\right).
\end{equation}
\end{corollary}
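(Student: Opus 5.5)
The approach is to work directly from the exact Taylor expansion displayed before Lemma~\ref{lem:arm-error-rate}:
\[
\widehat\ell_n-\ell_n=\frac{\delta_{1,n}}{R_1}-\frac{\delta_{0,n}}{R_0}+\mathcal R_n ,
\]
which holds on the event $\{\widehat R_0>0,\widehat R_1>0\}$. The linear term has already been rewritten exactly as $m_n^{-1}\sum_d S_{n,d}$ through the pairwise decomposition of $\delta_{y,n}$. What remains is two things: show that the event on which the expansion is valid has conditional probability tending to one, and show that $\mathcal R_n=o_p(\sigma_n/m_n)$.

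\textbf{Step 1: the expansion holds with probability tending to one.} Lemma~\ref{lem:arm-error-rate} gives $\delta_{y,n}=O_p(\sigma_n/m_n)$. The second rate condition in \eqref{eq:dyadic-rate-conditions-new} gives $\sigma_n/m_n\to0$, so $\delta_{y,n}=o_p(1)$. Uniform positivity gives $R_y\ge\epsilon$. Hence
\[
P\bigl(\widehat R_y\ge\epsilon/2,\ y=0,1\mid\mathcal F_n\bigr)\to1 .
\]
On this event the expansion is valid, and every intermediate point satisfies $\widetilde R_y\ge\epsilon/2$, because $\widetilde R_y$ lies between $\widehat R_y$ and $R_y$. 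Off the event, any definition of $\widehat\ell_n$ only changes the statement on a set of vanishing probability. This is harmless for an $o_p$ statement.

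\textbf{Step 2: the remainder is negligible.} On the good event,
\[
|\mathcal R_n|\le\frac{2}{\epsilon^2}\bigl(\delta_{1,n}^2+\delta_{0,n}^2\bigr)=O_p\!\left(\frac{\sigma_n^2}{m_n^2}\right)=\frac{\sigma_n}{m_n}\,O_p\!\left(\frac{\sigma_n}{m_n}\right).
\]
By the second rate, $O_p(\sigma_n/m_n)$ is itself $o_p(1)$. Therefore $\mathcal R_n=o_p(\sigma_n/m_n)$. Combining this with the exact identity for the linear term gives \eqref{eq:fixed-design-linearization}.

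The only delicate point is the bookkeeping in Step~1. The remainder must be controlled jointly with the event on which Taylor's theorem applies, with a deterministic lower bound on $\widetilde R_y$ rather than an $o_p$ one. The conditioning on $\mathcal F_n$ also has to be carried through, so that every $O_p$ statement is conditional. Both follow directly from the conditional Chebyshev bounds in Lemma~\ref{lem:arm-error-rate}.
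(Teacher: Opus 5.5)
Your proof is correct and follows the paper's argument essentially step for step. You use $\delta_{y,n}=O_p(\sigma_n/m_n)$, $\sigma_n/m_n\to0$ and $R_y\ge\epsilon$ to keep $\widetilde R_y\ge\epsilon/2$ with probability tending to one, bound the quadratic remainder by $O_p(\sigma_n^2/m_n^2)=o_p(\sigma_n/m_n)$, and combine this with the exact identity for the linear term; your explicit treatment of the event $\{\widehat R_0>0,\widehat R_1>0\}$ is just a slightly more careful version of the same bookkeeping.
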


\begin{proof}
Lemma~\ref{lem:arm-error-rate} gives
$\delta_{y,n}=O_p(\sigma_n/m_n)$ for each arm. Because
$\sigma_n/m_n\to0$ and $R_y\geq\epsilon$, with probability tending to one
both intermediate values $\widetilde R_y$ in the Taylor expansion are
bounded below by $\epsilon/2$. Consequently,
\[
|\mathcal R_n|
=
O_p\!\left(\frac{\sigma_n^2}{m_n^2}\right)
=
o_p\!\left(\frac{\sigma_n}{m_n}\right).
\]
The result follows on combining this remainder bound with the exact identity
\[
\frac{\delta_{1,n}}{R_1}
-\frac{\delta_{0,n}}{R_0}
=\frac{1}{m_n}\sum_dS_{n,d}.
\]
\end{proof}

\begin{theorem}[Conditional dependency-graph central limit theorem]
\label{thm:dependency-graph-clt}
Under Assumption~\ref{assump:actor-sharing-sampling}, uniform positivity,
bounded pair size, finite $\mathcal C$, and $\sigma_n^2>0$ eventually,
suppose that for some integer $k\ge3$ the first and third rates in
\eqref{eq:dyadic-rate-conditions-new} hold. Then
\[
\frac{1}{\sigma_n}\sum_{d\in\mathcal D_n}S_{n,d}
\ \xrightarrow{d}\ N(0,1)
\qquad\text{conditionally on }\mathcal F_n.
\]
\end{theorem}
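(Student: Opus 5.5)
The plan is to reduce the statement to Janson's dependency-graph theorem \cite[Theorem~2]{janson_1988}, in the bounded-summand form used by \cite[Proposition~3.1]{tabord_meehan_2019}. Work conditionally on $\mathcal F_n$ throughout; every quantity other than the second-wave outcomes is then fixed.

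First, I will verify that $\{S_{n,d}\}_{d\in\mathcal D_n}$ are mean-zero and uniformly bounded. Mean zero holds because $E(M_{dyc}\mid\mathcal F_n)=m_{dyc}$ by definition. For boundedness, each directed row contributes an outcome in $\{0,1\}$ and each unordered pair has at most two directed rows, so $|M_{dyc}-m_{dyc}|\le 2$. Uniform positivity gives $m_n w_c/Q_{yc}\le 1/\epsilon$ and $1/R_y\le 1/\epsilon$. Since $\mathcal C$ is finite, it follows that $|S_{n,d}|\le B$ for a constant $B$ not depending on $n$ or $d$.

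Second, I will show that the graph on $\mathcal D_n$ with $d\sim e$ iff the two pairs share an actor is a dependency graph, conditionally on $\mathcal F_n$. Take two disjoint collections $\mathcal D,\mathcal D'$ with no edges between them. Then no pair in $\mathcal D$ shares an actor with any pair in $\mathcal D'$, so Assumption~\ref{assump:actor-sharing-sampling} gives $(W_d)_{d\in\mathcal D}\indep(W_e)_{e\in\mathcal D'}\mid\mathcal F_n$. Each $S_{n,d}$ is a measurable function of $W_d$ together with $\mathcal F_n$-measurable quantities, namely $Q_{dyc}$, $w_c$, $Q_{yc}$, $m_{dyc}$ and $R_y$. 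Hence the same independence transfers to the scores. The maximal degree of this graph is $\Delta_n-1$.

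Third, I will apply Janson's theorem. For bounded summands and a dependency graph with $N=m_n$ vertices and maximal degree $D=\Delta_n$, the sufficient condition is that there exists an integer $k\ge3$ with $(N/D)^{1/k}\,D\,B/\sigma_n\to0$. With $B$ constant this is exactly the third rate in \eqref{eq:dyadic-rate-conditions-new}. The first rate $\Delta_n/m_n\to0$ ensures that $N/D\to\infty$, which matches the form of the condition in \cite[Condition~2.1]{tabord_meehan_2019}. Janson's theorem then yields $\sigma_n^{-1}\sum_dS_{n,d}\to N(0,1)$, because $\sigma_n^2$ is exactly the conditional variance of the sum and is positive eventually. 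Since the whole argument is conducted under the conditional law given $\mathcal F_n$ along the triangular array, the convergence is conditional on $\mathcal F_n$.

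The main obstacle is bookkeeping rather than anything deep. The substantive step is confirming that each score depends only on $W_d$ and frame-fixed quantities. In particular, it must not depend on the estimated $\widehat r_{yc}$, which is what distinguishes the population-centered $S_{n,d}$ from the empirical $\widehat\phi_d$. I will also need to check carefully that Janson's degree convention ($D$ counting the vertex itself, hence $1+$ neighbours) agrees with the definition of $\Delta_n$.
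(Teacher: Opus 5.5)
Your proposal is correct and matches the paper's proof step for step. Both arguments establish uniform boundedness and conditional mean zero of $S_{n,d}$, build the actor-sharing dependency graph from Assumption~\ref{assump:actor-sharing-sampling}, and apply Janson's Theorem~2 under the first and third rates; you supply more of the measurability and constant-tracking detail than the paper does. Your concern about the degree convention is harmless: $(m_n/D)^{1/k}D=m_n^{1/k}D^{1-1/k}$ is nondecreasing in $D$, so the rate stated with $\Delta_n$ implies Janson's condition with the actual maximal degree $\Delta_n-1$ (or with $1$ when the graph has no edges).
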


\begin{proof}
The centered scores $\{S_{n,d}:d\in\mathcal D_n\}$ are uniformly bounded
under uniform positivity and bounded pair size. Pairs that share no actor
are conditionally independent given $\mathcal F_n$ by
Assumption~\ref{assump:actor-sharing-sampling}, so they form a dependency
graph with maximum degree $\Delta_n$. Because
$E(S_{n,d}\mid\mathcal F_n)=0$ and
$\operatorname{Var}(\sum_dS_{n,d}\mid\mathcal F_n)=\sigma_n^2$,
\cite[Theorem~2]{janson_1988} applies to
$\sum_dS_{n,d}/\sigma_n$ once the first and third rates in
\eqref{eq:dyadic-rate-conditions-new} are imposed, as in
\cite[Propositions 3.1 and 3.2]{tabord_meehan_2019}.
\end{proof}

\begin{corollary}[Fixed-design log-risk-ratio central limit theorem]
\label{cor:fixed-design-clt}
Under Corollary~\ref{cor:log-rr-linearization} and
Theorem~\ref{thm:dependency-graph-clt},
\begin{equation}\label{eq:fixed-design-clt}
\frac{\widehat\ell_n-\ell_n}{\sigma_n/m_n}
\ \xrightarrow{d}\ N(0,1)
\qquad\text{conditionally on }\mathcal F_n.
\end{equation}
\end{corollary}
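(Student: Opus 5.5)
The plan is a direct Slutsky argument that combines the two preceding results. First, I would start from the linearization \eqref{eq:fixed-design-linearization} of Corollary~\ref{cor:log-rr-linearization} and rescale both sides by $m_n/\sigma_n$. Since $\sigma_n>0$ eventually, this gives
\[
\frac{\widehat\ell_n-\ell_n}{\sigma_n/m_n}
=\frac{1}{\sigma_n}\sum_{d\in\mathcal D_n}S_{n,d}+\frac{m_n}{\sigma_n}\,o_p\!\left(\frac{\sigma_n}{m_n}\right)
=\frac{1}{\sigma_n}\sum_{d\in\mathcal D_n}S_{n,d}+o_p(1).
\]
The remainder statement is exactly what the corollary delivers. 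It rests on Lemma~\ref{lem:arm-error-rate}, which uses uniform positivity and Assumption~\ref{assump:cell-total-covariance}, together with the second rate in \eqref{eq:dyadic-rate-conditions-new}. Those hypotheses are therefore inherited here without further work.

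Second, I would invoke Theorem~\ref{thm:dependency-graph-clt}. Under Assumption~\ref{assump:actor-sharing-sampling}, bounded pair size, finite $\mathcal C$, and the first and third rates, it gives $\sigma_n^{-1}\sum_dS_{n,d}\to N(0,1)$ in distribution, conditionally on $\mathcal F_n$. Slutsky's lemma then transfers this limit to the left-hand side, because the additive perturbation is $o_p(1)$.

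The one point that needs care is the meaning of ``conditionally on $\mathcal F_n$''. Throughout the appendix, all probabilities are computed under the conditional law given the realized design $\mathcal F_n$. The design is held fixed along the triangular array, so the only randomness is in the second-wave outcomes. Both the $o_p$ statement (which comes from conditional Chebyshev bounds in Lemma~\ref{lem:arm-error-rate}) and the central limit theorem are statements about this same sequence of conditional laws. Slutsky's lemma therefore applies verbatim to that sequence, with no need to integrate over designs.

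I would also note that the Taylor expansion behind the linearization is used only on the event $\{\widehat R_0>0,\widehat R_1>0\}$. That event has conditional probability tending to one, so modifying the statistic off it does not affect the distributional limit. With this bookkeeping made explicit, \eqref{eq:fixed-design-clt} follows immediately; I expect no real obstacle beyond it.
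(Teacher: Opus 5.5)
Your proposal is correct and follows the paper's proof: rescale the linearization \eqref{eq:fixed-design-linearization} by $m_n/\sigma_n$ so the remainder becomes $o_p(1)$, apply Theorem~\ref{thm:dependency-graph-clt} to $\sigma_n^{-1}\sum_d S_{n,d}$, and conclude by Slutsky's theorem. Your remarks that everything is under the conditional law given $\mathcal F_n$, and that the positivity event $\{\widehat R_0>0,\widehat R_1>0\}$ has probability tending to one, are consistent with the paper's argument and make explicit what it leaves implicit.
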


\begin{proof}
Corollary~\ref{cor:log-rr-linearization} gives
\[
\widehat\ell_n-\ell_n
=\frac{1}{m_n}\sum_dS_{n,d}
+o_p\!\left(\frac{\sigma_n}{m_n}\right).
\]
Theorem~\ref{thm:dependency-graph-clt} gives
\[
\frac{1}{\sigma_n/m_n}\cdot\frac{1}{m_n}\sum_dS_{n,d}
=
\frac{1}{\sigma_n}\sum_dS_{n,d}
\ \xrightarrow{d}\ N(0,1).
\]
Slutsky's theorem completes the proof.
\end{proof}

Next, we consider studentization. The empirical fixed-design arm and pair
scores are
\[
\widehat u_{y,d}
=m_n\sum_c\frac{w_c}{Q_{yc}}
\{M_{dyc}-\widehat r_{yc}Q_{dyc}\},
\qquad
\widehat S_{n,d}
=\frac{\widehat u_{1,d}}{\widehat R_1}
-\frac{\widehat u_{0,d}}{\widehat R_0},
\]
and $\widehat\phi_d=\widehat S_{n,d}/m_n$. These scores sum to zero.
To expose the centering problem, define
\[
b_{y,d}
=m_n\sum_c\frac{w_c}{Q_{yc}}\{m_{dyc}-r_{yc}Q_{dyc}\},
\qquad
a_{n,d}=\frac{b_{1,d}}{R_1}-\frac{b_{0,d}}{R_0}.
\]
Although $\sum_da_{n,d}=0$, $a_{n,d}$ need not vanish pair by pair.

Variance estimation likewise aggregates pair scores to actors because the
dependency graph is indexed by actor overlap: pairs are adjacent when they
share an actor, and conditionally independent otherwise.
Let $B_n$ be the actor--pair incidence matrix,
$(B_n)_{g,d}=I(g\in d)$, and put
\[
K_n=B_n^\top B_n-I_{m_n}.
\]
Thus $(K_n)_{de}=I(d\sim e)$. The inclusion--exclusion form is the
dyadic-robust quadratic form of \cite{tabord_meehan_2019}. In scaled
notation, the inclusion--exclusion and actor-sum quadratic forms are
\[
\widehat V_n^{\mathrm{IE}}
=\widehat{\mathbf S}_n^\top K_n\widehat{\mathbf S}_n,
\qquad
\widehat V_n^{\mathrm{AS}}
=\widehat{\mathbf S}_n^\top B_n^\top B_n\widehat{\mathbf S}_n
=\sum_g\left(\sum_{d:g\in d}\widehat S_{n,d}\right)^2.
\]
Their corresponding log-risk-ratio variances divide by $m_n^2$.

\begin{lemma}[Empirical-score variance expansions]
\label{lem:empirical-score-variance-expansions}
Under Assumptions~\ref{assump:actor-sharing-sampling} and
\ref{assump:cell-total-covariance}, bounded pair size, finite $\mathcal C$,
and the rates in \eqref{eq:dyadic-rate-conditions-new}, uniformly in $d$,
\begin{equation}\label{eq:empirical-score-approx}
\widehat S_{n,d}=S_{n,d}+a_{n,d}
+O_p\!\left(\frac{\sigma_n}{m_n}\right).
\end{equation}
The variance estimators satisfy
\begin{align}
\widehat V_n^{\mathrm{IE}}
&=\sigma_n^2+\mathbf a_n^\top K_n\mathbf a_n+o_p(\sigma_n^2),
\label{eq:ie-expansion}\\
\widehat V_n^{\mathrm{AS}}
&=\sigma_n^2+\sum_dE(S_{n,d}^2\mid\mathcal F_n)
+\mathbf a_n^\top B_n^\top B_n\mathbf a_n
+o_p(\sigma_n^2).
\label{eq:as-expansion}
\end{align}
\end{lemma}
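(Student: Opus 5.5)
First I would establish the score expansion \eqref{eq:empirical-score-approx}. Subtracting the definitions gives
\[
\widehat u_{y,d}-u_{y,d}
=m_n\sum_c\frac{w_c}{Q_{yc}}\{m_{dyc}-\widehat r_{yc}Q_{dyc}\}
=b_{y,d}-m_n\sum_c\frac{w_cQ_{dyc}}{Q_{yc}}(\widehat r_{yc}-r_{yc}).
\]
Uniform positivity gives $m_n/Q_{yc}\le1/\epsilon$, bounded pair size bounds $Q_{dyc}$, and $\mathcal C$ is finite. Lemma~\ref{lem:arm-error-rate} then makes the last sum $O_p(\sigma_n/m_n)$, uniformly in $d$ because the random factor $\widehat r_{yc}-r_{yc}$ does not depend on $d$. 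Next I would replace $1/\widehat R_y$ by $1/R_y$. This costs $(u_{y,d}+b_{y,d})\,O_p(\delta_{y,n})$, and since $u_{y,d}$ and $b_{y,d}$ are uniformly bounded while $\delta_{y,n}=O_p(\sigma_n/m_n)$, this error is again $O_p(\sigma_n/m_n)$ uniformly. Writing $\widehat S_{n,d}=S_{n,d}+a_{n,d}+\eta_{n,d}$, the key structural point is that $\eta_{n,d}$ is a finite linear combination, with deterministic bounded coefficients $c_{d,k}$, of finitely many global random scalars $\theta_{n,k}=O_p(\sigma_n/m_n)$: the quantities $\widehat r_{yc}-r_{yc}$, $\widehat R_y^{-1}-R_y^{-1}$, and their products with bounded terms.

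For \eqref{eq:ie-expansion} I would expand the quadratic form as
\[
\widehat V_n^{\mathrm{IE}}=(\mathbf S+\mathbf a+\boldsymbol\eta)^\top K_n(\mathbf S+\mathbf a+\boldsymbol\eta).
\]
\begin{itemize}
\item The term $\mathbf S^\top K_n\mathbf S$ equals $\sigma_n^2+o_p(\sigma_n^2)$. Its conditional mean is exactly $\sigma_n^2$, because $K_n$ has zero diagonal and nonadjacent pairs are independent. Its variance is controlled as in \cite[Theorem 3.1]{tabord_meehan_2019}: a fourth-moment count over dependency-graph neighborhoods gives $O(m_n\Delta_n^3)$, which is $o(\sigma_n^4)$ by the fourth rate.
\item The cross term $2\mathbf a^\top K_n\mathbf S$ has mean zero and variance of order $\|K_n\mathbf a\|^2\Delta_n\lesssim m_n\Delta_n^3$, so it is also $o_p(\sigma_n^2)$.
\item Terms involving $\boldsymbol\eta$ are handled through the global-scalar structure. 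For example, $\boldsymbol\eta^\top K_n\mathbf S=\sum_k\theta_{n,k}\,\mathbf c_k^\top K_n\mathbf S$, where $\mathbf c_k^\top K_n\mathbf S$ is a mean-zero sum with standard deviation $O(\sqrt{m_n}\Delta_n)$. The term is therefore $O_p(\sigma_n\Delta_n/\sqrt{m_n})$, which is $o(\sigma_n^2)$ since $m_n\Delta_n^3/\sigma_n^4\to0$ implies $\Delta_n/\sqrt{m_n}\ll\sigma_n$.
\item Terms like $\boldsymbol\eta^\top K_n\mathbf a$ and $\boldsymbol\eta^\top K_n\boldsymbol\eta$ are bounded by $O_p(\sigma_n/m_n)\,m_n\Delta_n=O_p(\sigma_n\Delta_n)$ and $O_p(\sigma_n^2\Delta_n/m_n)$. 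Both are $o_p(\sigma_n^2)$, using $\Delta_n/m_n\to0$ and $\Delta_n\le (m_n\Delta_n^3)^{1/4}/\ldots$; more precisely $\Delta_n^2\le m_n\Delta_n^3/m_n$ gives $\Delta_n=o(\sigma_n)$ from the fourth rate.
\end{itemize}
After these steps only $\sigma_n^2+\mathbf a^\top K_n\mathbf a$ survives.

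For \eqref{eq:as-expansion} I would use $B_n^\top B_n=K_n+I_{m_n}$. This adds the diagonal piece $\sum_d\widehat S_{n,d}^2$, which I would expand the same way.
\begin{itemize}
\item $\sum_dS_{n,d}^2=\sum_dE(S_{n,d}^2\mid\mathcal F_n)+O_p(\sqrt{m_n\Delta_n})$, by a dependency-graph variance bound on bounded summands, and the error is $o(\sigma_n^2)$.
\item The cross term $2\sum_dS_{n,d}a_{n,d}$ is $O_p(\sqrt{m_n\Delta_n})$.
\item The diagonal contribution $\sum_da_{n,d}^2$ combines with $\mathbf a^\top K_n\mathbf a$ to give $\mathbf a^\top B_n^\top B_n\mathbf a$.
\item Terms involving $\boldsymbol\eta$ are at most $O_p(\sigma_n)+O_p(\sigma_n^2/m_n)$, again negligible.
\end{itemize}

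The main obstacle is the uniformity and the $\boldsymbol\eta$ terms. A naive bound gives $\sum_{d\sim e}|\eta_d||S_e|=O_p(\sigma_n\Delta_n)$, and that is not obviously $o(\sigma_n^2)$ unless $\Delta_n=o(\sigma_n)$. So I would rely on the fourth rate to derive $\Delta_n=o(\sigma_n)$, and on the finite global-scalar factorization of $\boldsymbol\eta$ to turn products like $\boldsymbol\eta^\top K_n\mathbf S$ into centered sums multiplied by single $O_p$ scalars.
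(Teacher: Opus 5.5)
Your route is the paper's. You expand $\widehat S_{n,d}$ around $S_{n,d}+a_{n,d}$ using Lemma~\ref{lem:arm-error-rate}; your remark that $\widehat r_{yc}-r_{yc}$ and $\delta_{y,n}$ do not depend on $d$ is exactly what gives uniformity. You then control $\mathbf S_n^\top K_n\mathbf S_n$ by its conditional mean plus Chebyshev with variance $O(m_n\Delta_n^3)=o(\sigma_n^4)$, dispose of $\mathbf a_n^\top K_n\mathbf S_n$ by the same counting, and pass to the actor-sum form through $B_n^\top B_n=K_n+I_{m_n}$. Your separate concentration argument for $\sum_dS_{n,d}^2$ is an explicit version of the paper's one-line treatment of the extra diagonal. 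Two statements, however, need correcting.

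\textbf{The diagonal of $K_n$.} $K_n$ does not have zero diagonal. Each pair bundle has two actors, so $(B_n^\top B_n)_{dd}=2$ and $(K_n)_{dd}=1$; that is, $(K_n)_{de}=I(d\sim e)$ with $d\sim d$. The unit diagonal is precisely why
$E(\mathbf S_n^\top K_n\mathbf S_n\mid\mathcal F_n)=\sum_{d\sim e}\operatorname{Cov}(S_{n,d},S_{n,e}\mid\mathcal F_n)=\sigma_n^2$.
With a zero diagonal the mean would be $\sigma_n^2-\sum_dE(S_{n,d}^2\mid\mathcal F_n)$, and your identity $B_n^\top B_n=K_n+I_{m_n}$ would be false. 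Your actor-sum step is right only because $K_n$ actually carries the diagonal.

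\textbf{The factorization of $\boldsymbol\eta$.} The claim that $\boldsymbol\eta$ is a combination of global scalars with deterministic coefficients is wrong. Replacing $1/\widehat R_y$ by $1/R_y$ leaves the term $u_{y,d}(\widehat R_y^{-1}-R_y^{-1})$. Its coefficient $u_{y,d}$ is the random centered pair score, so $\boldsymbol\eta^\top K_n\mathbf S_n$ contains a non-centered quadratic piece $\theta_n\mathbf u_y^\top K_n\mathbf S_n$, not a mean-zero linear form.

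You do not need the factorization. Since $\Delta_n\le m_n$, the fourth rate gives $\Delta_n^4\le m_n\Delta_n^3=o(\sigma_n^4)$, so $\Delta_n=o(\sigma_n)$. Then the crude bound
$|\boldsymbol\eta^\top K_n(\mathbf S_n+\mathbf a_n)|\le\|\boldsymbol\eta\|_\infty\,m_n\Delta_n\max_d|S_{n,d}+a_{n,d}|=O_p(\sigma_n\Delta_n)=o_p(\sigma_n^2)$
handles the cross terms. Together with $\boldsymbol\eta^\top K_n\boldsymbol\eta=O_p(\sigma_n^2\Delta_n/m_n)$, this disposes of every $\boldsymbol\eta$ term. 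This is exactly the paper's argument, with error ratios $O_p(\Delta_n/\sigma_n)$ and $O_p(\Delta_n/m_n)$. Drop the factorization and rely on the sup-norm bound you already set up.
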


\begin{proof}
The difference \(\widehat S_{n,d}-S_{n,d}-a_{n,d}\) is the change from
replacing each \(r_{yc}\) by \(\widehat r_{yc}\) and each \(R_y\) by
\(\widehat R_y\). Lemma~\ref{lem:arm-error-rate} gives
\(\widehat r_{yc}-r_{yc}=O_p(\sigma_n/m_n)\) and
\(\delta_{y,n}=O_p(\sigma_n/m_n)\). Pair-level cell counts are bounded and
\(m_n/Q_{yc}=O(1)\), so this change is \(O_p(\sigma_n/m_n)\) uniformly in
\(d\). This is \eqref{eq:empirical-score-approx}.

Write \(\widehat{\mathbf S}_n=\mathbf S_n+\mathbf a_n+\mathbf e_n\) with
\(\|\mathbf e_n\|_\infty=O_p(\sigma_n/m_n)\). Each row of \(K_n\) has at most
\(\Delta_n\) nonzero entries, so
\[
\widehat{\mathbf S}_n^\top K_n\widehat{\mathbf S}_n
-(\mathbf S_n+\mathbf a_n)^\top K_n(\mathbf S_n+\mathbf a_n)
=o_p(\sigma_n^2),
\]
because the leading error ratios are \(O_p(\Delta_n/\sigma_n)\) and
\(O_p(\Delta_n/m_n)\). The same bound holds with \(B_n^\top B_n=K_n+I_{m_n}\)
in place of \(K_n\).

Nonadjacent pairs are conditionally independent given \(\mathcal F_n\), so
\[
E(\mathbf S_n^\top K_n\mathbf S_n\mid\mathcal F_n)=\sigma_n^2.
\]
The form \(\mathbf S_n^\top K_n\mathbf S_n\) is a sum of \(O(m_n\Delta_n)\)
bounded products, and each product is dependent on at most \(O(\Delta_n^2)\)
others. The last rate in \eqref{eq:dyadic-rate-conditions-new} therefore
gives
\[
\operatorname{Var}(\mathbf S_n^\top K_n\mathbf S_n\mid\mathcal F_n)
=O(m_n\Delta_n^3)=o(\sigma_n^4).
\]
Chebyshev's inequality yields
\(\mathbf S_n^\top K_n\mathbf S_n=\sigma_n^2+o_p(\sigma_n^2)\). The
coefficients of \(K_n\mathbf a_n\) are \(O(\Delta_n)\), and the same
counting gives \(\mathbf a_n^\top K_n\mathbf S_n=o_p(\sigma_n^2)\). Expanding
\[
(\mathbf S_n+\mathbf a_n)^\top K_n(\mathbf S_n+\mathbf a_n)
=\mathbf S_n^\top K_n\mathbf S_n
+2\mathbf a_n^\top K_n\mathbf S_n
+\mathbf a_n^\top K_n\mathbf a_n
\]
proves \eqref{eq:ie-expansion}.

Because \(B_n^\top B_n=K_n+I_{m_n}\),
\[
E(\mathbf S_n^\top B_n^\top B_n\mathbf S_n\mid\mathcal F_n)
=\sigma_n^2+\sum_dE(S_{n,d}^2\mid\mathcal F_n),
\]
and the analogous remainder and cross term are again \(o_p(\sigma_n^2)\).
This is \eqref{eq:as-expansion}. The mean-square counting is that of
\cite[Propositions 3.3--3.5]{tabord_meehan_2019}.
\end{proof}

The term $\mathbf a_n^\top K_n\mathbf a_n$ has no fixed sign. For example,
three pair bundles forming a path have
\[
K=
\begin{pmatrix}
1&1&0\\
1&1&1\\
0&1&1
\end{pmatrix},
\qquad
\mathbf a=(1,-2,1)^\top,
\qquad
\mathbf a^\top K\mathbf a=-2.
\]
Repeating suitably scaled copies gives bounded triangular arrays for which the
inclusion--exclusion estimator is asymptotically anti-conservative. Exact
studentization therefore requires the additional condition
\begin{equation}\tag{MH}\label{eq:mean-heterogeneity-condition}
\frac{|\mathbf a_n^\top K_n\mathbf a_n|}{\sigma_n^2}\longrightarrow0.
\end{equation}
By contrast, $B_n^\top B_n$ is positive semidefinite, so both additional
terms in \eqref{eq:as-expansion} are nonnegative.

\begin{lemma}[Fixed-design studentization]
\label{lem:fixed-design-studentization}
Suppose the conditions of Corollary~\ref{cor:fixed-design-clt} and
Lemma~\ref{lem:empirical-score-variance-expansions} hold. Then, conditionally
on $\mathcal F_n$, the following statements hold.
\begin{enumerate}
\item If \eqref{eq:mean-heterogeneity-condition} holds, then
\[
\frac{\widehat V_n^{\mathrm{IE}}}{\sigma_n^2}\xrightarrow{p}1,
\qquad
\frac{\widehat\ell_n-\ell_n}
{\sqrt{\widehat V_n^{\mathrm{IE}}}/m_n}
\xrightarrow{d}N(0,1).
\]
\item Without \eqref{eq:mean-heterogeneity-condition},
\[
\frac{\widehat V_n^{\mathrm{AS}}}{\sigma_n^2}\geq1+o_p(1),
\]
and, for every $0<\alpha<1/2$,
\[
\liminf_{n\to\infty}
P\!\left\{
\ell_n\geq
\widehat\ell_n-z_{1-\alpha}
\frac{\sqrt{\widehat V_n^{\mathrm{AS}}}}{m_n}
\;\middle|\;\mathcal F_n
\right\}
\geq1-\alpha.
\]
\end{enumerate}
\end{lemma}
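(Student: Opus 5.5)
Both parts follow by combining the expansions of Lemma~\ref{lem:empirical-score-variance-expansions} with the conditional CLT of Corollary~\ref{cor:fixed-design-clt} and Slutsky's theorem. The two parts differ only in how the extra terms in \eqref{eq:ie-expansion} and \eqref{eq:as-expansion} are handled.

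For part 1, divide \eqref{eq:ie-expansion} by $\sigma_n^2$. This gives $\widehat V_n^{\mathrm{IE}}/\sigma_n^2=1+\mathbf a_n^\top K_n\mathbf a_n/\sigma_n^2+o_p(1)$. Condition \eqref{eq:mean-heterogeneity-condition} kills the middle term, so the ratio tends to one in probability. Then
\[
\frac{\widehat\ell_n-\ell_n}{\sqrt{\widehat V_n^{\mathrm{IE}}}/m_n}
=\frac{\widehat\ell_n-\ell_n}{\sigma_n/m_n}\cdot\left(\frac{\sigma_n^2}{\widehat V_n^{\mathrm{IE}}}\right)^{1/2}.
\]
The first factor converges to $N(0,1)$ by \eqref{eq:fixed-design-clt}. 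The second factor tends to one by the continuous mapping theorem, since the limit is positive. Slutsky's theorem finishes part 1, with all statements conditional on $\mathcal F_n$.

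For part 2, divide \eqref{eq:as-expansion} by $\sigma_n^2$. The term $\sum_dE(S_{n,d}^2\mid\mathcal F_n)$ is a sum of conditional second moments, so it is nonnegative. The term $\mathbf a_n^\top B_n^\top B_n\mathbf a_n=\|B_n\mathbf a_n\|^2$ is also nonnegative. Both are $\mathcal F_n$-measurable, so no randomness enters them. Hence $\widehat V_n^{\mathrm{AS}}/\sigma_n^2=1+D_n+o_p(1)$ with deterministic $D_n\ge0$, which gives the stated inequality. For coverage, write $T_n=(\widehat\ell_n-\ell_n)/(\sigma_n/m_n)$ and $\widehat\tau_n=\sqrt{\widehat V_n^{\mathrm{AS}}}/\sigma_n$. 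The coverage event is $T_n\le z_{1-\alpha}\widehat\tau_n$. Fix $\eta>0$. On the event $\{\widehat\tau_n\ge1-\eta\}$, whose probability tends to one, we have $z_{1-\alpha}\widehat\tau_n\ge z_{1-\alpha}(1-\eta)$, using $z_{1-\alpha}>0$ because $\alpha<1/2$. This is the step where the positivity of the critical value matters. It follows that
\[
P(T_n\le z_{1-\alpha}\widehat\tau_n\mid\mathcal F_n)\ge P(T_n\le z_{1-\alpha}(1-\eta)\mid\mathcal F_n)-o(1).
\]
By \eqref{eq:fixed-design-clt} and continuity of $\Phi$, the liminf is at least $\Phi(z_{1-\alpha}(1-\eta))$. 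Letting $\eta\downarrow0$ gives $1-\alpha$.

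The main obstacle is that $\widehat\tau_n$ need not converge: $D_n$ may oscillate or diverge, so Slutsky's theorem cannot be applied directly in part 2. The one-sided truncation argument above avoids this, because only a lower bound on $\widehat\tau_n$ is needed and that bound holds with probability tending to one. A second point to check is that the $o_p(\sigma_n^2)$ remainders are relative to $\sigma_n^2$ and not to $\sigma_n^2(1+D_n)$. This causes no difficulty, because a large $D_n$ only helps the inequality.
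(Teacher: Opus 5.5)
Your proposal is correct and follows the paper's proof essentially step for step. In part~1 you divide the inclusion--exclusion expansion by $\sigma_n^2$, remove the $\mathbf a_n^\top K_n\mathbf a_n$ term using \eqref{eq:mean-heterogeneity-condition}, and apply Slutsky. In part~2 you use the nonnegativity of the extra actor-sum terms (positive semidefiniteness of $B_n^\top B_n$) and the same $\eta$-truncation on $\{Z_n\le z_{1-\alpha}(1-\eta),\ c_n\ge 1-\eta\}$, then let $\eta\downarrow 0$. Your side remarks are accurate refinements that the paper leaves implicit: $z_{1-\alpha}>0$ is where $\alpha<1/2$ enters, and a remainder relative to $\sigma_n^2(1+D_n)$ would still suffice.
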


\begin{proof}
Under \eqref{eq:mean-heterogeneity-condition},
\eqref{eq:ie-expansion} gives
$\widehat V_n^{\mathrm{IE}}/\sigma_n^2\to_p1$. The first studentized limit
then follows from Corollary~\ref{cor:fixed-design-clt} and Slutsky's theorem.
This is the dyadic-robust studentization of
\cite[Theorem 3.1]{tabord_meehan_2019}, applied to the linearized
log-risk-ratio scores once
\eqref{eq:mean-heterogeneity-condition} removes the extra term.

For the actor-sum estimator, which is a conservative modification of the
dyadic-robust form and is not claimed by \cite{tabord_meehan_2019},
\eqref{eq:as-expansion} and positive
semidefiniteness of $B_n^\top B_n$ give
$\widehat V_n^{\mathrm{AS}}/\sigma_n^2\geq1+o_p(1)$. Let
\[
Z_n=\frac{\widehat\ell_n-\ell_n}{\sigma_n/m_n},
\qquad
c_n=\frac{\sqrt{\widehat V_n^{\mathrm{AS}}}}{\sigma_n}.
\]
Corollary~\ref{cor:fixed-design-clt} gives
$Z_n\xrightarrow{d}N(0,1)$, while the variance inequality gives
$c_n\geq1-o_p(1)$. For every fixed $\eta\in(0,1)$, the event
\[
\{Z_n\leq z_{1-\alpha}(1-\eta),\ c_n\geq1-\eta\}
\]
is contained in $\{Z_n\leq z_{1-\alpha}c_n\}$ with probability tending to
one. Taking the limit inferior and then letting $\eta\downarrow0$ proves the
one-sided coverage claim.
\end{proof}

The implementation uses
\[
\widehat\sigma_{\log\mathrm{RR},\mathrm{AS}}^2
=\widehat V_n^{\mathrm{AS}}/m_n^2
=\sum_g\widehat\Phi_g^2.
\]
Equivalently, the actor-sum variance is the inclusion--exclusion variance
plus a sum of squares:
\[
\widehat\sigma_{\log\mathrm{RR},\mathrm{AS}}^2
-\widehat\sigma_{\log\mathrm{RR},\mathrm{IE}}^2
=\sum_d\widehat\phi_d^2\ge0.
\]
\begin{proof}[Proof of Theorem~\ref{cor:dyadic-lower-limit}]
Lemma~\ref{lem:fixed-design-studentization} supplies a conservative
conditional lower limit for $\ell_n$ and hence, after exponentiation, for
$\text{RR}^{\mathrm{obs}}_n(\mathcal F_n)$. Under
Assumption~\ref{assump:finite-frame-bridge}, the common-weight argument in
Section~\ref{sec:cde-estimation} applies to the latent-type distributions and
conditional potential risks in the fixed frame and gives
\[
\text{CRR}^{\mathrm{conn}}_n(\mathcal F_n)
\ge
\frac{\text{RR}^{\mathrm{obs}}_n(\mathcal F_n)}
{BF_{\text{CDE,conn},n}}.
\]
Divide the observed-risk-ratio lower limit by the fixed factor $BF$ and apply
this deterministic inequality.
\end{proof}

\begin{proof}[Proof of Corollary~\ref{cor:dyadic-fc-lower-limit}]
The stochastic argument in Theorem~\ref{cor:dyadic-lower-limit} is unchanged:
it gives a conservative conditional lower limit for
$\text{RR}^{\mathrm{obs}}_n(\mathcal F_n)$. Assumptions
\ref{assump:finite-frame-bridge} and
\ref{assump:finite-frame-fc-transport} give
\[
\text{CRR}^{\mathrm{fc}}_n(\mathcal F_n)
\geq
\frac{\text{RR}^{\mathrm{obs}}_n(\mathcal F_n)}
{BF_{\text{CDE,fc},n}}.
\]
Because $BF\geq BF_{\text{CDE,fc},n}$, the observed-risk coverage event is
contained in
$\{\text{CRR}^{\mathrm{fc}}_n(\mathcal F_n)
\geq\underline L_{1-\alpha}^{\mathrm{AS}}(BF)\}$, which proves the result.
\end{proof}

\subsection{Ego-cluster bootstrap}\label{app:ego-bootstrap-proof}

\begin{proof}[Proof of Theorem~\ref{cor:ego-bootstrap-lower-limit}]
All probabilities and expectations in this proof condition on the fixed
eligible ego frame. Set
\[
X_{n,i}=a_n^{-1}T_{n,i},\qquad
\bar X_n=n_E^{-1}\sum_{i\in\mathcal E_n}X_{n,i},\qquad
\mu_n=E(X_{n,i}),\qquad
\Sigma_n=\operatorname{Var}(X_{n,i}).
\]
The common-weight log risk ratio is a smooth, scale-invariant function $h$
of the cluster means, so $\widehat\ell=h(\bar X_n)$ and the population
centering is $\ell_n=h(\mu_n)$. On samples outside the positive domain of
$h$, assign the log estimator any fixed finite value, and use the same
convention for bootstrap samples, with the risk ratio defined as its
exponential. The probabilities of these events vanish,
as shown below, so this convention does not affect the limit.

Assumption~\ref{assump:ego-cluster-regularity} gives $\mu_n\to\mu$,
$\Sigma_n\to\Sigma$, and uniformly bounded centered $(2+\delta)$ moments.
The Lindeberg condition follows from
\[
E\!\left[
\|X_{n,i}-\mu_n\|^2
I\{\|X_{n,i}-\mu_n\|>\epsilon\sqrt{n_E}\}
\right]
\leq
\frac{E\|X_{n,i}-\mu_n\|^{2+\delta}}
{(\epsilon\sqrt{n_E})^\delta}\longrightarrow0
\]
for every $\epsilon>0$. The Lindeberg--Feller central limit theorem
\cite[Theorem~2.27]{van_der_vaart_1998}, applied coordinatewise via the
Cram\'er--Wold device, and the delta method
\cite[Theorem~3.1]{van_der_vaart_1998} therefore give
\[
\sqrt{n_E}(\bar X_n-\mu_n)\xrightarrow{d}N(0,\Sigma),
\qquad
\sqrt{n_E}(\widehat\ell-\ell_n)
\xrightarrow{d}N(0,v),
\]
where $v=\nabla h(\mu)^\top\Sigma\nabla h(\mu)>0$.
Here the Taylor expansion is around $\mu_n$, with
$\nabla h(\mu_n)\to\nabla h(\mu)$; it does not replace $h(\mu_n)$ by
$h(\mu)$ in the centering.

The empirical bootstrap draws $n_E$ eligible ego blocks with replacement,
including any zero-total blocks. Let $\bar X_n^\star$ denote their mean.
The empirical covariance converges in probability to $\Sigma$ by the
triangular-array weak law, using the uniform $(2+\delta)$ moment bound.
That bound also gives
$n_E^{-1}\sum_i\|X_{n,i}-\bar X_n\|^{2+\delta}=O_p(1)$, so the
conditional Lindeberg term is at most
\[
\frac{1}{(\epsilon\sqrt{n_E})^\delta}
\frac{1}{n_E}\sum_{i\in\mathcal E_n}
\|X_{n,i}-\bar X_n\|^{2+\delta}=o_p(1).
\]
Consequently, the conditional bootstrap central limit theorem
\citep{field_welsh_2007} and the delta method
\cite[Theorem~3.1]{van_der_vaart_1998} yield
\[
\sup_x\left|
P^\star\!\left\{\sqrt{n_E}
(\widehat\ell^\star-\widehat\ell)\leq x\right\}
-\Phi(x/\sqrt v)
\right|\xrightarrow{p}0.
\]
Continuity of the nondegenerate Gaussian distribution makes this convergence
uniform in $x$. Also, $\bar X_n\to_p\mu$ and
$\bar X_n^\star-\bar X_n=o_{P^\star}(1)$ in probability. Since $\mu$ is
in the positive domain of $h$, the original sample leaves that domain with
probability $o(1)$ and the conditional bootstrap probability of doing so is
$o_p(1)$. This justifies the convention above. Discarding unsupported
bootstrap draws when defining the percentile gives the same first-order limit.

Write $z_\alpha=\Phi^{-1}(\alpha)$. The exponential map is strictly
increasing, so quantile convergence gives
\[
c_{\alpha,n}^\star
:=\sqrt{n_E}\{\log q_\alpha^\star-\widehat\ell\}
=\sqrt v\,z_\alpha+o_p(1).
\]
It follows that
\[
\begin{split}
P\!\left\{\text{RR}^{\mathrm{obs}}_{\mathrm{conn},n}
\geq q_\alpha^\star\right\}
&=P\!\left\{\sqrt{n_E}(\widehat\ell-\ell_n)
\leq-c_{\alpha,n}^\star\right\}\\
&\longrightarrow\Phi(-z_\alpha)=1-\alpha.
\end{split}
\]

Under the population law at each sufficiently large $n$,
Assumptions~\ref{assump:causal-bridge},
\ref{assump:background-mean-bridge} and the common-weight inequality in
Section~\ref{sec:cde-estimation} give
\[
\text{CRR}^{\text{conn}}_{\text{CDE},n}
\ge
\frac{\text{RR}^{\text{obs}}_{\text{conn},n}}{BF_{\text{CDE,conn},n}}.
\]
For fixed $BF\geq BF_{\text{CDE,conn},n}$, the observed-risk coverage
event is contained in
$\{\text{CRR}^{\text{conn}}_{\text{CDE},n}\geq q_\alpha^\star/BF\}$.
Taking the limit inferior proves the claim. If a fixed observed target
$\ell$ instead satisfies $\sqrt{n_E}(\ell_n-\ell)\to0$, Slutsky's theorem
gives the same centered limit at $\ell$, followed by the corresponding
fixed-population sensitivity inequality.
\end{proof}

\begin{proof}[Proof of Corollary~\ref{cor:ego-fc-bootstrap-lower-limit}]
The bootstrap argument in Theorem~\ref{cor:ego-bootstrap-lower-limit} gives
first-order lower coverage of $\text{RR}^{\mathrm{obs}}_{\text{conn},n}$ by
$q_\alpha^\star$. Under Assumptions~\ref{assump:causal-bridge},
\ref{assump:background-mean-bridge}, and
\ref{assump:forced-contact-transport}, together with the forced-contact
sensitivity restrictions under the population law at each sufficiently large
$n$,
\[
\text{CRR}^{\mathrm{fc}}_{\text{CDE},n}
\geq
\frac{\text{RR}^{\mathrm{obs}}_{\text{conn},n}}
{BF_{\text{CDE,fc},n}}.
\]
Division by fixed $BF\geq BF_{\text{CDE,fc},n}$ gives the claimed event
inclusion and hence the corollary.
\end{proof}

\section{Oracle CDE targets}\label{app:simulation-oracle-cde}

The simulation targets both controlled direct effect risk ratios in Section
\ref{sec:rr-target-overview}. For each ego-alter pair $(i,j)$, we define the
oracle intervention by setting $A_{ij}=1$ and setting $Y_j^{t-1}$ to either 1 or
0, then recomputing the ego's neighbor mean under the node-level outcome model.
Thus, for each focal dyad, the intervention changes the contribution of alter
$j$ to ego $i$'s neighborhood exposure while leaving the rest of the network and
the ego's baseline state fixed.

This intervention is intentionally dyad-by-dyad. It does not require all
possible ties to be simultaneously present in the realized network. Rather, it
asks for the risk that would be induced for ego $i$ if this particular alter
were placed in contact and assigned a prior outcome value. The effect can vary
across egos because the same intervention changes the neighbor mean by different
amounts depending on the ego's current degree. This is a feature of the
node-level contagion model rather than a nuisance: highly connected egos are
less affected by changing a single alter, while low-degree egos can be more
strongly affected.

The implemented oracle CDE is computed on the probability scale from the known
DGP parameters. For a dyad-transition row $(i,j,t)$, let
\[
S_i^{t-1}=\sum_{\ell\neq i} A_{i\ell}Y_\ell^{t-1},
\qquad
d_i=\sum_{\ell\neq i} A_{i\ell}.
\]
Under the intervention setting $A_{ij}=1$ and $Y_j^{t-1}=y$, the intervened
degree and neighbor-outcome sum are
\[
d_{i,ij}^{\star}=d_i+(1-A_{ij}),
\qquad
S_{i,ij}^{\star}(y)=S_i^{t-1}-A_{ij}Y_j^{t-1}+y.
\]
The corresponding intervened neighbor mean is
\[
\bar{Y}_{N_i,ij}^{\star}(y)
=
S_{i,ij}^{\star}(y)/d_{i,ij}^{\star}.
\]
Because the focal edge is set to one, $d_{i,ij}^{\star}\geq 1$ for all ordered
dyads. The DGP-implied interventional risk for that dyad-transition row is
\[
p_{ij}^{t,\star}(y)
=
\operatorname{expit}
\left\{
\alpha_Y+\beta_U U_i+\beta_PY_i^{t-1}
+\beta_N\bar{Y}_{N_i,ij}^{\star}(y)
\right\}.
\]

The oracle targets directly average these DGP-implied focal-component risks
over their respective background frames. Let
$\mathcal D_{A,c}=\{(i,j,t):A_{ij}=1,Y_i^{t-1}=c\}$,
$\mathcal D_c=\{(i,j,t):Y_i^{t-1}=c\}$, and let $\widehat w(c)$ be the share
of connected rows with $Y_i^{t-1}=c$. As in the estimator, we retain only
$c$ strata containing both focal-exposure arms among connected rows and
renormalize $\widehat w(c)$ over those strata. Define
\[
\mu_{y,c}^{\mathrm{conn}}
=
\frac{1}{|\mathcal D_{A,c}|}
\sum_{(i,j,t)\in\mathcal D_{A,c}}p_{ij}^{t,\star}(y),
\qquad
\mu_{y,c}^{\mathrm{fc}}
=\frac{1}{|\mathcal D_c|}
\sum_{(i,j,t)\in\mathcal D_c}p_{ij}^{t,\star}(y).
\]
The naturally connected oracle is
\begin{align*}
\text{CRR}^{\text{true,conn}}_{\text{CDE}}
&:=
\frac{
\sum_c \widehat w(c)\mu_{1,c}^{\mathrm{conn}}
}{
\sum_c \widehat w(c)\mu_{0,c}^{\mathrm{conn}}
},
\end{align*}
and the forced-contact oracle averages over all ordered-dyad backgrounds,
\begin{align*}
\text{CRR}^{\text{true,fc}}_{\text{CDE}}
&:=
\frac{
\sum_c \widehat w(c)\mu_{1,c}^{\mathrm{fc}}
}{
\sum_c \widehat w(c)\mu_{0,c}^{\mathrm{fc}}
}.
\end{align*}
Latent-type standardization is used to construct and diagnose the sensitivity
factors, not to substitute connected-background risks into the forced-contact
truth.
We compare the observed connected-dyad risk ratio to both oracle targets. For
the naturally connected target, the CDE sensitivity analysis accounts for latent
imbalance among connected dyads. For the forced-contact target, it also accounts
for the gap between naturally connected dyads and the full dyad population that
would be placed in contact under the intervention.

The simulation computes all CDE quantities in Section \ref{sec:rr-cde} as oracle
diagnostics because the latent types and DGP-implied interventional risks are
known. For each simulated panel we report the forced-contact truth
$\text{CRR}^{\text{fc}}_{\text{CDE}}$, the naturally connected truth
$\text{CRR}^{\text{conn}}_{\text{CDE}}$, and the observed connected-dyad risk
ratio $\text{RR}^{\text{obs}}_{\text{conn}}$, the last of which is itself
averaged over the retained connected-dyad distribution of $c=Y_i^{t-1}$ by
combining the within-$c$ exposed and unexposed risks with the weights
$\widehat w(c)$.

The empirical procedure takes one sensitivity factor for each target, valid
uniformly across the retained strata. For simulation diagnostics, the known
DGP supplies a single frame-level oracle factor satisfying that requirement;
its construction from within-$c$ DGP quantities is not a practitioner
specification. The connected-target bound divides
$\text{RR}^{\text{obs}}_{\text{conn}}$ by $BF_{\text{CDE,conn}}$, while the
forced-contact bound uses the corresponding composed factor
$BF_{\text{CDE,fc}}$. We also compute the one-step direct diagnostic
$BF_{\text{CDE,fc}}^{\text{dir}}$, but the plotted main forced-contact bound
uses the composed-ratio factor.

The ego-centric analysis follows Section~\ref{sec:ego-centric-bounds}. Its
oracle calculation analogously supplies one factor valid across eligible ego
types and retained measured strata.

\section{Simulation point-estimate heatmaps}\label{app:simulation-results}

This appendix reports the additional $n=250$ point-estimate heatmaps for the no
and strong latent-susceptibility scenarios. The corresponding moderate scenario
figures are shown in the main text. For each $U\to Y$ scenario, the connected
figure shows the oracle naturally connected CDE truth, observed connected-dyad
risk ratio, pooled connected lower bound, and ego-centric connected lower bound.
The replicate and empirical-support averaging convention is given in
Section~\ref{sec:simulation-study}.

\begin{figure}[H]
\centering
\includegraphics[width=\linewidth]{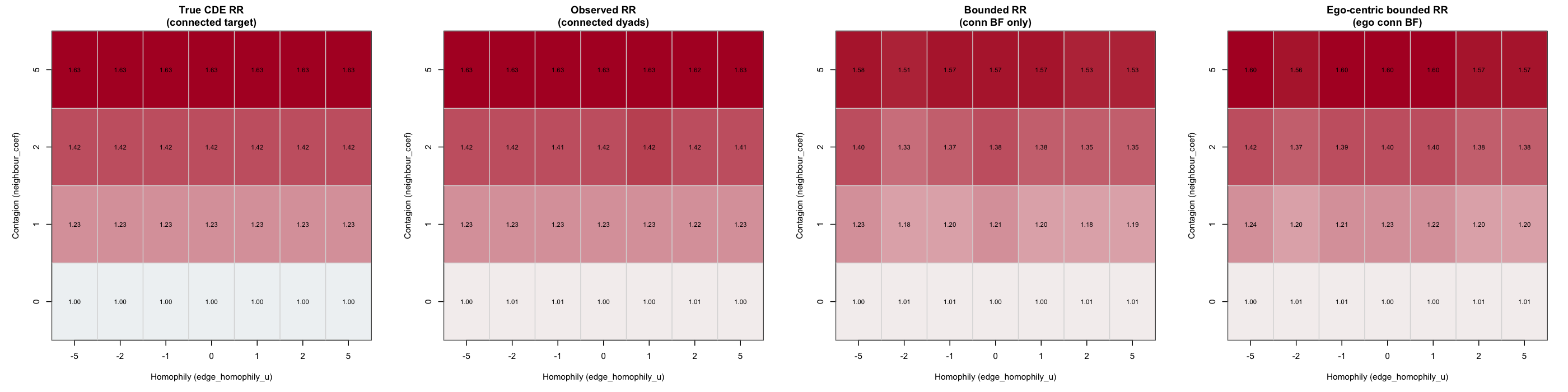}
\caption{Naturally connected CDE simulation heatmaps for no latent effect on
the outcome ($n=250$, $\beta_U=0$), with the edge intercept calibrated to
target mean degree 3. Panels show the oracle connected CDE truth, observed
connected-dyad risk ratio, pooled connected lower bound, and ego-centric
connected lower bound. Within each heatmap, the horizontal axis varies latent
homophily and the vertical axis varies contagion. All panels use a shared
diverging color scale, with white at RR${}=1$.}
\label{fig:sim-connected-four-panel-none}
\end{figure}

\begin{figure}[H]
\centering
\includegraphics[width=\linewidth]{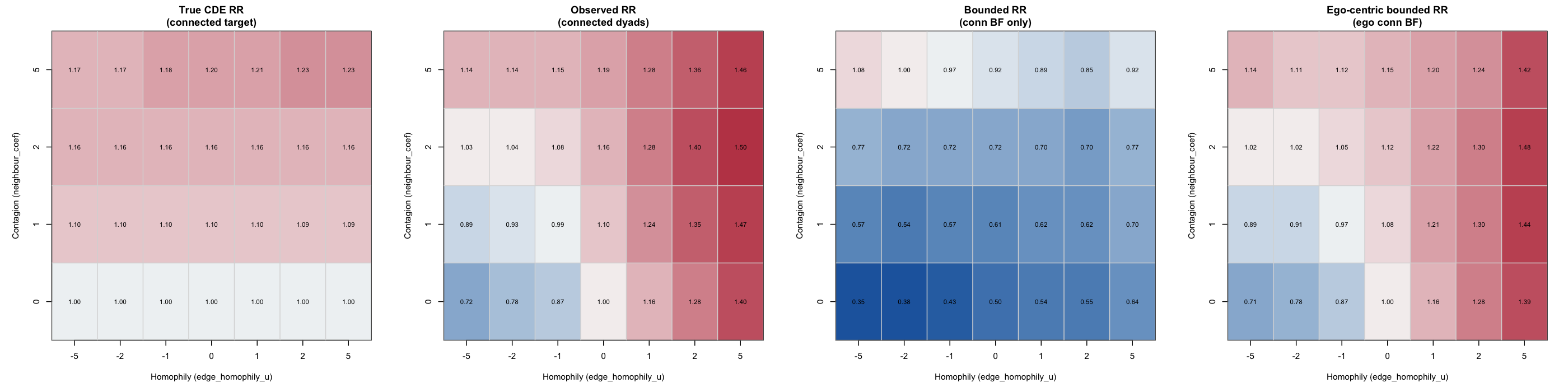}
\caption{Naturally connected CDE simulation heatmaps for a strong latent
effect on the outcome ($n=250$, $\beta_U=2$), with the edge intercept
calibrated to target mean degree 3. Panels show the oracle connected CDE truth,
observed connected-dyad risk ratio, pooled connected lower bound, and
ego-centric connected lower bound. Within each heatmap, the horizontal axis
varies latent homophily and the vertical axis varies contagion. All panels use
a shared diverging color scale, with white at RR${}=1$.}
\label{fig:sim-connected-four-panel-strong}
\end{figure}

\section{One-sided inference evaluation}
\label{app:bootstrap-inference}

The simulation evaluates fixed-design one-sided inference on the naturally
connected target. Each cell uses $n_{\text{truth}}=250$ independent truth
draws and $n_{\text{eval}}=250$ independent evaluation draws, with 250
ego-bootstrap replicates per evaluation draw. We first estimate Monte Carlo
summaries:
\[
CRR_{\text{truth,CDE}}^{\text{conn}}
=
\frac{1}{n_{\text{truth}}}
\sum_{r=1}^{n_{\text{truth}}}
\widehat{CRR}^{\text{true,conn},(r)}_{\text{CDE}},
\qquad
BF_{\text{cell}}
=
\frac{1}{|\mathcal V_{\text{truth}}|}
\sum_{r\in\mathcal V_{\text{truth}}}
\widehat{BF}_{\text{CDE,conn}}^{(r)}.
\]
Here $\mathcal V_{\text{truth}}$ contains truth draws with exact empirical
support for the connected factor. Its size is recorded for every cell.
The cell-level factor $BF_{\text{cell}}$ is then held fixed in every evaluation
replicate. It is a DGP-calibrated sensitivity restriction, not a factor
estimated from the evaluation outcomes.
Because $BF_{\text{cell}}$ averages sample oracle factors, it need not upper
bound every realized-frame factor. We therefore distinguish practical
fixed-factor rejection behavior from formal coverage diagnostics.

For each evaluation design, the known outcome model supplies the exact
conditional observed risk ratio
$\text{RR}^{\text{obs}}_n(\mathcal F_n)$ and the realized-frame connected CDE.
Observed-risk coverage compares the actor-sum lower limit directly with the
former. Causal coverage uses the replicate-specific support-valid oracle
connected factor and compares the resulting limit with the replicate-specific
connected CDE, retaining evaluations whose exact conditional oracle inequality
holds as described in Section~\ref{sec:simulation-study}. Realized outcome
noise can make an estimated point bound cross the CDE even when this oracle
inequality holds; such noisy crossings are not used to exclude evaluations.
This separates sampling calibration from finite Monte Carlo calibration of
$BF_{\text{cell}}$.

In each evaluation replicate we compute the fixed-design score, the primary
positive-semidefinite actor-sum standard error, and the $t_\kappa$
one-sided limit. The raw test uses $BF=1$; the practical sensitivity test
divides by $BF_{\text{cell}}$. We also record the sharper
inclusion--exclusion limit, its loading increment
$\sum_d\widehat\phi_d^2$, and the oracle mean-heterogeneity variance ratio.
The ego-cluster procedure resamples outgoing ego blocks, takes the one-sided
$\alpha$ percentile, and divides by the same fixed cell factor. Nonfinite or
nonpositive risk ratios are discarded; no ego-cluster limit is reported if
fewer than $80\%$ of bootstrap draws remain valid.

A method rejects $\text{CRR}^{\text{conn}}_{\text{CDE}}\le 1$ when its
one-sided lower limit exceeds one. When the cell-level truth is not above one,
that rejection is Type I error. When the truth is above one, failure to reject
is Type II error.

The simulations are intended to be read comparatively. The raw actor-sum test of
the observed risk ratio has the most power, but it is also most vulnerable to
latent homophily. The actor-sum test after dividing by the fixed
connected factor is the primary sensitivity analysis. The ego-cluster percentile is a
stronger-assumption nonparametric check: it is typically close to the actor-sum
limit, but it does not allow residual dependence through a shared alter.

The target-specific truth, error, and power figures show the oracle CDE truth in
the first column, raw actor-sum Type I and Type II error in the second
column, actor-sum-plus-fixed-factor error in the third column, and
ego-cluster-plus-fixed-factor error in the fourth column.

\begin{figure}[H]
\centering
\includegraphics[width=\linewidth]{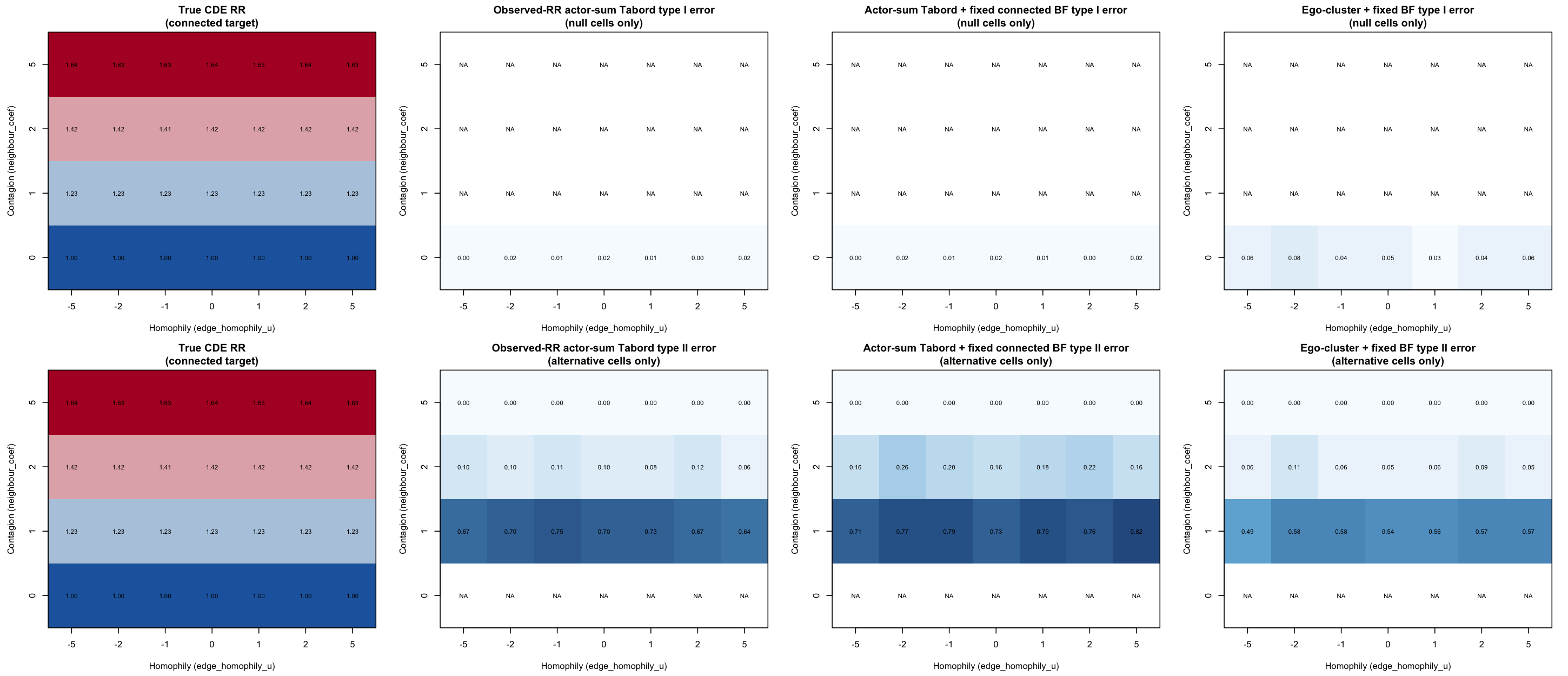}
\caption{Naturally connected CDE truth and one-sided Type I/Type II error for
no latent effect on the outcome ($n=250$, $\beta_U=0$), with the edge
intercept calibrated to target mean degree 3. Panel columns are truth, raw
actor-sum, actor-sum with a fixed connected factor, and the
ego-cluster check. The top row reports Type I error in null cells and the
bottom row reports Type II error in alternative cells. Within each heatmap,
the horizontal axis varies latent homophily and the vertical axis varies
contagion; error panels share a 0--1 color scale.}
\label{fig:sim-error-power-connected-none}
\end{figure}

\begin{figure}[H]
\centering
\includegraphics[width=\linewidth]{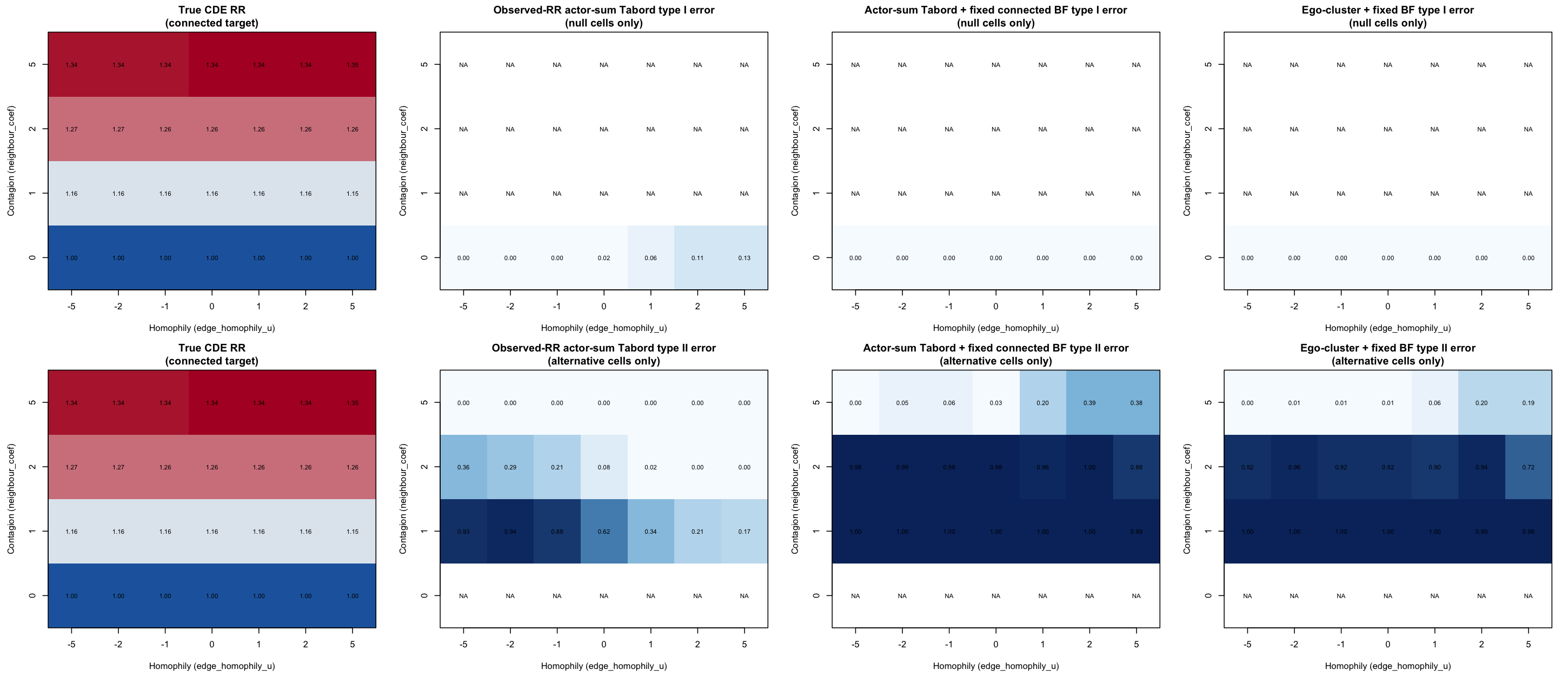}
\caption{Naturally connected CDE truth and one-sided Type I/Type II error for
a moderate latent effect on the outcome ($n=250$, $\beta_U=1$), with the edge
intercept calibrated to target mean degree 3. Panel columns are truth, raw
actor-sum, actor-sum with a fixed connected factor, and the
ego-cluster check. The top row reports Type I error in null cells and the
bottom row reports Type II error in alternative cells. Within each heatmap,
the horizontal axis varies latent homophily and the vertical axis varies
contagion; error panels share a 0--1 color scale.}
\label{fig:sim-error-power-connected-moderate}
\end{figure}

\begin{figure}[H]
\centering
\includegraphics[width=\linewidth]{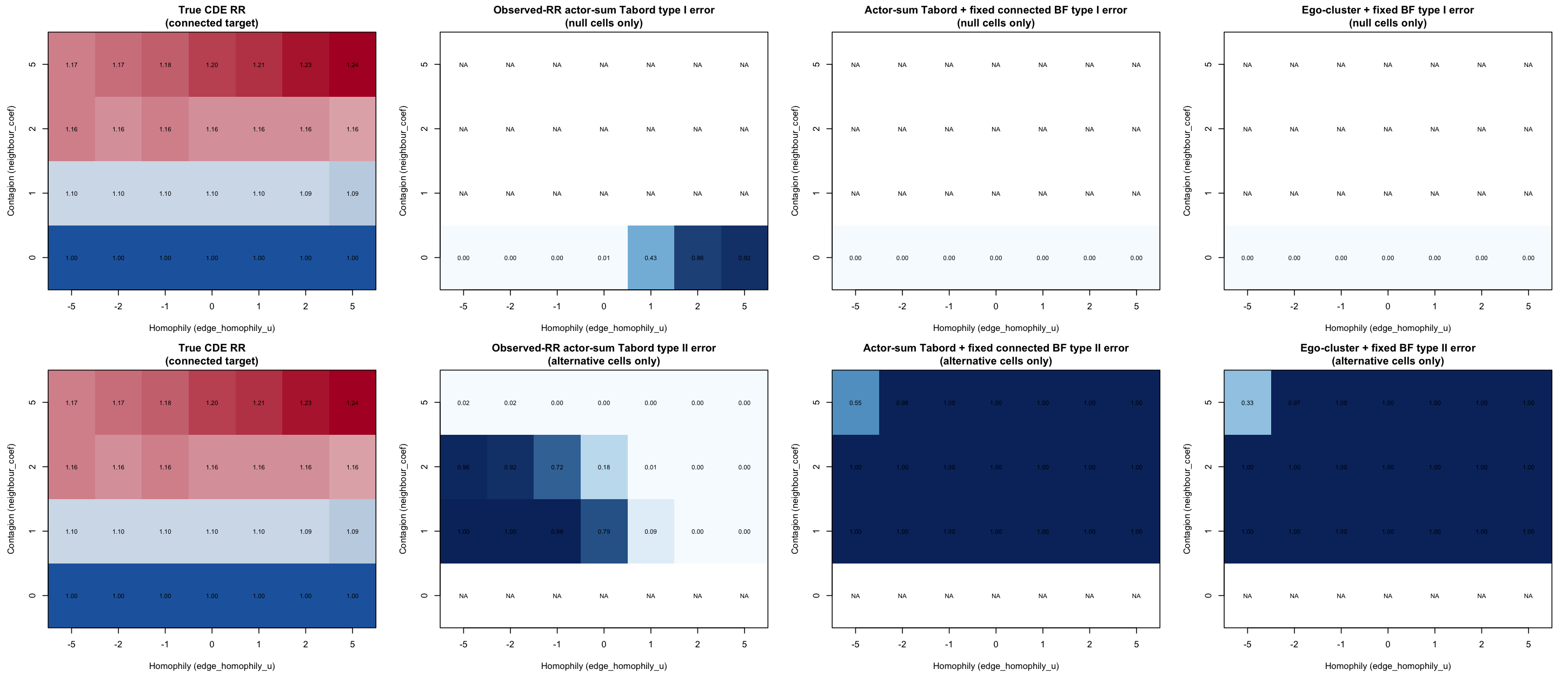}
\caption{Naturally connected CDE truth and one-sided Type I/Type II error for
a strong latent effect on the outcome ($n=250$, $\beta_U=2$), with the edge
intercept calibrated to target mean degree 3. Panel columns are truth, raw
actor-sum, actor-sum with a fixed connected factor, and the
ego-cluster check. The top row reports Type I error in null cells and the
bottom row reports Type II error in alternative cells. Within each heatmap,
the horizontal axis varies latent homophily and the vertical axis varies
contagion; error panels share a 0--1 color scale.}
\label{fig:sim-error-power-connected-strong}
\end{figure}

\section{Software reproduction}\label{app:software-reproduction}

\begin{sloppypar}
The supplementary archive \path{fixed-time-ejs-reproducibility.zip} contains
all code, frozen public inputs, and reference results for the simulations and
TARP analyses. It includes the exact \path{HomoContagioNet} source used for
these results (commit \texttt{d53cd794fcf1588799bd7a764487f1fb610d648b}).
The maintained package is at
\url{https://github.com/duncan-clark/HomoContagioNet}.

R, Git, and the R package \path{jsonlite} are required. The recorded run used
R~4.5.3 and \path{jsonlite}~2.0.0. A full rerun requires macOS or Linux,
uses 16 workers, and took about 29 minutes on a 16-core Apple M4 Max;
runtime varies with hardware.

After unzipping the archive, follow the setup commands in its
\path{README.md}, then choose either option:
\end{sloppypar}
\begin{enumerate}
\item \textbf{Check the supplied results.} Restore the reference results and
run the supplied checker. Its 15 checks verify the recorded source, inputs,
outputs, and expected file structure. This checks the archived run without
recomputing the analyses; it does not establish the statistical assumptions
or verify every number typeset in the paper.
\item \textbf{Rerun the analysis.} From the prepared source checkout, run the
publication driver using an unused working directory outside the checkout:
{\footnotesize
\begin{verbatim}
PUBLICATION_SCRATCH_DIR="$PWD/../reproduction-work" \
  Rscript inst/examples/run_fixed_time_publication.R
\end{verbatim}
}
The driver recomputes all simulations and TARP analyses using the recorded
settings and seeds, then checks and saves the results. No data downloads are
needed.
\end{enumerate}
\begin{sloppypar}
The archive's \path{TECHNICAL_DETAILS.md} records the full settings, output
locations, and restart instructions. \path{VERIFICATION.md} documents the
previous reproduction checks. File hashes and run metadata may differ across
machines even when numerical results agree.
\end{sloppypar}

\section{TARP data construction details}\label{app:tarp-network}

This appendix records application specification details deferred from
Sections~\ref{sec:tarp-network} and~\ref{sec:tarp-application}.

\subsection{Primary cosponsorship network}
Working ties $A_{ij}$ are constructed from legislative collaboration prior to
the first TARP vote. Cosponsorship data come from \cite{congress_data}. An
undirected working tie is formed from House HR bills introduced on or before
the cutoff. For each bill, let $S(b)$ denote the unique sponsor and $C(b)$ the set of
cosponsors mapped to BioGuide IDs whose signing date is on or before the
cutoff. A cosponsorship edge is recorded only
between the sponsor and each cosponsor; two cosponsors of the same bill are not
joined by that bill. The adjacency matrix is
\[
A_{ij}^{\text{cos}}
=
I\Big\{
\exists\, b:\
\bigl(i=S(b)\ \text{and}\ j\in C(b)\bigr)
\text{ or }
\bigl(j=S(b)\ \text{and}\ i\in C(b)\bigr),\
\mathcal{F}(b)=1
\Big\},
\]
where $\mathcal{F}(b)$ applies the bill filters below. Edges are symmetrized and
the diagonal is set to zero. Four pre-specified filters are applied:
\begin{enumerate}
\item \textbf{Chronology first.} A cosponsor event must have a parseable
signing date on or before September 29, 2008.
\item \textbf{Active cosponsorship only.} Cosponsors that withdrew are not counted.
\item \textbf{Exclude commemorative and naming bills.} Bills are removed if
the combined subject-and-title text matches the heuristic patterns listed below.
\item \textbf{Cosponsor cap.} Bills with ten or more cosponsors are excluded.
Very large cosponsorships are less likely to capture a true working relationship
rather than a show of support for a bill.
\end{enumerate}

\subsubsection{Cosponsorship bill filters}
For each bill, the subject terms, top-level subject term, and official, short,
and popular titles are concatenated in that order, with spaces between entries,
and converted to lowercase. A commemorative/naming bill is flagged if this
combined string matches any of the following Perl regular expressions:
\texttt{commemorat}, \texttt{congressional tributes},
\texttt{monuments and memorials}, \texttt{postal service facilities},
\texttt{names legislation}, \texttt{national .\{0,20\} day},
\texttt{sense of (the )?congress}, \texttt{expressing support},
\texttt{honoring}, \texttt{congratulating},
\texttt{to designate the facility}, \texttt{to designate the building},
\texttt{to designate the postal}, \texttt{to rename},
\texttt{to authorize the president to award}, \texttt{gold medal},
\texttt{congressional gold medal}. The patterns \texttt{honoring},
\texttt{congratulating}, and \texttt{to rename} each end with an additional
literal space.

\subsubsection{Bill and edge counts (110th, cutoff 2008-09-29)}
Among HR bills with at least one cosponsor: $4{,}901$ total; $439$ commemorative/naming
excluded; and $2{,}287$ retained after active-only, non-commemorative,
$<10$-cosponsor, and chronology filters. The chronology audit excludes 763
sponsor--cosponsor signing events dated after the cutoff and no events with
missing or malformed signing dates. The primary network has $5{,}778$ unique
undirected edges among $433$ TARP voters.

\begin{figure}[H]
\centering
\includegraphics[width=0.78\linewidth]{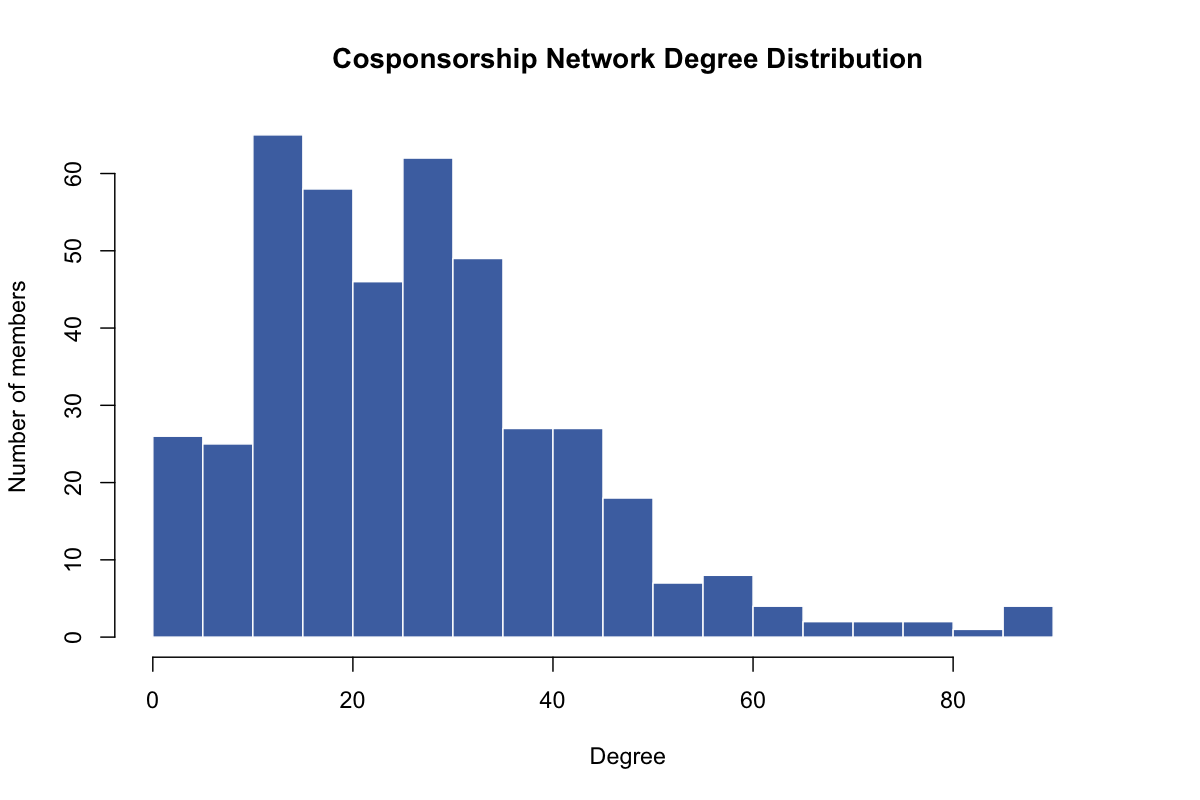}
\caption{Degree distribution for the primary clean cosponsorship network. Ties
are active sponsor--cosponsor pairs signed by the cutoff on
non-commemorative HR bills with fewer than ten cosponsors.}
\label{fig:tarp-degree}
\end{figure}

\section{TARP party-calibration calculations}\label{app:tarp-party-calibration}

All calculations below restrict to switchable dyads, $Y_i^1=0$.
Outcome-risk variation is the range of second-vote Yea risk across ego
party. Connected and selection mix shifts use the party-dyad type
$u\in\{D\!-\!D,D\!-\!R,R\!-\!D,R\!-\!R\}$.

Table~\ref{tab:tarp-calib-outcome} records those ego-party risks:
\[
\widehat R_y^{\text{party}}
=
\frac{\max_{e\in\{D,R\}}\widehat r_y(e)}{\min_{e\in\{D,R\}}\widehat r_y(e)},
\qquad
\widehat r_y(e)
=
\widehat\Pr(Y_i^2=1\mid Y_j^1=y,X_i=e,Y_i^1=0).
\]
Thus
$\widehat R_1^{\text{party}}=0.338/0.219=1.54$ and
$\widehat R_0^{\text{party}}=0.304/0.167=1.82$.

\begin{table}[H]
\centering
\caption{Ego-party outcome-risk calibration among switchable connected
dyads. $\widehat R_y^{\text{party}}$ is the larger ego-party risk over the
smaller, within each alter-vote arm.}
\label{tab:tarp-calib-outcome}
\begin{tabular}{lcccccc}
\hline
 & \multicolumn{3}{c}{Alter Yea ($Y_j^1=1$)}
 & \multicolumn{3}{c}{Alter Nay ($Y_j^1=0$)} \\
Ego party & Yea & $n$ & Risk & Yea & $n$ & Risk \\
\hline
$D$ & 517 & 1{,}529 & 0.338 & 446 & 1{,}465 & 0.304 \\
$R$ & 244 & 1{,}114 & 0.219 & 323 & 1{,}935 & 0.167 \\
\hline
$\widehat R_y^{\text{party}}$ & & & 1.54 & & & 1.82 \\
\hline
\end{tabular}
\end{table}

Table~\ref{tab:tarp-calib-mix} is the connected-population shift. Write
$p_1(u)$, $p_0(u)$, and $p_A(u)$ for the party-dyad shares among Yea-alter,
Nay-alter, and all connected switchable dyads. Then
\[
\widehat R_1^{\text{conn,party}}
=
\max_u\frac{p_1(u)}{p_A(u)},
\qquad
\widehat R_0^{\text{conn,party}}
=
\max_u\frac{p_A(u)}{p_0(u)}.
\]
The maxima are $1.30$ and $1.31$, and both are attained at $R\!-\!D$.

\begin{table}[H]
\centering
\caption{Connected party-dyad mix among switchable dyads. Shares sum to one
within each column. The last two columns are the armwise composition
ratios; the calibration uses the column maxima.}
\label{tab:tarp-calib-mix}
\begin{tabular}{lccccc}
\hline
Party dyad & $p_1$ & $p_A$ & $p_0$ & $p_1/p_A$ & $p_A/p_0$ \\
\hline
$D\!-\!D$ & 0.504 & 0.394 & 0.308 & 1.28 & 1.28 \\
$D\!-\!R$ & 0.074 & 0.101 & 0.123 & 0.73 & 0.83 \\
$R\!-\!D$ & 0.209 & 0.161 & 0.123 & 1.30 & 1.31 \\
$R\!-\!R$ & 0.212 & 0.344 & 0.446 & 0.62 & 0.77 \\
\hline
\end{tabular}
\end{table}

Table~\ref{tab:tarp-calib-sel} is the selection-into-ties shift, comparing
the connected mix $p_A$ with the party-dyad mix $p$ among all ordered
switchable dyads (including non-ties). Then
\[
\widehat R_1^{\text{sel,party}}
=
\max_u\frac{p_A(u)}{p(u)},
\qquad
\widehat R_0^{\text{sel,party}}
=
\max_u\frac{p(u)}{p_A(u)},
\]
equal to $1.75$ at $D\!-\!D$ and $1.98$ at $R\!-\!D$.

\begin{table}[H]
\centering
\caption{Selection-into-ties party-dyad mix among switchable dyads. The
distribution $p$ is the party-dyad share in the full ordered-dyad frame;
$p_A$ is the connected mix from Table~\ref{tab:tarp-calib-mix}.}
\label{tab:tarp-calib-sel}
\begin{tabular}{lcccc}
\hline
Party dyad & $p_A$ & $p$ & $p_A/p$ & $p/p_A$ \\
\hline
$D\!-\!D$ & 0.394 & 0.226 & 1.75 & 0.57 \\
$D\!-\!R$ & 0.101 & 0.191 & 0.53 & 1.88 \\
$R\!-\!D$ & 0.161 & 0.317 & 0.51 & 1.98 \\
$R\!-\!R$ & 0.344 & 0.266 & 1.29 & 0.77 \\
\hline
\end{tabular}
\end{table}

The main text reports only the naturally connected limits
(Table~\ref{tab:tarp-formal-limits}). Table~\ref{tab:tarp-formal-limits-fc}
applies the same party-adjusted contrast and the same $\alpha$ grid to the
forced-contact factor, which includes the selection ratios above.

\begin{table}[H]
\centering
\caption{Party-adjusted one-sided $95\%$ lower limits for the forced-contact
target among switchable egos, primary cosponsorship network. The point bound
is the party-adjusted connected contrast divided by the forced-contact
factor. Mild, moderate, and strong restrictions scale party's measured
strength by $\alpha=1/4$, $1/2$, and $1$. Columns match
Table~\ref{tab:tarp-formal-limits}.}
\label{tab:tarp-formal-limits-fc}
\begin{tabular}{lrrrrr}
\hline
Restriction & $BF$ & Point bound & AS LCL & IE LCL & Ego LCL \\
\hline
Party in $c$, no latent $U$ & $1$ & $1.155$ & $1.039$ & $1.073$ & $1.080$ \\
Mild ($\alpha=1/4$) & $1.073$ & $1.076$ & $0.968$ & $1.000$ & $1.007$ \\
Moderate ($\alpha=1/2$) & $1.236$ & $0.935$ & $0.841$ & $0.868$ & $0.874$ \\
Strong ($\alpha=1$) & $1.724$ & $0.670$ & $0.603$ & $0.623$ & $0.627$ \\
\hline
\end{tabular}
\end{table}

\section{TARP working-network definitions}\label{app:tarp-network-definitions}

The main analysis uses signed, non-commemorative HR cosponsorship with fewer
than ten cosponsors. Table~\ref{tab:tarp-network-definitions} repeats the
party-adjusted switchable-ego calculation, changing only the working network.
Latent-homophily factors are held at the primary-network party calibration:
no latent $U$ ($BF=1$) and the mild restriction ($BF=1.021$).
Cosponsorship variants keep the signed and non-commemorative
filters and vary the per-bill cap over $\{3,5,7,10,15,20\}$. Repeated-collaboration
and early-cosponsor variants use the same primary bill set (signed,
non-commemorative HR bills with fewer than ten cosponsors) and then keep a pair
only if they share at least $k\in\{2,3,5\}$ of those bills, or keep only the
first $N\in\{3,5,10\}$ signers on each such bill. Institutional variants replace
the tie definition by 110th House standing-committee or subcommittee
co-membership.

After party adjustment and with no latent $U$, the actor-sum limit
exceeds one on the primary network, all caps except the sparsest
fewer-than-3 graph, and the early-cosponsor graphs. Under the mild
restriction it remains above one on the primary network, caps of 7, 15, and
20, and the first-3, first-5, and first-10 graphs; the fewer-than-5 endpoint
falls from $1.003$ to $0.982$. All repeated-collaboration and institutional
graphs have limits below one even without latent $U$.

\begin{table}[H]
\centering
\caption{Party-adjusted connected analysis among switchable egos under
alternative working-network definitions. RR is standardized to the
connected party-dyad distribution. AS is the actor-sum one-sided
$95\%$ limit with no latent $U$; the mild column uses the primary-network
party-calibrated factor $BF=1.021$. Deg.\ is mean degree and $n$ is the
number of switchable connected ordered dyads. The starred row is the
primary network.}
\label{tab:tarp-network-definitions}
\begin{tabular}{lrrrrrr}
\hline
Network & Deg. & Isol. & $n$ & RR & AS LCL & Mild \\
\hline
\multicolumn{7}{l}{\emph{Cosponsorship cap}} \\
Fewer than 3 cosponsors & $5.1$ & $23$ & $1{,}152$ & $1.139$ & $0.859$ & $0.841$ \\
Fewer than 5 cosponsors & $10.8$ & $6$ & $2{,}432$ & $1.184$ & $1.003$ & $0.982$ \\
Fewer than 7 cosponsors & $17.2$ & $2$ & $3{,}882$ & $1.225$ & $1.072$ & $1.050$ \\
Fewer than 10 cosponsors$^{*}$ & $26.7$ & $1$ & $6{,}043$ & $1.155$ & $1.039$ & $1.018$ \\
Fewer than 15 cosponsors & $41.3$ & $0$ & $9{,}159$ & $1.172$ & $1.077$ & $1.055$ \\
Fewer than 20 cosponsors & $55.3$ & $0$ & $12{,}178$ & $1.121$ & $1.045$ & $1.024$ \\
\hline
\multicolumn{7}{l}{\emph{Repeated collaboration}} \\
At least 2 shared bills & $5.4$ & $31$ & $1{,}247$ & $0.986$ & $0.770$ & $0.754$ \\
At least 3 shared bills & $1.8$ & $144$ & $412$ & $1.355$ & $0.847$ & $0.830$ \\
At least 5 shared bills & $0.4$ & $322$ & $96$ & $0.867$ & $0.314$ & $0.308$ \\
\hline
\multicolumn{7}{l}{\emph{Early cosponsors}} \\
First 3 cosponsors & $17.3$ & $1$ & $3{,}928$ & $1.184$ & $1.031$ & $1.009$ \\
First 5 cosponsors & $22.6$ & $1$ & $5{,}126$ & $1.170$ & $1.036$ & $1.015$ \\
First 10 cosponsors & $26.7$ & $1$ & $6{,}043$ & $1.155$ & $1.039$ & $1.018$ \\
\hline
\multicolumn{7}{l}{\emph{Institutional working ties}} \\
Standing-committee co-membership & $87.9$ & $11$ & $20{,}735$ & $1.051$ & $0.978$ & $0.958$ \\
Subcommittee co-membership & $56.7$ & $24$ & $13{,}560$ & $1.025$ & $0.949$ & $0.929$ \\
\hline
\end{tabular}
\end{table}

\begin{supplement}
\stitle{Code, data, and computational reproduction materials}
\sdescription{The archive \path{fixed-time-ejs-reproducibility.zip}
contains the exact package source revision, frozen public legislative inputs,
reference simulation and TARP results, environment and provenance records,
and instructions for verification and full reproduction.
Appendix~\ref{app:software-reproduction} describes its use.}
\end{supplement}

\bibliographystyle{imsart-number}
\bibliography{bib}

\end{document}